\def\ta{\mathtt{a}}
\def\tb{\mathtt{b}}
\def\tc{\mathtt{c}}
\def\tx{\mathtt{x}}
\def\ty{\mathtt{y}}
\def\tn{\mathtt{n}}
\def\tzero{\mathtt{0}}
\def\tone{\mathtt{1}}
\def\etal{\emph{et al. }}
\def\nth#1{#1$^{\text{th}}$}
\def\abs#1{|#1|}
\renewcommand{\epsilon}{\varepsilon}
\renewcommand{\phi}{\varphi}
\def\N{\mathbb{N}} 
\def\Z{\mathbb{Z}} 
\def\P{\mathbb{P}} 
\DeclareMathOperator{\Fact}{Fact}
\DeclareMathOperator{\Pref}{Pref}
\newcommand*\colvec[1]{\bigl(\begin{smallmatrix}#1\end{smallmatrix}\bigr)}
\def\neutr{\mathbbm{1}} 
\DeclareMathOperator{\maxpos}{maxpos}
\DeclareMathOperator{\minpos}{minpos}
\DeclareMathOperator{\pos}{pos}
\DeclareMathOperator{\dom}{dom}
\def\pn{prefix normal}
\def\Pn{Prefix normal}
\def\PN{Prefix Normal}
\def\pnity{\pn ity}
\def\Pnity{\Pn ity}
\def\factXf{maximum-$X$-factor function}
\def\prefXf{prefix-$X$-function}
\def\maxposWf{max-position function}
\def\minposWf{min-position function}
\def\factWf{factor-weight function}
\def\prefWf{prefix-weight function}
\def\factorWEqnce{factor-weight equivalence}
\def\factorWEqnt{factor-weight equivalent}
\def\wm{weight measure}
\def\WM{Weight Measure}
\def\sumwm{sum \wm}
\def\prodwm{product \wm}
\def\pwm{prime \wm}
\def\ficiliptak{Fici and Lipt\'{a}k}
\definecolor{cau}{RGB}{156,10,125} 
\spnewtheorem*{appendixlemma}{Lemma}{\bfseries}{}
\newif\ifpaper
\title{Weighted \PN{} Words: Mind the Gap}
\titlerunning{Weighted \PN{} Words: Mind the Gap}
\author{Yannik Eikmeier\inst{1} \and 
Pamela Fleischmann\inst{1}\thanks{Supported by DFG grant 437493335} \and
Mitja Kulczynski\inst{1} \and
Dirk Nowotka\inst{1}}
\institute{Kiel University, Germany, \email{stu204329@mail.uni-kiel.de, 
$\{$fpa,mku,dn$\}$@informatik.uni-kiel.de}}
\authorrunning{Y. Eikmeier, \and P. Fleischmann, \land M. Kulczynski \and D. Nowotka}
\begin{document}

\maketitle

\begin{abstract}
A \pn{} word is a binary word whose prefixes contain at
least as many 1s as any of its factors of the same length.
Introduced by Fici and Lipták in 2011 the notion of \pnity{}
is so far only defined for words over the binary alphabet.
In this work we investigate a generalisation for 
finite words over arbitrary finite alphabets,
namely weighted \pnity.
We prove that weighted \pnity{} is more expressive 
than binary \pnity{}. Furthermore, we investigate
the existence of a weighted \pn{} form since 
weighted \pnity{} comes with several new peculiarities 
that did not already occur in the binary case.
We characterise these issues and finally present a standard 
technique to obtain a generalised \pn{} form for all words over
arbitrary, finite alphabets.
\end{abstract}

\section{Introduction}
Complexity measures of words
are a central topic of investigation when 
dealing with properties of sequences, e.g. 
factor complexity 
\cite{BernatMP07,BalaziMP07,BUCCI200960,SHALLIT201996},
binomial complexity \cite{RigoS15,FreydenbergerGK15,LeroyRS17a,Rigo19}, 
cyclic complexity \cite{CassaigneFSZ14}. 
Characterising the maximum density of a 
particular letter in the set of factors of a given length,
hence considering an abelian setting,
falls into that category (see for instance 
\cite{CassaigneKS17,RichommeSZ11,Blanchet-SadriS16}
 and the references therein).
Such characterisations inevitably prompt
the search for and investigation of normal
forms representing words with equivalent measures.
\Pnity{} is the concept considered in this 
paper and was first introduced by Fici and 
Lipták in 2011 \cite{Fici2011} as a property 
describing the distribution of a designated letter 
within a binary word. 
A word over the binary alphabet $\{\tzero,\tone\}$ is 
\emph{\pn} if its prefixes contain 
at least as many $\tone$s as any of its factors of
the same length. For example, the word $\mathtt{1101001}$ 
is \pn{}. 
Thus, prefixes of \pn{} words give an upper bound for the amount of $\tone$s
any other factor of the word may contain.
For a given binary word $w$ the \emph{maximum-$\tone$s function} 
maps $n$ to the maximum amount of $\tone$s, 
a length-$n$ factor of $w$  can have.
In~\cite{PNF} Burcsi~\etal show that there exists 
exactly one \pn{} word (the \pn{} form) 
in the set of all binary words that have 
an identical maximum-$\tone$s function, e.g. the \pn{} form of
$\mathtt{1001101}$ is $\mathtt{1101001}$.

From an application point of view this
complexity measure is directly connected to the 
{\em Binary Jumbled Pattern Matching Problem (BJPM)}
(see e.g.~\cite{IJPM,AJPM,ABBY} and for the general JPM, see e.g. 
\cite{DBLP:conf/esa/KociumakaRR13}).
The BJPM problem is to determine whether a given finite binary 
word has factors containing given amounts of $\tone$s and $\tzero$s.
In~\cite{Fici2011} \pn{} forms were used to construct 
an index for the BJPM problem in $O(n)$ time 
where $n$ is the given word's length. 
The fastest known algorithm for this problem
has a runtime of $O(n^{1.864})$ (see~\cite{CLUSTER}).
In \cite{balister2019asymptotic} Balister and Gerke showed that 
the number of length-$n$ \pn{} words is $2^{n-\Theta(\log^2(n))}$,
and the class of a given \pn{} word contains at most
$2^{n-O(\sqrt{n\log(n)})}$ elements. 
In more theoretical settings, 
the language of binary \pn{} words has 
also been extended to infinite binary words~\cite{IPNW}.
\Pnity{} has been shown to be connected to other 
fields of research within combinatorics on words,
e.g. Lyndon words~\cite{Fici2011} and
bubble languages~\cite{GeneratingPNW}.
Furthermore, efforts have been made
to recursively construct \pn{} words,
via the notions of extension 
critical words (collapsing words) and \pn{} palindromes
~\cite{CollPNW,GeneratingPNW,DBLP:journals/tcs/CicaleseLR18}. 
The goal therein was
to learn more about the 
number of words with the 
same \pn{} form and the number 
of \pn{} palindromes.
Very recently in \cite{DBLP:journals/corr/abs-2003-03222}
a Gray code for \pn{} words in amortized polylogarithmic
time per word was generated.
Four sequences related to \pn{} words can be found in the
OEIS \cite{OEIS}:
A194850 (number of \pn{} words of length $n$),
A238109 (list of \pn{} words over the binary alphabet),
A238110 (maximum number of binary words of 
length $n$ having the same \pn{} form), and 
A308465 (number of \pn{} palindromes of length $n$).

\textbf{Our contribution.}
In this work, we investigate a generalisation of \pnity{}
for finite words over arbitrary finite alphabets.
We define a \emph{\wm}, which is a morphic function assigning a weight 
(an element from an arbitrary but a priori chosen monoid)
to every letter of an arbitrary finite alphabet.
Based on those weights we can again compare
factors and prefixes of words over this alphabet w.r.t. their weight.
A word is {\em \pn{} w.r.t. a \wm{}} if no factor has a higher weight than that 
of the prefix of the same length.
Note here, for some \wm s not every word has a unique \pn{} form. 
We prove basic properties of \wm s and weighted \pnity{}
and give a characterisation of \wm s for which every word has a \pn{} form.
Finally, we define a standard \wm{} which only depends on the alphabetic
order of the letters and a unique \emph{weighted \pn{} form} that is not 
depending on the choice of a \wm.


\textbf{Structure of the paper.}
In Section~\ref{sec:prelims}, we define the basic terminology.
Following that, in Section~\ref{sec:wpn}, we prove that weighted \pnity{}
is a proper generalisation of the binary case and present our results on 
the existence of a weighted \pn{} form.
Finally, in Section~\ref{sec:gwm}, we present our main theorem on the standard 
\wm{} as well as the weighted \pn{} form.

\ifpaper 
Due to space restrictions, the proofs can be found in the 
Appendix~\ref{sec:proof}.
Moreover, some further insights about weighted \pnity{}, e.g. a second but less 
powerful approach and some basic properties related to binary \pnity{}, are 
also given in Appendix~\ref{sec:further}.
A full version of this work can be found at \cite{wpnwarxiv}.
\else    
Some further insights about weighted \pnity{}, e.g. a second but less 
powerful approach and some basic properties related to binary \pnity{}, 
are also given in the Appendix~\ref{sec:further}.
\fi

\section{Preliminaries}
\label{sec:prelims}
Let $\N$ denote the positive natural numbers $\{1, 2, 3, \dots\}$, 
$\Z$ the integer numbers, and $\P\subset\N$ the set of prime numbers.
Set $\N_0 := \N \cup \{0\}$. For $i,j\in\N$, we define the interval
$[i,j] := \{n\in\N\mid i\leq 
n\leq j \}$ and
for $n\in\N$, we define $[n] := [1,n]$ and $[n]_0 := [0,n]$.
For two monoids $A$ and $B$ with operations $\ast$ and $\circ$ respectively,
a function $\mu : A \to B$ is a \emph{morphism} if 
$\mu(x*y) = \mu(x) \circ \mu(y)$ holds for all $x,y\in A$.
Notice, if the domain $A$ is a free monoid
over some set $S$, a morphism from $A\to B$ 
is sufficiently defined by giving a mapping from $S$ to $B$.

An \emph{alphabet} $\Sigma$ is a finite set of letters. 
A \emph{word} is a finite sequence of letters from a given alphabet.
Let $\Sigma^*$ denote the set of all finite words over $\Sigma$, 
i.e. the free monoid over $\Sigma$.
Let $\epsilon$ denote the \emph{empty word} 
and set $\Sigma^+ := \Sigma^* \backslash \{\epsilon\}$ 
as the free semigroup over $\Sigma$.
We denote the length of a word $w\in\Sigma^*$ by $|w|$, 
i.e. the number of letters in $w$.
Thus $|\epsilon| = 0$ holds.
Let $w$ be a word of length $n\in\N$.
Let $w[i]$ denote the \nth{$i$} letter of 
$w$ for $i\in[|w|]$, 
and set $w[i\dots j] = w[i]\cdots w[j]$ for $i,j\in[|w|]$ and $i\leq j$.
Let $w[i\dots j]=\epsilon$ if $i>j$.
The number of occurrences of a letter $\ta \in \Sigma$ 
in $w$ is denoted by 
$|w|_\ta = \abs{\{i\in[|w|] \mid w[i]=\ta\}}$.
We say $x \in \Sigma^*$ 
is a \emph{factor} of $w$ 
if there exist $u,v\in\Sigma^*$ with $w = uxv$.
In this case $u$ is called a \emph{prefix} of $w$.
We denote the set of $w$'s factors (resp. prefixes) by 
$\Fact(w)$ (resp. $\Pref(w)$) and $\Fact_i(w)$ ($\Pref_i(w)$ resp.)
denotes the set of factors (prefixes) of length $i\in[|w|]$.
Given a total order $<$ over $\Sigma$ let $<_{lex}$ denote 
the extension of $<$ to a lexicographic order over $\Sigma^*$.
Fixing a strictly totally ordered alphabet $\Sigma = \{\ta_1,\ta_2,\dots,
\ta_n\}$ with $\ta_i<\ta_j$ for $1 \leq i<j \leq n$,
the \emph{Parikh vector} of a word is defined by
$p: \Sigma^* \to \N^n : w \mapsto 
(|w|_{\ta_1},|w|_{\ta_2},\dots,|w|_{\ta_n})$.
For a function $f$ set $f(A)=\{f(a) \mid a \in A\}$ for $A\subseteq\dom(f)$.

Before we define the \wm s and weighted \pnity{}
we recall the definition for binary \pnity{}
as introduced by \ficiliptak{} in~\cite{Fici2011}.

\begin{definition}(\cite{Fici2011})
    \label{def:bpn}
    Given $w\in\{\tzero, \tone\}^*$ the \emph{maximum-ones function} $f_w$ and 
    the \emph{prefix-ones function} $p_w$ are respectively defined by
    $f_w :\;[|w|]_0\to\N_0,\;i\mapsto\max(|\Fact_i(w)|_\tone)$
     and
    $p_w :\;[|w|]_0 \to \N_0,\;i \mapsto |\Pref_i(w)|_\tone$.
    The word $w$ is called \emph{\pn{}}
    if $f_w=p_w$ holds.
\end{definition}

Our generalisation of binary \pnity{} is based on so called {\em weight 
measures}, i.e. we apply weights represented by elements from a strictly 
totally ordered monoid $A$ to every letter of the alphabet. In the 
following we denote the neutral element of an arbitrary monoid $A$ by 
$\neutr_A$, its operation by $\circ_A$, and its total order by $<_A$
(in the case of existence).

\begin{definition}
    \label{def:wm}
    Let $A$ be a totally ordered monoid.
    A morphism $\mu : \Sigma^* \to A$ is a 
    \emph{\wm} over the alphabet $\Sigma$ 
    w.r.t. $A$ if $\mu(vw) = \mu(wv)$ 
    and $\mu(w) <_A \mu(wv)$ hold for all words
    $w\in\Sigma^*$ and $v\in\Sigma^+$.
    We refer to the second property as the {\em increasing property}.
    We say the weights of the letters of $\Sigma$ 
    are the \emph{base weights} of $\mu$, so $\mu(\Sigma)$ is the
    set of all \emph{base weights}.
\end{definition}

\begin{remark}
    Notice that if there exists a \wm{} $\mu$ 
    w.r.t. the monoid $A$ then
    $|A|$ is infinite, $\circ_A$ is commutative on $\mu(\Sigma^*)$,
    and $\mu(\epsilon) = \neutr_A$ holds.
    Moreover, the increasing property of \wm s ensures
    that only the neutral element $\varepsilon$ of $\Sigma^{\ast}$
    is mapped to the neutral element $\neutr_A$.
    Hence, we will see that our factor- and \prefWf s
    are strictly monotonically increasing
    in contrast to the functions defined in \cite{Fici2011}.
    However, if we allow letters from $\Sigma$ to  be also assigned
    the neutral weight $\neutr_A$,
    we get the known results for binary alphabets.
\end{remark}

\begin{remark}  
    Notice, that a \wm{} $\mu$ can be defined for any alphabet $\Sigma$
    in two steps: choose some infinite commutative monoid 
    with a total and strict order and  assign a base weight that is greater 
    than the neutral element to 
    each letter in $\Sigma$. Since $\mu$ is a morphism, the weight of a word 
    $w\in\Sigma^{\ast}$ is well defined.
\end{remark}

In the following definition we introduce seven (for us the most intuitive) special types of \wm s.

\begin{definition}
    A weight measure $\mu$ over the alphabet $\Sigma$ w.r.t. the monoid $A$ is
    \begin{description}
            \setlength{\itemsep}{-6pt}%
            \setlength{\parskip}{-6pt}%
        \item{$\diamond$ \emph{injective}} if $\mu$ is injective on $\Sigma$,\\
        \item{$\diamond$ \emph{alphabetically ordered}} if $\mu(\ta) \leq_A \mu(\tb)$ holds for all 
    $\ta,\tb\in\Sigma$ 
with $\ta \leq \tb$ for a total order $\leq$ on $\Sigma$,\\
    \item{$\diamond$  \emph{binary}}
        if $|\mu(\Sigma)|=2$ holds, 
        and \emph{non-binary} if $|\mu(\Sigma)|>2$,\\
    \item{$\diamond$ \emph{natural}}
        if $A$ is $\N_0$ or $\N$ with $<_A$ 
        being the usual order $<$ on integers,\\
    \item{$\diamond$ a \emph{\sumwm{}}}
        if it is natural and the operation on $A$ is $+$,\\
    \item{$\diamond$ a \emph{\prodwm{}}} if it is natural and the operation on $A$ is $*$,\\
    \item{$\diamond$ \emph{prime}} if it is a \prodwm{}
        and $\mu(\Sigma) \subseteq \P$ holds.
    \end{description}

\end{definition}

Consider, for instance, the alphabet $\Sigma=\{\ta,\tb,\tc\}$.
The \wm{} $\mu$ over $\Sigma$ 
with $\mu(\ta) = 1$,
$\mu(\tb) = 2$, and $\mu(\tc) = 3$
is \emph{non-binary}, \emph{natural}, and 
with the monoid $(\N_0,+)$ it is a \emph{\sumwm{}}. 
It cannot be a \emph{\prodwm{}}
with $(\N,*)$ since then $\mu(\ta)=1$ would 
violate the increasing property. However, the \wm{} $\nu$ over $\Sigma$ w.r.t. $(\N,*)$ 
with $\nu(\ta) = 2$,
$\nu(\tb) = 3$, and $\nu(\tc) = 5$
is not only a \emph{\prodwm{}},
but also a \emph{\pwm{}} (for further insights regarding \pwm s see 
Subsection~\ref{subsec:pwms}).

\begin{remark}    
    For the binary alphabet $\Sigma=\{\tzero,\tone\}$
    a \sumwm{} $\mu$ with $\mu(w)=|w|_\tone$ 
    for all $w\in\Sigma^*$ cannot exist since we would have 
    $\mu(\tzero) = 0 = \mu(\epsilon)$
    which is a contradiction to the increasing property. 
    Later on we are going to circumvent this problem
    by setting $\mu(w)=|w|_1+|w|$
    for all $w\in\Sigma^*$ when 
    implementing binary \pnity{}
    via the usage of weight measures.
    Alternatively, we may relax the 
    increasing property and allow $\mu(0)=0$;
    this results exactly in the same properties
    as discussed in \cite{Fici2011}.
\end{remark}

We now define the analogues to the maximum-ones and prefix-ones functions.
    
\begin{definition}
    \label{def:Wfs}
    Let $w\in\Sigma^*$ and $\mu$ be a \wm{} 
    over the alphabet $\Sigma$ w.r.t. the monoid $A$.
    Define the \emph{\factWf} $f_{w,\mu}$
    and \emph{\prefWf} $p_{w,\mu}$ respectively by
    $
    f_{w,\mu} : \; [|w|]_0 \to A,\;
    i\mapsto \max(\mu(\Fact_i(w)))
    $ and 
    $
    p_{w,\mu} : \; [|w|]_0 \to A,\;
    i\mapsto \mu(\Pref_i(w))
    $.
\end{definition}

For instance, let $\mu$ be a \sumwm{} with the base weights
$\mu(\ta) = 1$, $\mu(\tn) = 2$, $\mu(\tb) = 3$
for the alphabet $\Sigma=\{\ta,\tn,\tb\}$.

Now consider the words $\mathtt{banana}$ and $\mathtt{nanaba}$.
Table~\ref{tab:banana} shows the mappings of

\begin{wraptable}[8]{L}{.45\linewidth}
    \vspace*{-0.5cm}
    \begin{tabular}{c|cccccc}
        $i$ &1&2&3&4&5&6 \\\hline            
        $p_{\mathtt{nanaba},\mu}(i)$&2&3&5&6&9&10 \\
        $f_{\mathtt{nanaba},\mu}(i)$&3&4&6&7&9&10 \\\hline
        $p_{\mathtt{banana},\mu}(i)$,
        $f_{\mathtt{banana},\mu}(i)$&3&4&6&7&9&10
    \end{tabular}

    \caption{Comparing $\mathtt{banana}$'s and $\mathtt{nanaba}$'s
        prefix- and factor-weights.
        \label{tab:banana}}
\end{wraptable}

\noindent their
prefix- and \factWf s.
The \factWf{} of $\mathtt{nanaba}$ is realised by the factors
$\mathtt{b}$, $\mathtt{ab}$, $\mathtt{nab}$, $\mathtt{anab}$,
$\mathtt{nanab}$, $\mathtt{nanaba}$.

Finally, we  define a generalised approach for \pnity{},
namely the {\em weighted \pnity{}} for a given weight measure $\mu$.
As in the binary case, for a \pn{} word the \factWf{} 
and the \prefWf{} have to be identical.

\begin{definition}
\label{def:gpn}
    Let $w\in\Sigma^*$ and let $\mu$ be a \wm{} over $\Sigma$.
    We say $w$ is \emph{$\mu$-\pn{}} (or weighted \pn{} w.r.t. $\mu$)
    if $p_{w,\mu} = f_{w,\mu}$ holds.
\end{definition}

In the example above we see $p_{\mathtt{banana},\mu}=f_{\mathtt{banana},\mu}$ 
holds and hence $\mathtt{banana}$ is prefix normal w.r.t. $\mu$. On the other 
hand we have $p_{\mathtt{nanaba},\mu}(1)=2<3=f_{\mathtt{nanaba},\mu}(1)$ and 
therefore $\mathtt{nanaba}$ is not prefix normal w.r.t. $\mu$.

\section{Weighted \PN{} Words and Weighted \PN{} Form}
\label{sec:wpn}
In this section we show that \emph{weighted \pnity} is a proper generalisation 
of binary \pnity{} and further investigate the weighted \pn{} form. 
By examining special properties of \wm s, we intend to guide the reader from 
the general approach to a characterisation of special \wm s for which 
every word has a weighted \pn{} equivalent, namely \emph{injective and gapfree 
\wm s}.
(Some useful basic properties  which are direct generalisations of the 
binary case can be found in Subsection~\ref{subsec:wposf}.)
Before we define the analogue to the prefix-equivalence for factor weights,
we show that weighted \pnity{} is more general and more expressive than binary 
\pnity{}, i.e. every statement on binary \pnity{} can be expressed by
weighted \pnity{} but not vice versa.

%
%

%

\begin{proposition}
    \label{the:weight2bin}
    Binary \pnity{} is expressible by
    weighted \pnity{}, i.e. there exists a \wm{} $\mu$ such that 
    $\mu$-\pnity{} is equivalent to binary \pnity{}.
\end{proposition}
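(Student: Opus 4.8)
The plan is to exhibit a concrete \wm{} on the binary alphabet that reproduces Fici and Lipt\'ak's maximum-ones and prefix-ones functions up to a harmless additive shift. The naive attempt $\mu(w) = |w|_\tone$ fails the increasing property of Definition~\ref{def:wm}, since it would assign the neutral weight to $\tzero$ (indeed $\mu(\tzero) = 0 = \mu(\epsilon)$); the fix, already announced in the remark above, is to add the length. So I would take $A = (\N_0, +)$ with the usual order and define the \sumwm{} $\mu : \{\tzero,\tone\}^* \to \N_0$ by the base weights $\mu(\tzero) = 1$ and $\mu(\tone) = 2$.

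First I would check that $\mu$ is indeed a \wm. Since $+$ is commutative, $\mu(vw) = \mu(v) + \mu(w) = \mu(wv)$ for all words $v,w$; and for every $v \in \{\tzero,\tone\}^+$ we have $\mu(v) \geq 1 > 0 = \neutr_A$, hence $\mu(w) <_A \mu(w) + \mu(v) = \mu(wv)$, which is the increasing property. Next comes the key computation: for every word $u \in \{\tzero,\tone\}^*$,
\[
\mu(u) \;=\; |u|_\tzero + 2\,|u|_\tone \;=\; \bigl(|u| - |u|_\tone\bigr) + 2\,|u|_\tone \;=\; |u| + |u|_\tone .
\]
Now fix $w \in \{\tzero,\tone\}^*$ and $i \in [|w|]_0$. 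Every $u \in \Fact_i(w)$ and every $u \in \Pref_i(w)$ has length $i$, so $\mu(u) = i + |u|_\tone$ in both cases. Since adding the constant $i$ commutes with $\max$, this yields
\[
f_{w,\mu}(i) = \max\{\mu(u) \mid u \in \Fact_i(w)\} = i + \max\{|u|_\tone \mid u \in \Fact_i(w)\} = i + f_w(i),
\]
and likewise $p_{w,\mu}(i) = \mu(w[1\dots i]) = i + |w[1\dots i]|_\tone = i + p_w(i)$, where $f_w, p_w$ are as in Definition~\ref{def:bpn}.

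Finally I would conclude: $w$ is $\mu$-\pn{} $\iff$ $p_{w,\mu} = f_{w,\mu}$ $\iff$ $i + p_w(i) = i + f_w(i)$ for all $i \in [|w|]_0$ $\iff$ $p_w = f_w$ $\iff$ $w$ is binary \pn. Hence $\mu$-\pnity{} and binary \pnity{} coincide, as claimed. There is no serious obstacle in this argument; the only subtlety worth flagging explicitly is why the additive shift by $|u|$ is needed (it is exactly what rescues the increasing property) together with the observation that $\max$ is translation-invariant, so the shift cancels when comparing the two functions and therefore does not affect \pnity{}.
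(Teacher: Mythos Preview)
Your proof is correct and takes essentially the same approach as the paper: you choose the \sumwm{} with $\mu(\tzero)=1$, $\mu(\tone)=2$, observe $\mu(u)=|u|+|u|_\tone$, and deduce $f_{w,\mu}(i)=i+f_w(i)$ and $p_{w,\mu}(i)=i+p_w(i)$. If anything, you are more thorough than the paper, since you explicitly verify the \wm{} axioms and spell out why the constant shift by $i$ commutes with $\max$.
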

\ifpaper 
\else    
\begin{proof}
    We construct a \sumwm{} $\mu$ over 
    the binary alphabet $\Sigma=\{\tzero,\tone\}$.
    Let $\mu(\tone)=2$ and $\mu(\tzero)=1$.
    Then $|w|_\tone + |w| = \mu(w)$ holds
    for any binary word $w\in\Sigma^*$.
    We have
    $f_w(i) + i = \max(|\Fact_i(w)|_\tone) + i
     = \max(\mu(\Fact_i(w))) = f_{w,\mu}(i)$
    and $p_w(i) + i = p_{w,\mu}(i)$ 
    for all $i \in [|w|]$.
    Therefore, $w$ is $\mu$-\pn{} 
    if and only if it is \pn{}.
	\qed
\end{proof}
\fi

With the binary \sumwm{} $\mu$ over $\Sigma=\{\tzero,\tone\}$
where $\mu(\tone)=2$ and $\mu(\tzero)=1$, we can transform any statement on binary 
\pnity{} into an analogue in the weighted setting.
For example, for $w=\mathtt{11001101}$ we have $f_w(4)=3$ and $p_w(4)=2$
(so $w$ is not \pn{}) and in the weighted setting  we have 
$f_{w,\mu}(4)=7=f_w(4)+4$ and $p_{w,\mu}(4)=6=p_w(4)+4$; or in general
$f_w(i)=f_{w,\mu}(i)-i$ holds for all $w\in\Sigma^*$ and $i\in[|w|]$.
Therefore, $w$ is $\mu$-\pn{} if and only if it is \pn{}.

\begin{definition}          
    \label{def:factorWEq}
    Let $\mu$ be a \wm{} over $\Sigma$.
    Two words $w, w' \in \Sigma^*$ 
    are \emph{\factorWEqnt{}} w.r.t. $\mu$ (denoted by $w \sim_\mu w'$)  if
    $f_{w,\mu} = f_{w',\mu}$ holds. We denote the equivalence classes by 
    $[w]_{\sim_\mu} := \{w'\in\Sigma^* \mid w \sim_\mu w'\}$.
\end{definition}

In the following we highlight three peculiarities about the \factorWEqnce{}
that do not occur in the binary case:
the existence of \factorWEqnt{} words with \emph{different Parikh vectors},
the existence of \emph{multiple} words that are weighted \pn{} and \factorWEqnt{},
and the \emph{absence} of a \factorWEqnt{} word that is weighted \pn{}.
The words $\mathtt{banana}$ and $\mathtt{nanaba}$
over $\Sigma=\{\ta,\tn,\tb\}$ with the weight measure $\mu(\ta)=1$, $\mu(\tn)=2$, and $\mu(\tb)=3$
are factor-weight equivalent.
The complete equivalence class is given by
$[\mathtt{banana}]_{\sim_\mu} =
\{ \mathtt{ananab},\mathtt{anaban},\mathtt{abanan}, 
\mathtt{nanaba},\mathtt{nabana},\mathtt{banana}\}$.
Notice that all words in the class have 
the same Parikh vector
but only $\mathtt{banana}$
is $\mu$-\pn{}.
If we were to add $\tc$ to $\Sigma$
and expand $\mu$ by $\mu(\tc)=\mu(\tn)=2$ then
$[\mathtt{banana}]_{\sim_\mu}$ contains all
words previously in it but also those where
some $\tn$s are substituted by $\tc$.
So $[\mathtt{banana}]_{\sim_\mu}$ contains four
$\mu$-\pn{} words, namely $\mathtt{banana}$,
$\mathtt{bacana}$, $\mathtt{banaca}$, and $\mathtt{bacaca}$.
Lastly, consider the \sumwm{} $\nu$ over the alphabet
$\Sigma=\{\ta,\tn,\tx\}$ with the base weights
$\nu(\ta) = 1$, $\nu(\tn) = 2$, $\nu(\tx) = 4$.
Now $[\mathtt{xaxn}]_{\sim_\nu}$ only contains $\mathtt{xaxn}$
and its reverse $\mathtt{nxax}$.
Interestingly none of the two words are $\nu$-\pn{},
witnessed by
$f_{\mathtt{xaxn},\nu}=f_{\mathtt{nxax},\nu}=(4,6,9,11)$, $p_{\mathtt{xaxn},\nu}=(4,5,9,11)$, and $p_{\mathtt{nxax},\nu}=(2,6,7,11)$
(the functions are written as sequences for brevity).
In order for a weighted \pn{} word to exist in the class,
a letter with weight $f_{\mathtt{xaxn},\nu}(3)-
f_{\mathtt{xaxn},\nu}(2)=9-6=3$ is missing.
For example with such a letter $\tb$ in $\Sigma$ with
$\nu(\tb) = 3$ the word $\mathtt{xnbn}$ is 
$\nu$-\pn{} and in $[\mathtt{xaxn}]_{\sim_\nu}$.
These examples show that \factorWEqnce{} classes
can contain words with different Parikh vectors,
multiple \pn{} words, and even no \pn{} words at all.
We now investigate the question which \wm s cause such
peculiar equivalence classes and characterise the equivalence classes 
that contain a single weighted \pn{} word, a \emph{normal form},
as it always exists for the binary case (see \cite{Fici2011}). 


\begin{definition}
    \label{def:Pwmu}
    For $w \in \Sigma^*$ and a \wm{} $\mu$ over $\Sigma$ 
    we define the $\mu$-\pn{} subset 
    of the \factorWEqnce{} class of $w$ by
    $ \mathcal{P}_\mu(w) := 
    \{v\in [w]_{\sim_\mu} \mid p_{v,\mu} = f_{v,\mu}\}$.
\end{definition}
    
In the example above, multiple \pn{} words 
in a single class are a direct result of 
ambiguous base weights, i.e.
non-injective \wm: all letters with the same 
weights are interchangeable in any word 
with no effect on the weight of that word;
thus there exist multiple \pn{} words for such a word.
By choosing an injective \wm{} we can avoid this behaviour.
However, the problematic case where some equivalence 
classes contain no \pn{} words at all, still remains.
We give a characterisation of special, so called 
\emph{gapfree}, \wm s and show that they 
guarantee the existence of a \pn{} word in every 
equivalence class of the \factorWEqnce{}.
Before we prove the just stated claims,
we formally define the previous observations of \emph{gaps}.

\begin{definition}
    A \wm{} $\mu$ 
    over the alphabet $\Sigma$ w.r.t. the monoid $A$
    is \emph{gapfree},
    if for all words $w\in\Sigma^*$ and all $i\in[|w|]$ 
    there exists an $\ta\in\Sigma$ such that
    $f_{w,\mu}(i) = f_{w,\mu}(i-1) \circ_A \mu(\ta)$ holds.
    Otherwise, if for any word $w\in\Sigma^*$ and an $i\in [|w|]$ 
    there exists no $\ta\in\Sigma$
    such that $f_{w,\mu}(i)=f_{w,\mu}(i-1) \circ_A \mu(\ta)$ holds
    we say $\mu$ is \emph{gapful} and has a \emph{gap} over 
    the word $w$ at the index $i$.
    \if false
    $$
    \mu \text{ gapfree} \quad :\Leftrightarrow \quad
    \forall w\in\Sigma^*\,\forall i\in[|w|]\,\exists \ta\in\Sigma:\, 
    f_{w,\mu}(i)=f_{w,\mu}(i-1) \circ \mu(\ta)
    $$
    \fi
\end{definition}

Consider for example the \sumwm{}
over $\Sigma =\{\ta,\tb,\tc\}$ with
$\mu(\ta) = 2$, 
$\mu(\tb) = 4$, and
$\mu(\tc) = 6$.
We show that $\mu$ is gapfree by proving
the existence of letters in $\Sigma$ with  
weight $x_i = f_{w,\mu}(i) - f_{w,\mu}(i-1) \in \N$ for all
$w\in\Sigma^*$ and $i\in[|w|]$. Since the \factWf{} is defined as a
maximum, we get $x_i\leq\mu(\tc)=6$.
On the other hand $x_i\geq \mu(\ta)=2$
because the \factWf{} is strictly increasing.
Since all the base weights $\mu(\Sigma)=\{2,4,6\}$ are even,
the same is true for $f_{w,\mu}(i)$ and $f_{w,\mu}(i-1)$.
Thus, $x_i$ has to be even as well.
This implies $x_i\in\{2,4,6\}=\mu(\Sigma)$. Hence, 
there exist letters in $\Sigma$ with the appropriate weight to
fill every possible gap, i.e. $\mu$ is gapfree.
As a counter example, the \sumwm{} $\nu$ over $\Sigma$ with 
$\nu(\ta) = 1$, $\nu(\tb) = 3$, and $\nu(\tc) = 4$
is gapful.
Consider the word $w=\mathtt{bcac}$
then $\nu$ has a gap over $w$ at the index $3$
since $f_{w,\nu}(3)=9$ (witnessed by the factor $\tc\ta\tc$)
and $f_{w,\nu}(2)=7$ (witnessed by the factor $\tb\tc$) and there is no letter 
with weight $2$.

Coming back to the original question of multiple prefix normal words, the 
following theorem characterises exactly when an equivalence class contains no, 
exactly one, or more than one weighted \pn{} word.

\begin{theorem}
    \label{the:P}
    Let $\mu$ be a \wm{} over $\Sigma$. Then \\
    - there exists $w\in\Sigma^{\ast}$ such that 
        $|\mathcal{P}_\mu(w)| = 0$ \mbox{ iff }
        $\mu$ is gapful, \\
    - there exists $w\in\Sigma^{\ast}$ such that 
    $|\mathcal{P}_\mu(w)| > 1$  iff $\mu$ is not injective, and\\
    - for all $w\in\Sigma^{\ast}$ we have $|\mathcal{P}_\mu(w)| = 1$ iff
    $\mu \text{ gapfree and injective}$.
\end{theorem}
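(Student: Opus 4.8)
The plan is to prove the three biconditionals mostly by chaining the first two together with the observation that, for any single word $w$, the class $[w]_{\sim_\mu}$ is finite, so $|\mathcal P_\mu(w)|$ is a nonnegative integer; the third bullet is then almost a formality given the first two, \emph{plus} one extra existence fact: if $\mu$ is gapfree and injective then $|\mathcal P_\mu(w)|\geq 1$ for every $w$. So the real work splits into: (i) gapful $\Rightarrow$ some class has no $\mu$-\pn{} word, and conversely gapfree $\Rightarrow$ every class has at least one; (ii) non-injective $\Rightarrow$ some class has $\geq 2$ $\mu$-\pn{} words, and conversely injective $\Rightarrow$ every class has $\leq 1$.

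For (ii) the forward direction is the easy colour-swap argument already sketched in the text: if $\mu(\ta)=\mu(\tb)$ with $\ta\neq\tb$, pick any word $w$ containing $\ta$ that is $\mu$-\pn{} (e.g. a suitable power, or build one by the sorting construction below); swapping one occurrence of $\ta$ for $\tb$ changes neither $f_{\cdot,\mu}$ nor $p_{\cdot,\mu}$ (morphism plus equal base weights), so it produces a second, distinct $\mu$-\pn{} word in the same class. For the converse, suppose $\mu$ is injective and $v,v'\in\mathcal P_\mu(w)$. Then $p_{v,\mu}=f_{v,\mu}=f_{v',\mu}=p_{v',\mu}$, so $v$ and $v'$ have the same \prefWf{}; since $p_{\cdot,\mu}(i)-p_{\cdot,\mu}(i-1)=\mu(x_i)$ where $x_i$ is the $i$-th letter (here we use commutativity of $\circ_A$ on $\mu(\Sigma^*)$ so that $\mu$ of a prefix is determined by its Parikh data, and the increasing property so that the value strictly grows), injectivity of $\mu$ on $\Sigma$ forces the $i$-th letters of $v$ and $v'$ to coincide for all $i$, hence $v=v'$.

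For (i), the converse (gapfree $\Rightarrow$ every class contains a $\mu$-\pn{} word, and by injectivity exactly one) is where I expect the main obstacle, and it is essentially a ``sorting'' construction. Given $w$, define the sequence $a_i:=f_{w,\mu}(i)\circ_A (f_{w,\mu}(i-1))^{-1}$ of ``increments'' for $i\in[|w|]$ — these live in $\mu(\Sigma^*)$ since $\circ_A$ is commutative there, and gapfreeness says each $a_i=\mu(\ta_i)$ for some letter $\ta_i\in\Sigma$ (injectivity makes the choice unique). Set $\hat w:=\ta_1\ta_2\cdots\ta_{|w|}$. One checks $p_{\hat w,\mu}=f_{w,\mu}$ by telescoping. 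The crux is then showing (a) $\hat w\sim_\mu w$, i.e. $f_{\hat w,\mu}=f_{w,\mu}$, and (b) $\hat w$ is $\mu$-\pn{}, i.e. $f_{\hat w,\mu}=p_{\hat w,\mu}$; both follow once we establish that the increments $a_1,a_2,\dots$ are weakly decreasing in the order $<_A$ — morally $\hat w$ is $w$ with its letters sorted in decreasing weight — which in turn uses subadditivity of $f_{w,\mu}$ ($f_{w,\mu}(i+j)\leq_A f_{w,\mu}(i)\circ_A f_{w,\mu}(j)$, proved by splitting an optimal length-$(i+j)$ factor) exactly as in the binary case of \ficiliptak{}. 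Given monotonicity of the increments, any factor of $\hat w$ of length $i$ has weight at most $a_1\circ_A\cdots\circ_A a_i=p_{\hat w,\mu}(i)$, giving $f_{\hat w,\mu}\leq_A p_{\hat w,\mu}$; the reverse inequality is immediate since a prefix is a factor, so $\hat w$ is $\mu$-\pn{} and $f_{\hat w,\mu}=p_{\hat w,\mu}=f_{w,\mu}$, proving $\hat w\in\mathcal P_\mu(w)$. For the forward direction of (i): if $\mu$ is gapful, witnessed by a word $w$ and index $i$ with $f_{w,\mu}(i)\circ_A(f_{w,\mu}(i-1))^{-1}\notin\mu(\Sigma)$, then any $v\sim_\mu w$ would need $p_{v,\mu}(i)\circ_A(p_{v,\mu}(i-1))^{-1}=\mu(v[i])\in\mu(\Sigma)$, so $v$ cannot satisfy $p_{v,\mu}=f_{v,\mu}$; hence $\mathcal P_\mu(w)=\emptyset$. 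Finally, the third bullet: gapfree and injective gives, by the converses of (i) and (ii), $1\leq|\mathcal P_\mu(w)|\leq 1$ for all $w$; and if $\mu$ fails to be gapfree or injective, the forward directions produce a $w$ with $|\mathcal P_\mu(w)|\in\{0\}\cup\{n:n>1\}$, so the ``for all $w$, $|\mathcal P_\mu(w)|=1$'' statement fails — which closes the equivalence.
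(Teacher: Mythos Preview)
Your overall decomposition and your arguments for the forward direction of the first bullet and for both directions of the second bullet are essentially the paper's (though for ``non-injective $\Rightarrow$ some class has $\geq 2$'' the paper simply takes the single-letter words $\ta$ and $\tb$, which are trivially $\mu$-\pn{} and lie in the same class --- no need to build a longer word or invoke any construction).

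The genuine gap is in your proof that $\hat w\in\mathcal{P}_\mu(w)$. Your key claim --- that the increments $a_i$ of $f_{w,\mu}$ are weakly decreasing, so that ``morally $\hat w$ is $w$ with its letters sorted in decreasing weight'' --- is \emph{false}. Take the \sumwm{} over $\{\tzero,\tone\}$ with $\mu(\tzero)=1$, $\mu(\tone)=2$, and $w=\tzero\tone\tzero\tone\tzero$: then $f_{w,\mu}=(2,3,5,6,7)$ with increments $2,1,2,1,1$, not monotone; accordingly $\hat w=\tone\tzero\tone\tzero\tzero$ is \emph{not} the sorted word $\tone\tone\tzero\tzero\tzero$. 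The sorting intuition also fails for a second reason already flagged in the paper: the $\mu$-\pn{} form of $\mathtt{xaxn}$ is $\mathtt{xnbn}$, which has a different Parikh vector, so in general $\hat w$ is not a rearrangement of $w$ at all. Subadditivity (Lemma~\ref{lem:f1}) is indeed the right tool, but it does not give concavity of $f_{w,\mu}$; what it gives directly is that for any factor $u=\hat w[j{+}1\dots j{+}i]$ one has $f_{w,\mu}(j)\circ_A\mu(u)=p_{\hat w,\mu}(j{+}i)=f_{w,\mu}(j{+}i)\leq_A f_{w,\mu}(j)\circ_A f_{w,\mu}(i)$, whence (using the compatibility of $<_A$ with $\circ_A$) $\mu(u)\leq_A f_{w,\mu}(i)=p_{\hat w,\mu}(i)$. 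That is the missing step; the paper's own proof is terse here (it asserts $p_{w',\mu}=f_{w',\mu}$ ``by construction''), but it does not rest on a false lemma. A minor further point: writing $(f_{w,\mu}(i{-}1))^{-1}$ presumes inverses in $A$, which are not assumed; the gapfree definition hands you the letter $\ta_i$ directly.
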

\ifpaper 
\else    
\begin{proof}
    Let $\mu$ be a \wm{} over $\Sigma$ w.r.t. the monoid $A$.
    For the first equivalence consider that $\mu$ is gapful.
    Then there exists some word $w\in\Sigma^*$ and an index $i\in[|w|]$
    such that $w$ has a gap at $i$.
    Thus there exists no $n\in A$ for which 
    $f_{w,\mu}(i)=f_{w,\mu}(i-1) \circ_A n$ holds.
    Now suppose there exists some word $w'\in\mathcal{P}_\mu(w)$.
    For such a word $p_{w',\mu}(i)=f_{w',\mu}(i)=f_{w,\mu}(i)$ 
    and $p_{w',\mu}(i-1)=f_{w',\mu}(i-1)=f_{w,\mu}(i-1)$ both must hold.
    Thus we get $f_{w,\mu}(i-1)\circ_A \mu(w'[i])= 
    f_{w',\mu}(i-1)\circ_A \mu(w'[i]) = 
    p_{w',\mu}(i-1)\circ_A\mu(w'[i])=
    p_{w',\mu}(i)=f_{w',\mu}(i)=f_{w,\mu}(i)$.
    Which is a contradiction to the gap,
    so $\mathcal{P}_\mu(w)=\emptyset$ holds.
    For the second direction choose $w\in\Sigma^*$
    with $\mathcal{P}_\mu(w) = \emptyset$.
    Suppose $\mu$ is gapfree,
    so $f_{w,\mu}(i)=f_{w,\mu}(i-1) \circ_A \mu(\ta)$
    holds for all $i\in[|w|]$ and appropriate $\ta\in\Sigma$.
    Then we have a contradiction by constructing 
    a word $w'\in\mathcal{P}_\mu(w)$ as follows:
    Choose $w'[1]\in\Sigma$ with $\mu(w'[1]) = f_{w,\mu}(1)$,
    which is possible according to the assumption for $i=1$.
    And for $i\in[|w|]$ we can inductively choose $w'[i]\in\Sigma$ 
    with $f_{w,\mu}(i)= f_{w,\mu}(i-1) \circ_A \mu(w'[i])$,
    which is also possible according to the assumption.
    Now $p_{w',\mu}=f_{w',\mu}$ and $p_{w',\mu}=f_{w,\mu}$ 
    hold by construction, so $w'\in\mathcal{P}_\mu(w)$ holds.
    
    For the second claim let $\mu$ be not injective. 
    Therefore, we have some distinct letters $\ta,\tb\in\Sigma$
    which have the same weight $\mu(\ta)=\mu(\tb) = i \in A$.
    So $[\ta]_{\sim_\mu} = [\tb]_{\sim_\mu}$ and 
    $p_{\ta,\mu}=f_{\tb,\mu}$ both hold directly.
    Consequently $\{\ta,\tb\} \subseteq \mathcal{P}_\mu(\ta)$ follows.
    For the second direction consider $w,u,v\in\Sigma^*$
    with $u\neq v$ and $\{u,v\} \subseteq \mathcal{P}_\mu(w)$.
    By the definition of $\mathcal{P}_\mu(w)$, the
    \prefWf{} of $u$ and $v$ are both equal to the \factWf{} of $w$.
    So $p_{u,\mu} = p_{v,\mu}$ holds, and therefore $\mu(u[j])=\mu(v[j])$
    holds for all $j\in[|u|]$.
    On the other hand because $u$ and $v$ are 
    different words, there exists some $i\in[|u|]$ with
    $u[i] \neq v[i]$.
    In other words, $\mu$ is not injective.
    
    The third claim follows directly from the first two.
    \qed
\end{proof}

\fi

\begin{definition}
    Let $\mu$ be a gapfree and injective \wm{} over $\Sigma$ 
    and let $w\in\Sigma^*$. Then $|\mathcal{P}_\mu(w)| = 1$  and 
    its element  is the $\mu$-\pn{} form of $w$. 
\end{definition}

Again with the alphabet $\Sigma = \{ \ta, \tn, \tb, \tx\}$ 
and the \sumwm{} $\mu$ over $\Sigma$ with base weights
$\mu(\ta) = 1$, $\mu(\tn) = 2$, $\mu(\tb) = 3$, and $\mu(\tx)=4$
we have
$\mathcal{P}_{\mu}(\mathtt{nanaba}) = \{\mathtt{banana}\}$
and $\mathcal{P}_{\mu}(\mathtt{xaxn}) = \{\mathtt{xnbn}\}$.
So $\mathtt{banana}$ is the $\mu$-\pn{} form of $\mathtt{nanaba}$
and $\mathtt{xnbn}$ is the $\mu$-\pn{} form of $\mathtt{xaxn}$.
Additionally, notice $\mathtt{xaxn}$ is an example 
of a word such that its Parikh vector is different from that of its prefix normal form.

\begin{remark}
    \label{rem:pnfc}
    Let $\mu$ be a gapfree and injective \wm{}
    over the alphabet $\Sigma$ w.r.t. the monoid $A$ 
    and $w\in\Sigma^*$.
    Then the $\mu$-\pn{} form $w'$ of $w$ can be 
    constructed inductively: $w'[1] = \ta$ if $f_{w,\mu}(1) = \mu(\ta)$ and
    for all $i\in[|w|]$, $i>1$ set $w'[i] = \ta \in \Sigma$ if 
    $f_{w,\mu}(i) = f_{w,\mu}(i-1)\circ_A \mu(\ta)$.
    In contrast, for a \wm{} that is gapfree but \emph{not} injective this
    inductive construction can be used 
    to non-deterministically construct
    all \pn{} words within the 
    \factorWEqnce{} class of a word.
    (A proof that the \pn{} form is 
    indeed obtained by this construction
    can be found in Subsection~\ref{subsec:iwm}.)
    
\end{remark}

\section{Gapfree \WM s}
\label{sec:gwm}
In this section we investigate the behaviour of gapfree \wm s in more 
detail. In order to present a natural and gapfree \emph{standard \wm} 
for ordered alphabets that is \emph{equivalent} to every other 
injective, alphabetically ordered, and gapfree \wm{}
(over arbitrary monoids) - and thus works as a representative, we give 
an alternative condition for gapfree \wm{}s;
the so called \wm s with \emph{stepped based weights}.

First of all, by their definition we can infer that every binary \wm{} 
is gapfree. Consequently we consider non-binary \wm s for the rest of 
this section.

\begin{lemma}
    \label{lem:binwm}
    All binary \wm s are gapfree.    
\end{lemma}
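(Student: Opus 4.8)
The plan is to reduce gapfreeness to a one‑line counting argument about the heavier of the two base weights. First I would fix notation: since $\mu$ is binary, $|\mu(\Sigma)| = 2$, so I can write $\mu(\Sigma) = \{a, b\}$ with $\neutr_A <_A a <_A b$ (the left inequality coming from the increasing property applied to $\epsilon \mapsto \ta$) and pick letters $\ta, \tb \in \Sigma$ with $\mu(\ta) = a$ and $\mu(\tb) = b$. Because $\circ_A$ is commutative on $\mu(\Sigma^*)$ and $\mu$ is a morphism, $\mu(x)$ depends only on $|x|$ and on the number $h(x)$ of positions of $x$ carrying weight $b$; I would make this precise by setting $g(\ell,k) := \mu(\ta^{\ell-k}\tb^{k})$ for $0 \le k \le \ell$, so that $\mu(x) = g(|x|, h(x))$, and by recording the recurrences $g(\ell,k) = g(\ell-1,k)\circ_A a$ and $g(\ell,k+1) = g(\ell-1,k)\circ_A b$ for $0 \le k \le \ell-1$. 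Since $a <_A b$ and the order of $A$ is compatible with $\circ_A$, these recurrences give that $g(\ell,\cdot)$ is strictly increasing in its second argument.

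Next I would rephrase $f_{w,\mu}$ in these terms. Strict monotonicity of $g(\ell,\cdot)$ means that a length‑$\ell$ factor of $w$ has maximal weight exactly when it has the largest possible number of weight‑$b$ positions; calling that number $K_w(\ell)$, one gets $f_{w,\mu}(\ell) = g(\ell, K_w(\ell))$ for all $\ell \in [|w|]_0$, with $K_w(0) = 0$. The heart of the proof is then the claim that $K_w(i) \in \{K_w(i-1), K_w(i-1)+1\}$ for every $i \in [|w|]$: on the one hand a length‑$(i-1)$ factor realizing $K_w(i-1)$ can be extended inside $w$ by one more letter (possible since $i-1 < |w|$) to a length‑$i$ factor with at least $K_w(i-1)$ weight‑$b$ positions, so $K_w(i) \ge K_w(i-1)$; on the other hand deleting the last letter of a length‑$i$ factor realizing $K_w(i)$ gives a length‑$(i-1)$ factor with at least $K_w(i)-1$ weight‑$b$ positions, so $K_w(i-1) \ge K_w(i)-1$.

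Finally I would feed this back into the recurrences. If $K_w(i) = K_w(i-1)$ then $f_{w,\mu}(i) = g(i, K_w(i-1)) = g(i-1, K_w(i-1)) \circ_A a = f_{w,\mu}(i-1) \circ_A \mu(\ta)$, and if $K_w(i) = K_w(i-1)+1$ then $f_{w,\mu}(i) = g(i, K_w(i-1)+1) = g(i-1, K_w(i-1)) \circ_A b = f_{w,\mu}(i-1) \circ_A \mu(\tb)$; in both cases $f_{w,\mu}(i) = f_{w,\mu}(i-1) \circ_A \mu(c)$ for some $c \in \{\ta,\tb\} \subseteq \Sigma$, which is precisely the definition of gapfree. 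The conceptual content is minimal, so I expect the only things needing care to be bookkeeping: the rearrangement‑invariance of $\mu$ (which is exactly $\mu(uv) = \mu(vu)$ together with commutativity of $\circ_A$ on $\mu(\Sigma^*)$), the boundary index $i=1$ (handled uniformly once $g(0,0) = \neutr_A$ and $K_w(0) = 0$ are in place), and the single appeal to compatibility of $<_A$ with $\circ_A$ used for the strict monotonicity of $g(\ell,\cdot)$; if one preferred not to assume that compatibility beyond the increasing property, that monotonicity would instead have to be imported from the basic properties of weight measures, and that is the one place I would flag as leaning on a structural assumption about $A$.
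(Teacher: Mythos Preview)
Your argument is correct and follows essentially the same route as the paper: both proofs reduce the question to showing that the maximum number of ``heavy'' letters in a length-$i$ factor differs from that in a length-$(i{-}1)$ factor by $0$ or $1$, and then read off the missing base weight accordingly. Your version is more explicit (introducing $g$, $h$, $K_w$ and spelling out the recurrences), and you rightly flag the one structural point both proofs lean on---that replacing a light letter by a heavy one strictly increases the weight, i.e.\ compatibility of $<_A$ with $\circ_A$ on $\mu(\Sigma^*)$---which the paper uses just as tacitly when it asserts that more $\tone$s means larger $\mu$-weight.
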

\ifpaper 
\else    
\begin{proof}
    Let $\mu$ be a binary \wm{} over $\Sigma$ w.r.t. $A$ and with the two base weights 
    $\mu(\Sigma)=\{x,y\}$, where $x<_A y$.
    W.l.o.g. let $\mu$ be injective, so $\Sigma$ is binary as well.
    Furthermore w.l.o.g. let $\Sigma = \{\tzero,\tone\}$ and 
    $\mu(\tzero)=x$, $\mu(\tone)=y$.
    Now let $w\in\Sigma^*$ and $i\in[|w|]$.
    Then $f_{w,\mu}(i)$ is realised by some factor
    $u\in\Fact_i(w)$ with $\mu(u)=f_{w,\mu}(i)$ and
    $f_{w,\mu}(i-1)$ is realised by some factor
    $v\in\Fact_i(v)$ with $\mu(v)=f_{w,\mu}(i-1)$.
    Now $|v|_\tone-|u|_\tone \in \{0,1\}$ holds because
    otherwise $\mu(v)$ or $\mu(u)$ would not 
    be the maximum weight a factor of length $i$ or $i-1$ has.
    In total either $f_{w,\mu}(i)=f_{w,\mu}(i-1) \circ_A \mu(\tone)$
    or $f_{w,\mu}(i)=f_{w,\mu}(i-1) \circ_A \mu(\tzero)$ holds.
    Therefore $\mu$ is gapfree.
    \qed
\end{proof}
\fi

\begin{remark}
    By Lemma~\ref{lem:binwm}, we see that when modelling binary \pnity{} 
    by means of weighted \pnity{} (e.g. in the proof of  
    Theorem~\ref{the:weight2bin}) we automatically have the existence of 
    a unique binary \pn{} form as expected.
\end{remark}

In the last section we saw that we have exactly one weighted \pn{} form 
in a \factorWEqnce{} class iff the \wm{} is injective and gapfree.
We now give an alternative condition under which a \wm{} is gapfree, 
which in most cases is easier to check.
Later we will also see that this condition is part of a proper 
characterisation for gapfree \wm s.

\begin{definition}
    \label{def:sbw}
    Let $A$ be a strictly totally ordered monoid.
    A \emph{step function} is a right action of an element $s\in A$ (the 
    \emph{step}) on $A$, i.e. $\sigma_s:A\to A; a\mapsto a\circ_A s$. 
    The \wm{} $\mu$ over $\Sigma$ w.r.t the monoid $A$ is said to have \emph{stepped 
    base weights} if there exists a step function $\sigma_s$ for some $s\in A$ 
    such that
    $\mu(\Sigma)=\{ \sigma_s^i(\min(\mu(\Sigma))) \mid
    i\in[0,|\mu(\Sigma)|-1]\}$ holds.
\end{definition}

In the previous example for $\Sigma =\{\ta,\tb,\tc\}$, the gapfree \sumwm{} 
$\mu$ over $\Sigma$ with $\mu(\ta) = 2$, $\mu(\tb) = 4$,
and $\mu(\tc) = 6$ has stepped base weights with the step of $2$.
In contrast, the gapful \sumwm{} $\nu$ over $\Sigma$ with 
$\nu(\ta) = 1$, $\nu(\tb) = 3$, and $\nu(\tc) = 4$ does not,
because $\nu(\tb) - \nu(\ta) = 2$ but $\nu(\tc) - \nu(\tb)=1$.
In general, stepped base weights imply \emph{gapfreeness} but not vice 
versa (see Subsection~\ref{subsec:fancy} in the appendix).

\begin{proposition}
	\label{prop:step1}
	All \wm s with stepped base weights are gapfree.
\end{proposition}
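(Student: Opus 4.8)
The plan is to trade the (possibly exotic) monoid $A$ for integer bookkeeping along the step $s$, turning $\mu$-prefix-normality into a purely combinatorial statement, after which the argument is the same ``truncate one end, extend the other end'' argument that proves Lemma~\ref{lem:binwm}. Write $k := |\mu(\Sigma)|$. If $k=1$ all letters have the same weight $m$ and $f_{w,\mu}(i) = f_{w,\mu}(i-1)\circ_A m$ for every $w$ and $i$, so $\mu$ is gapfree; hence assume $k\ge 2$. Let $m := \min(\mu(\Sigma))$ and let $s$ witness that $\mu$ has stepped base weights, so $\mu(\Sigma) = \{w_0,\dots,w_{k-1}\}$ with $w_0 := m$ and $w_{j+1} := \sigma_s(w_j) = w_j\circ_A s$. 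A short order argument (using $w_1 = w_0\circ_A s \ne w_0 = \min(\mu(\Sigma))$) gives $s >_A \neutr_A$, hence $a\circ_A s \ge_A a$ for all $a\in A$, hence $w_0 <_A w_1 <_A \dots <_A w_{k-1}$. Finally define $\beta\colon\Sigma\to\{0,\dots,k-1\}$ by $\mu(\ta) = w_{\beta(\ta)}$ and extend it to a morphism $\beta\colon\Sigma^*\to(\N_0,+)$; note $\beta$ maps $\Sigma$ onto $\{0,\dots,k-1\}$.

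The key step --- and the one I expect to be the real work --- is a bookkeeping lemma: \emph{for every $x\in\Sigma^*$ the weight $\mu(x)$ depends only on the pair $(|x|,\beta(x))$, and among words of a fixed length it is non-decreasing in $\beta$.} This is exactly where ``stepped base weights'' is used, together with the commutativity of $\circ_A$ on $\mu(\Sigma^*)$ (hence on all products of base weights). First I would check the identity $w_a\circ_A w_b = w_{a-1}\circ_A w_{b+1}$ for $1\le a$ and $b\le k-2$: indeed $w_{a-1}\circ_A w_{b+1} = w_{a-1}\circ_A(w_b\circ_A s) = (w_b\circ_A w_{a-1})\circ_A s = w_b\circ_A(w_{a-1}\circ_A s) = w_b\circ_A w_a = w_a\circ_A w_b$, using $w_{j+1} = w_j\circ_A s$ and commutativity on $\mu(\Sigma^*)$. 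Since $\mu(x)$ is, by commutativity, the product of $\mu(x[1]),\dots,\mu(x[|x|])$ in any order --- hence determined by the multiset of their $\beta$-indices --- and since any two multisets of $\ell$ indices with the same sum are linked by moves $(a,b)\mapsto(a-1,b+1)$, it follows that $\mu(x)$ depends only on $|x|$ and $\beta(x)$; the one-sided version of the move (replacing a factor $w_a$ by $w_{a+1} = w_a\circ_A s$, i.e.\ multiplying the whole product by $s$ on the right, which cannot decrease it) gives the monotonicity. I would record the immediate consequence: for every word $w$ and length $\ell$, a length-$\ell$ factor of $w$ of \emph{maximal} $\beta$-value realises $f_{w,\mu}(\ell)$.

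With this the proof is short. Fix $w\in\Sigma^*$ and $i\in[|w|]$; put $M_\ell := \max\{\beta(y) : y\in\Fact_\ell(w)\}$ for $\ell\in\{i-1,i\}$ and pick $y^*\in\Fact_i(w)$, $z^*\in\Fact_{i-1}(w)$ attaining $M_i$ and $M_{i-1}$, so $\mu(y^*) = f_{w,\mu}(i)$ and $\mu(z^*) = f_{w,\mu}(i-1)$. As in Lemma~\ref{lem:binwm}: truncating $y^*$ to its length-$(i-1)$ prefix gives $M_i = \beta(y^*[1\dots i-1]) + \beta(y^*[i]) \le M_{i-1} + (k-1)$; and since $|z^*| = i-1 < i \le |w|$, the occurrence of $z^*$ in $w$ extends by one letter --- on the right, or on the left if $z^*$ is a suffix of $w$ --- to some $\hat y\in\Fact_i(w)$, whence $M_i \ge \beta(\hat y) \ge \beta(z^*) = M_{i-1}$. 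Thus $0\le M_i - M_{i-1}\le k-1$, and as $\beta$ is onto $\{0,\dots,k-1\}$ there is $\ta\in\Sigma$ with $\beta(\ta) = M_i - M_{i-1}$. Then $z^*\ta$ has length $i$ and $\beta(z^*\ta) = M_i = \beta(y^*)$, so the bookkeeping lemma gives $\mu(z^*\ta) = \mu(y^*)$, i.e.\ $f_{w,\mu}(i-1)\circ_A\mu(\ta) = \mu(z^*)\circ_A\mu(\ta) = \mu(z^*\ta) = \mu(y^*) = f_{w,\mu}(i)$. Since $w$ and $i$ were arbitrary, $\mu$ is gapfree.

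The only genuinely non-routine point is the bookkeeping lemma: stepped base weights are precisely what make $\mu(x)$ a faithful, monotone function of the integer pair $(|x|,\beta(x))$, and once weights are replaced by such pairs the gapfreeness argument is the one-letter-at-a-time argument already used for binary weight measures in Lemma~\ref{lem:binwm}.
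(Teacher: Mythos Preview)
Your proof is correct and follows essentially the same route as the paper's: both convert base weights to integer step-indices (your morphism $\beta$; the paper's sums $m=\sum p_i$ and $o=\sum q_i$), bound the difference of consecutive maxima in $\{0,\dots,k-1\}$ via the same truncate-and-extend argument, and read off the filling letter from that difference. The only notable distinction is expository: you isolate and prove the bookkeeping lemma (that $\mu(x)$ is determined by, and monotone in, the pair $(|x|,\beta(x))$) explicitly, whereas the paper uses it tacitly when asserting that $m$ and $o$ are ``the maximum number of steps'' and that $f_{w,\mu}(l)=f_{w,\mu}(l-1)\circ_A\sigma^{m-o}(\mu(\ta_0))$.
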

\ifpaper 
\else    
\begin{proof}
    Let $\mu$ be a non-binary \wm{} over $\Sigma$ w.r.t $A$ with stepped base 
    weights. W.l.o.g let $\mu$ be injective and let $\Sigma$ be of the form 
    $\{\ta_0, \ta_1, \dots , \ta_{n-1}\}$, where the letters are in ascending 
    order of their weight, so $\mu(\ta_i)<_A\mu(\ta_{i+1})$ holds for all 
    $i\in[0,n-2]$.
    
    Assume there exists a step function $\sigma$
    with the step $s\in A$ such that
    $\mu(\Sigma)$ is of the form 
    $\{ \sigma^i(\min(\mu(\Sigma))) \mid i\in[0,|\mu(\Sigma)|-1]\}$.
    Consequently the weight of every letter in $\Sigma$ is 
    $\mu(\ta_i) = \sigma^{i}(\min(\mu(\Sigma)))$
    for all $i\in[0,n-1]$.
    In particular we have $\mu(\ta_0)=\min(\mu(\Sigma))$.
    Now consider some word $w\in\Sigma^*$ and index $l\in[|w|]$,
    then $f_{w,\mu}(l)$ is realised by 
    some factor $\ta_{p_1}\dots\ta_{p_l}\in\Fact_l(w)$
    with some sequence $p_1,\dots,p_l\in[0,n-1]$.
    Therefore $f_{w,\mu}(l)$ is of the form 
    $\sigma^{p_1}(\mu(\ta_0))\circ_A\dots\circ_A\sigma^{p_l}(\mu(\ta_0))$.
    And similarly $f_{w,\mu}(l-1)$ is realised by 
    some other factor $\ta_{q_1}\dots\ta_{q_{l-1}}\in\Fact_{l-1}(w)$
    for some sequence $q_1,\dots,q_{l-1}\in[0,n-1]$, 
    and we have $f_{w,\mu}(l-1) =
    \sigma^{q_1}(\mu(\ta_0))\circ_A\dots\circ_A\sigma^{l_i}(\mu(\ta_0))$.
    Now let $m = \sum_{i=1}^{l}(p_i)$ and $o = \sum_{i=1}^{l-1}(q_i)$
    be the number of steps in the weights
    $f_{w,\mu}(l)$ and  $f_{w,\mu}(l-1)$.
    So they are the maximum number of 
    steps any of $w$'s factors of length $l$
    and $l-1$ can have in their weight.
    
    First of all $m-o \geq 0$ holds because we 
    know the \factWf{} is strictly increasing and 
    otherwise $m$ would not be the maximum for length $l$.
    We also know $m-o < n$ holds because
    if $m-o\geq n$ held, $o$ would not be 
    the maximum number of steps in a factor of length $l-1$.
    We now know 
    $f_{w,\mu}(l)$ and $f_{w,\mu}(l-1)$ only differ by
    $k := m-o$ steps where $0\leq k < n$ holds.
    Consequently $f_{w,\mu}(l) = f_{w,\mu}(l-1) \circ_A \sigma^k(\ta_0)$
    holds and because $\mu$ has stepped base 
    weights there exists such a letter 
    $\ta_k\in\Sigma$ with $\mu(\ta_k) = \sigma^k(\ta_0)$.
    Thus $\mu$ is gapfree.
    \qed
\end{proof}
\fi

For further investigations of gapfree \wm s we define an equivalence on 
\wm s based on their behaviour on words of the same length.

\begin{definition}\label{defeq}
	Let $\mu_A$ and $\mu_B$ be \wm s over the same alphabet $\Sigma$ 
	w.r.t. the 	monoids $A$ and $B$.
	We say that $\mu_A$ and $\mu_B$ are \emph{equivalent} if for all 
	words $v,w\in\Sigma^n$, for some $n\in\N$, we have 
	$\mu_A(v)<_A\mu_A(w)$ iff $\mu_B(v)<_B\mu_B(w)$.	
\end{definition}

The reasoning behind such an equivalence of \wm s lies in the fact that 
using different but equivalent \wm s does not change their relative 
behaviour. 
Most notably, Definition~\ref{defeq} and the totality of the orders 
imply $\mu_A(v)=\mu_A(w)$ iff $\mu_B(v)=\mu_B(w)$ and therefore, the 
\pn{} form remains. \looseness=-1

For instance, considering again the alphabet $\Sigma =\{\ta,\tb,\tc\}$ 
and the gapfree \sumwm{} $\mu$ over $\Sigma$ with $\mu(\ta) = 2$, 
$\mu(\tb) = 4$, and $\mu(\tc) = 6$ as well as the \prodwm{} $\nu$ over 
$\Sigma$ with $\nu(\ta) = 2$, $\nu(\tb)=6$, and $\nu(\tc)=18$.
Then $\mu$ and $\nu$ are equivalent since they both are alphabetically 
ordered and $2+3^{\frac{\mu(w)}{2}-1}=\nu(w)$ holds for all 
$w\in\Sigma^*$. Therefore, since $\mu$ is gapfree so is $\nu$,
and for instance $\mathcal{P}_\mu(\mathtt{bcac})=\{\mathtt{cbbb}\} = 
\mathcal{P}_\nu(\mathtt{bcac})$ holds.

\begin{proposition}
	\label{prop:EQWMEQPNF}
	The \pn{} form of any word is the same w.r.t. equivalent \wm s, i.e.
	$\mathcal{P}_\mu(w)=\mathcal{P}_\nu(w)$ holds for all $w\in\Sigma^*$ if 
	$\mu$ and $\nu$ are equivalent 
	\wm s.	
\end{proposition}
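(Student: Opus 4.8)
The plan is to split the claim into two independent transfer statements and recombine them. Since $\mathcal{P}_\mu(w)$ collects exactly the $\mu$-\pn{} words lying in the class $[w]_{\sim_\mu}$, it suffices to show (i) that the \factorWEqnce{} classes agree, $[w]_{\sim_\mu}=[w]_{\sim_\nu}$ for every $w\in\Sigma^*$, and (ii) that a word is $\mu$-\pn{} exactly when it is $\nu$-\pn{}. Together (i) and (ii) give $\mathcal{P}_\mu(w)=\mathcal{P}_\nu(w)$ at once.

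First I would isolate the one fact used throughout. By Definition~\ref{defeq} equivalent \wm s $\mu,\nu$ satisfy $\mu(v)<_A\mu(w)\Leftrightarrow\nu(v)<_B\nu(w)$ for all $v,w\in\Sigma^n$, and (as observed right after that definition) they also transfer equalities: if $\mu(v)=\mu(w)$ but $\nu(v)\neq\nu(w)$, totality would force a strict inequality over $B$, hence over $A$ by equivalence, contradicting irreflexivity of $<_A$. Splitting ``$\leq$'' into ``$<$ or $=$'' then yields $\mu(v)\leq_A\mu(w)\Leftrightarrow\nu(v)\leq_B\nu(w)$ for words $v,w$ of equal length. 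The consequence I really need is a \emph{preservation of maxima}: for every $w\in\Sigma^*$ and every $i\in[|w|]$, a factor $u\in\Fact_i(w)$ realises $f_{w,\mu}(i)$ iff it realises $f_{w,\nu}(i)$, because all factors it competes with also have length $i$, so being $\leq_A$-maximal in $\Fact_i(w)$ is the same as being $\leq_B$-maximal in $\Fact_i(w)$.

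For (i), assume $w\sim_\mu v$; then $|w|=|v|$ (the domains of $f_{w,\mu}$ and $f_{v,\mu}$ must coincide) and $f_{w,\mu}(i)=f_{v,\mu}(i)$ for all $i\in[|w|]_0$. Fixing $i\in[|w|]$ (the case $i=0$ being trivial, since $f_{\cdot,\nu}(0)=\neutr_B$ for any word), choose a $\mu$-heaviest factor $u_w\in\Fact_i(w)$ and a $\mu$-heaviest factor $u_v\in\Fact_i(v)$; then $\mu(u_w)=\mu(u_v)$, both words have length $i$, hence $\nu(u_w)=\nu(u_v)$ by the equality transfer, and by preservation of maxima these values equal $f_{w,\nu}(i)$ and $f_{v,\nu}(i)$ respectively. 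So $f_{w,\nu}=f_{v,\nu}$, i.e.\ $w\sim_\nu v$; the converse is the same argument with $\mu$ and $\nu$ exchanged, so $[w]_{\sim_\mu}=[w]_{\sim_\nu}$.

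For (ii), note that $v$ is $\mu$-\pn{} iff for every $i\in[|v|]$ the prefix $v[1\dots i]$ has $\leq_A$-maximal weight in $\Fact_i(v)$: the prefix is itself a length-$i$ factor, so $p_{v,\mu}(i)=f_{v,\mu}(i)$ is equivalent to $\mu(v[1\dots i])\geq_A\mu(u)$ for all $u\in\Fact_i(v)$. Since $v[1\dots i]$ and every such $u$ have length $i$, this is equivalent to $\nu(v[1\dots i])\geq_B\nu(u)$ for all $u\in\Fact_i(v)$, i.e.\ to $v$ being $\nu$-\pn{}. Combining (i) and (ii) proves the proposition. The only point requiring genuine care is the transfer of equality of weights (Definition~\ref{defeq} is phrased only for strict inequalities) and the preservation-of-maxima statement it feeds; everything else is routine, so I would state that fact explicitly as a short lemma before carrying out (i) and (ii).
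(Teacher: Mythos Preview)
Your proof is correct. Both your argument and the paper's rest on the same key observation---that a factor realises $f_{w,\mu}(i)$ iff it realises $f_{w,\nu}(i)$---but you organise the conclusion differently. The paper, after noting preservation of maximisers, appeals to the letter-by-letter construction of Remark~\ref{rem:pnfc} to conclude that the \pn{} form is unchanged. You instead decompose $\mathcal{P}_\mu(w)$ into its two defining conditions and verify each separately: (i) the \factorWEqnce{} classes agree, and (ii) $\mu$-\pnity{} coincides with $\nu$-\pnity{}. Your route is slightly more elementary and has the pleasant side effect of covering the gapful and non-injective cases uniformly (where $\mathcal{P}_\mu(w)$ may be empty or have several elements) without any reference to a construction that, as stated in Remark~\ref{rem:pnfc}, presupposes gapfreeness. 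The paper's approach is terser but leans on that remark; yours is self-contained.
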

\ifpaper 
\else    
\begin{proof}
	Assume $\mu$ and $\nu$ are equivalent \wm s over $\Sigma$.
	Then for all words $w\in\Sigma^*$ and $u\in\Fact_{i}(w)$ with $i\in[|w|]$
	it holds $\mu(u)=f_{w,\mu}$ iff $\nu(u)=f_{w,\nu}$. Consequently with the 
	construction given in Remark~\ref{rem:pnfc} we have that the $\mu$-\pn{} 
	form is the same as the $\nu$-\pn{} form.\qed
\end{proof}
\fi

Before we present the generalised \wm{}, we prove three auxiliary lemmata
and give the definition of the standard weight measure.

\begin{lemma}
	\label{prop:EQWMS}
	For any two equivalent \wm s, if one of them is gapfree, injective, or 
	alphabetically ordered then so is the other.
\end{lemma}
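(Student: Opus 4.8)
The plan is to prove the three preservation statements one at a time, in each case unwinding the definition of equivalence of weight measures (Definition~\ref{defeq}) and the relevant property, exploiting that equivalence is a statement purely about the relative order of word-weights at each fixed length. Throughout, let $\mu$ and $\nu$ be equivalent \wm s over $\Sigma$, w.r.t. monoids $A$ and $B$ respectively, and recall the remark after Definition~\ref{defeq}: equivalence plus totality of $<_A,<_B$ gives $\mu(v)=\mu(w) \iff \nu(v)=\nu(w)$ for all $v,w$ of equal length.

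\textbf{Injectivity.} Suppose $\mu$ is injective on $\Sigma$ but $\nu$ is not, so there are distinct $\ta,\tb\in\Sigma$ with $\nu(\ta)=\nu(\tb)$. Since $\ta$ and $\tb$ are words of length $1$, equivalence yields $\mu(\ta)=\mu(\tb)$, contradicting injectivity of $\mu$. The converse is symmetric. This is the quick case.

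\textbf{Alphabetical order.} Here I must be slightly careful, since "alphabetically ordered" is defined relative to a fixed total order $\leq$ on $\Sigma$; I take the same order $\leq$ for both \wm s. Suppose $\mu$ is alphabetically ordered, i.e. $\ta\leq\tb$ implies $\mu(\ta)\leq_A\mu(\tb)$ for all $\ta,\tb\in\Sigma$. Fix $\ta,\tb\in\Sigma$ with $\ta\leq\tb$; I want $\nu(\ta)\leq_B\nu(\tb)$. If $\ta=\tb$ this is trivial, so assume $\ta<\tb$, whence $\mu(\ta)\leq_A\mu(\tb)$; if $\mu(\ta)=\mu(\tb)$ then by equivalence $\nu(\ta)=\nu(\tb)$ and we are done, and if $\mu(\ta)<_A\mu(\tb)$ then, viewing $\ta,\tb$ as length-$1$ words, equivalence gives $\nu(\ta)<_B\nu(\tb)$, in particular $\nu(\ta)\leq_B\nu(\tb)$. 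The converse direction is identical with the roles of $\mu$ and $\nu$ swapped.

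\textbf{Gapfreeness.} This is the main obstacle, because gapfreeness is not phrased purely in terms of $<$ on word-weights but in terms of the monoid operation: $\mu$ is gapfree iff for every $w\in\Sigma^*$ and every $i\in[|w|]$ there is $\ta\in\Sigma$ with $f_{w,\mu}(i)=f_{w,\mu}(i-1)\circ_A\mu(\ta)$. The key observation that makes this go through is that equivalence forces the two factor-weight functions to have matching realising factors: for any $w$, any length $i$, and any $u\in\Fact_i(w)$, we have $\mu(u)=f_{w,\mu}(i)$ iff $\nu(u)=f_{w,\nu}(i)$ — indeed $\mu(u)=f_{w,\mu}(i)=\max\mu(\Fact_i(w))$ says $\mu(u')\leq_A\mu(u)$ for all $u'\in\Fact_i(w)$, which by equivalence (all $u'$ have length $i$) is equivalent to $\nu(u')\leq_B\nu(u)$ for all such $u'$, i.e. $\nu(u)=f_{w,\nu}(i)$. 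Now suppose $\nu$ is gapfree and fix $w\in\Sigma^*$ and $i\in[|w|]$. Pick any $u\in\Fact_i(w)$ realising $f_{w,\mu}(i)$; write $u=u'\ta$ with $u'\in\Fact_{i-1}(w)$ the length-$(i-1)$ prefix of $u$ and $\ta=u[i]\in\Sigma$. I claim $u'$ realises $f_{w,\mu}(i-1)$: otherwise there is $v'\in\Fact_{i-1}(w)$ with $\mu(u')<_A\mu(v')$; then $v'\ta\in\Fact_i(w)$ (it is a factor of $w$, being a length-$i$ window contained in the same occurrence region — more precisely, extend the witnessing occurrence of $v'$ by one letter; if it cannot be extended inside $w$ then $|v'|=|w|$, impossible since $i-1<i\le|w|$), and by equivalence $\nu(u')<_B\nu(v')$, hence $\nu(u'\ta)=\nu(u')\circ_B\nu(\ta)<_B\nu(v')\circ_B\nu(\ta)=\nu(v'\ta)$ by the increasing property, so $\nu(u)=\nu(u'\ta)<_B\nu(v'\ta)$; but $u$ realises $f_{w,\nu}(i)$ by the key observation, contradicting $v'\ta\in\Fact_i(w)$. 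Wait — I should instead argue directly in $\nu$: since $\nu$ is gapfree there is some $\tb\in\Sigma$ with $f_{w,\nu}(i)=f_{w,\nu}(i-1)\circ_B\nu(\tb)$, and one checks that $\tb$ can be taken to be the last letter of a $\nu$-realising factor of length $i$ whose prefix realises length $i-1$; transporting this realising factor back to $\mu$ via the key observation gives $f_{w,\mu}(i)=f_{w,\mu}(i-1)\circ_A\mu(\tb)$, so $\mu$ is gapfree. The careful bookkeeping of "a realising length-$i$ factor whose length-$(i-1)$ prefix also realises" — valid because any window of length $i$ contains one of length $i-1$ and the factor-weight function is the max — is the one routine point I would spell out fully; the rest is a direct transfer along equivalence. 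The converse swaps $\mu$ and $\nu$.
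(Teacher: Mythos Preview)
Your arguments for injectivity and alphabetical order are correct and match the paper. For gapfreeness, your key observation---that $u\in\Fact_i(w)$ realises $f_{w,\mu}(i)$ iff it realises $f_{w,\nu}(i)$---is exactly right and is the heart of the matter. But both of your concluding attempts rest on the same false claim: that there is always a length-$i$ realising factor whose length-$(i-1)$ prefix also realises. This already fails for the standard \sumwm{} on $\{\ta,\tb,\tc\}$ with weights $1,2,3$ and $w=\tb\ta\tc$: the unique realising length-$2$ factor is $\ta\tc$, but its prefix $\ta$ has weight $1$ while $f_{w,\nu}(1)=3$. Your justification ``any window of length $i$ contains one of length $i-1$ and the factor-weight function is the max'' does not establish this, and your first attempt's step $v'\ta\in\Fact_i(w)$ is similarly unfounded, since extending the occurrence of $v'$ inside $w$ appends the letter of $w$ that follows it there, not the unrelated letter $\ta=u[i]$.

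The fix, which is what the paper does, is simpler than either of your attempts: take \emph{separate} realisers $u\in\Fact_i(w)$ for $f_{w,\nu}(i)$ and $v\in\Fact_{i-1}(w)$ for $f_{w,\nu}(i-1)$. Gapfreeness of $\nu$ gives some $\tb\in\Sigma$ with $\nu(u)=\nu(v)\circ_B\nu(\tb)=\nu(v\tb)$. Now $u$ and $v\tb$ are both words of length $i$, and equivalence in Definition~\ref{defeq} applies to \emph{all} equal-length words, not only to factors of $w$; hence $\mu(u)=\mu(v\tb)=\mu(v)\circ_A\mu(\tb)$. By your key observation $u$ and $v$ are also $\mu$-realisers, so $f_{w,\mu}(i)=f_{w,\mu}(i-1)\circ_A\mu(\tb)$ and $\mu$ is gapfree. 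The point you were missing is that $v\tb$ need not be a factor of $w$ at all.
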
 
\ifpaper 
\else    
\begin{proof}
	Let $\mu_A,\mu_B$ be equivalent \wm s over $\Sigma$ w.r.t. monoids 
	$A,B$.
	
	1) Assume $\mu_A$ gapfree but suppose $\mu_B$ not. There exists a word 
	$w\in\Sigma^*$ with a gap at $i\in[|w|]$ regarding $\mu_B$.
	So for every $\tx\in\Sigma$ we have 
	$f_{w,\mu_B}(i)=\mu_B(u)\neq\mu_B(v\tx)=f_{w,\mu_B}(i-1)\circ_B\mu_B(\tx)$ 
	where $u\in\Fact_i(w)$ and $v\in\Fact_{i-1}(w)$. Since $\mu_A$ and $\mu_B$ 
	are equivalent also $\mu_A(u)\neq\mu_A(v\tx)$ holds for all $\tx\in\Sigma$. 
	This is a contradiction since $\mu_A$ is gapfree.
	
	2) Assume $\mu_A$ injective but suppose $\mu_B$ not. There exist letters 
	$\ta,\tb\in\Sigma$ with $\mu_B(\ta)=\mu_B(\tb)$. Since $\mu_A$ and $\mu_B$ 
	are equivalent also $\mu_A(\ta)=\mu_A(\tb)$ holds contradicting the 
	assumption.
	
	3) Follows directly by the definition of equivalent \wm s.
	\qed
\end{proof}
\fi

Finally, we define the \emph{standard \wm} as an injective gapfree \wm{} 
that is innate to any strictly totally ordered alphabet.

\begin{definition}
	Let $\Sigma=\{\ta_1,\ta_2,\dots,\ta_n\}$ be a strictly totally 
	ordered alphabet, where $n\in\N$. We define the \emph{standard 
	\wm{}} $\mu_\Sigma$ as the alphabetically ordered \sumwm{} over 
	$\Sigma$ with base weights $\mu_\Sigma(\ta_i) = i$ for all $i\in[n]$.
\end{definition}

For instance, considering again the alphabet $\Sigma =\{\ta,\tb,\tc\}$ 
with the usual order, the standard \wm{} $\mu_\Sigma$ has the 
base weights $\mu_\Sigma(\ta)=1$, $\mu_\Sigma(\ta)=2$, and 
$\mu_\Sigma(\ta)=3$. And in the following, we will see that indeed 
$\mu_\Sigma$ is equivalent to both $\mu$ and $\nu$ from the previous 
example.

\begin{lemma}
	\label{lem:STDWM}
	The standard \wm{} is gapfree, injective, and alphabetically ordered.
\end{lemma}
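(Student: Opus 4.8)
The plan is to verify the three properties of the standard \wm{} $\mu_\Sigma$ directly from its definition, using the tools developed earlier in the section. Recall that $\mu_\Sigma$ is the \sumwm{} over $\Sigma=\{\ta_1,\dots,\ta_n\}$ with $\mu_\Sigma(\ta_i)=i$. First I would note that $\mu_\Sigma$ is \emph{injective} and \emph{alphabetically ordered} almost immediately: the base weights $1,2,\dots,n$ are pairwise distinct, and $\ta_i<\ta_j$ implies $i<j$ implies $\mu_\Sigma(\ta_i)<\mu_\Sigma(\ta_j)$, so both properties hold by inspection. I should also quickly check that $\mu_\Sigma$ is a genuine \wm{}, i.e. that the increasing property holds: for $w\in\Sigma^*$ and $v\in\Sigma^+$ we have $\mu_\Sigma(wv)=\mu_\Sigma(w)+\mu_\Sigma(v)$ and $\mu_\Sigma(v)\geq 1>0$, so $\mu_\Sigma(w)<\mu_\Sigma(wv)$; commutativity of $\mu_\Sigma(vw)=\mu_\Sigma(wv)$ is inherited from the commutativity of $+$ on $\N_0$.

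The remaining and slightly less trivial property is \emph{gapfreeness}. Here the cleanest route is to invoke Proposition~\ref{prop:step1}: it suffices to show that $\mu_\Sigma$ has stepped base weights. I would exhibit the step $s=1\in\N_0$ and the associated step function $\sigma_1:a\mapsto a+1$. Then $\min(\mu_\Sigma(\Sigma))=1$ and
\[
\{\sigma_1^i(1)\mid i\in[0,|\mu_\Sigma(\Sigma)|-1]\}
=\{1+i\mid i\in[0,n-1]\}
=\{1,2,\dots,n\}
=\mu_\Sigma(\Sigma),
\]
so $\mu_\Sigma$ indeed has stepped base weights with step $1$, and gapfreeness follows from Proposition~\ref{prop:step1}.

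I do not anticipate a real obstacle here; the statement is essentially a sanity check that the constructed standard \wm{} lands in the good class (injective and gapfree) identified in Theorem~\ref{the:P}, plus the extra bookkeeping property of being alphabetically ordered. The only point requiring the slightest care is making sure the degenerate edge cases are covered — in particular that $\Sigma$ has at least one letter so that $\min(\mu_\Sigma(\Sigma))$ is well defined, and that the case $|\Sigma|=1$ (where the increasing property still holds vacuously for $v\in\Sigma^+$) causes no issue — but these are immediate from $n\in\N$. Alternatively, one could prove gapfreeness from scratch by the same parity-style argument used for the earlier $\{2,4,6\}$ example adapted to consecutive integers, showing $f_{w,\mu_\Sigma}(i)-f_{w,\mu_\Sigma}(i-1)\in\{1,\dots,n\}=\mu_\Sigma(\Sigma)$; but routing through Proposition~\ref{prop:step1} is shorter and is the intended use of the stepped-base-weights machinery.
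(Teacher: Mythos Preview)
Your proof is correct and follows essentially the same route as the paper: injectivity and alphabetic order are read off directly from the definition, and gapfreeness is obtained by observing that the base weights $1,2,\dots,n$ are stepped (with step $1$) and invoking Proposition~\ref{prop:step1}. The extra verification that $\mu_\Sigma$ is a genuine \wm{} and the handling of degenerate cases are harmless additions the paper omits.
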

\ifpaper 
\else    
\begin{proof}
	The standard \wm{} is gapfree by Proposition~\ref{prop:step1}, since it is 
	a \sumwm{} with stepped base weights. It is injective and alphabetically 
	ordered by definition.
\end{proof}
\fi

The definition of the equivalence on \wm s raises the question whether 
the standard weight measure is suitable as a representative for all 
gapfree, injective, and alphabetically ordered weight measures. If there 
were other equivalence classes of such \wm s then the standard \wm{} 
would merely represent \emph{one} of many choices. To answer this 
question we first present a peculiar property every gapfree \wm{} has 
and then present our main theorem on the equivalence class of 
the standard \wm.

\begin{lemma}
	\label{lem:cacbtok2}
	Let $\mu$ be an injective and alphabetically ordered \wm{} over $\Sigma$ 
	w.r.t. the monoid $A$.
	Let $\Sigma$ be strictly totally ordered by $<_\Sigma$ and let $\Sigma = 
	\{\ta_1,\dots,\ta_n\}$ with $n\in\N_{>2}$ and $\ta_1 <_\Sigma 
	\ta_2 <_\Sigma \dots <_\Sigma \ta_n$.
	If $\mu$ has no gap over any word of the form $\tc\ta\tc\tb$ where 
	$\ta<_\Sigma\tb<_\Sigma\tc\in\Sigma$ then 
	$\mu(\ta_i\ta_{i+x})=\mu(\ta_{i+y}\ta_{i+x-y})$ holds for all $i,x,y\in\N$ 
	with $y<x$ and $i+x\leq n$.
\end{lemma}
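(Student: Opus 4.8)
The plan is to reduce the statement to a single neighbouring swap of indices and then establish that swap by induction on the distance between the two indices, feeding the hypothesis the words $\ta_b\,\ta_a\,\ta_b\,\ta_{a+1}$. For the reduction: since $\mu$ is commutative on $\mu(\Sigma^\ast)$ we have $\mu(\ta_{i+y}\ta_{i+x-y})=\mu(\ta_{i+x-y}\ta_{i+y})$, so we may assume $2y\le x$; it then suffices to prove
\[
  \mu(\ta_a\ta_b)=\mu(\ta_{a+1}\ta_{b-1})\qquad\text{whenever }1\le a\text{ and }a+2\le b\le n,
\]
because applying this to the pairs $(i+t,\,i+x-t)$ for $t=0,\dots,y-1$ (each at distance $x-2t\ge 2$, with all indices in $[1,n]$ since $2y\le x$ and $i+x\le n$) telescopes to $\mu(\ta_i\ta_{i+x})=\mu(\ta_{i+1}\ta_{i+x-1})=\dots=\mu(\ta_{i+y}\ta_{i+x-y})$.

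I would prove the swap by induction on $d:=b-a\ge 2$. Fix $a$ and $b=a+d\le n$; as $d\ge 2$, the letters $\ta_a<_\Sigma\ta_{a+1}<_\Sigma\ta_b$ are pairwise distinct, so $w:=\ta_b\,\ta_a\,\ta_b\,\ta_{a+1}$ has the form $\tc\ta\tc\tb$ and by hypothesis $\mu$ has no gap over $w$. Using that $\mu$ is alphabetically ordered, together with the monotonicity and cancellation properties of weight measures (cf.\ Subsection~\ref{subsec:wposf}), the length-$2$ and length-$3$ factor weights of $w$ are realised by $\ta_{a+1}\ta_b$ and by $\ta_b\ta_a\ta_b$ respectively, i.e.\ $f_{w,\mu}(2)=\mu(\ta_{a+1})\circ_A\mu(\ta_b)$ and $f_{w,\mu}(3)=\mu(\ta_a)\circ_A\mu(\ta_b)\circ_A\mu(\ta_b)$ (here one uses $\mu(\ta_{a+1})<_A\mu(\ta_b)$, which follows from $a+1<b$ together with injectivity and alphabetical order). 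The no-gap condition at position $3$ then produces a letter $\te\in\Sigma$ with $f_{w,\mu}(3)=f_{w,\mu}(2)\circ_A\mu(\te)$; cancelling the common factor $\mu(\ta_b)$ leaves $\mu(\ta_a\ta_b)=\mu(\ta_{a+1}\te)$. Writing $\te=\ta_m$ and comparing this identity with the chain $\mu(\ta_a)<_A\mu(\ta_{a+1})<_A\mu(\ta_b)$ (cancelling once more) shows $\mu(\ta_a)<_A\mu(\ta_m)<_A\mu(\ta_b)$, hence $a+1\le m\le b-1$.

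It remains to exclude $m\le b-2$. If $m\le b-2$, then both pairs $(a+1,m)$ and $(a,m+1)$ lie in $[1,n]$, have the same index sum $a+1+m$, and are at distance at most $d-1<d$; the induction hypothesis (the swap for all distances below $d$, which by telescoping identifies the $\mu$-weight of any two equal-sum two-letter words whose indices are at distance $<d$) gives $\mu(\ta_{a+1}\ta_m)=\mu(\ta_a\ta_{m+1})$. Combined with $\mu(\ta_{a+1}\ta_m)=\mu(\ta_a\ta_b)$ and cancellation of $\mu(\ta_a)$, this forces $\mu(\ta_{m+1})=\mu(\ta_b)$, so $m+1=b$ by injectivity, contradicting $m\le b-2$. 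Hence $m=b-1$, i.e.\ $\mu(\ta_a\ta_b)=\mu(\ta_{a+1}\ta_{b-1})$, which closes the induction; for $d=2$ the bound $a+1\le m\le b-1$ already forces $m=a+1=b-1$, so the base case needs no appeal to the hypothesis.

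I expect the main obstacle to be bookkeeping rather than anything conceptual: first, verifying that $f_{w,\mu}(2)$ and $f_{w,\mu}(3)$ are realised by the stated factors --- precisely where alphabetical order and the monotonicity of weight measures enter, and where dropping either hypothesis would break the proof --- and second, keeping the index ranges under control so that every two-letter word invoked via the induction hypothesis genuinely has its indices at distance strictly below $d$ and inside $[1,n]$. The repeated cancellations of common factors in $\mu(\Sigma^\ast)$ (justified by the increasing property together with monotonicity, so that $\circ_A$ is cancellative on $\mu(\Sigma^\ast)$) should be recorded once before the induction.
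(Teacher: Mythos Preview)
Your argument is correct, but it is organised differently from the paper's. Both proofs feed the no-gap hypothesis the word $\ta_{i+2}\ta_i\ta_{i+2}\ta_{i+1}$ (your $\ta_b\ta_a\ta_b\ta_{a+1}$) and extract from $f_{w,\mu}(3)=f_{w,\mu}(2)\circ_A\mu(\ta_m)$ the relation $\mu(\ta_a\ta_b)=\mu(\ta_{a+1}\ta_m)$ together with the bound $a+1\le m\le b-1$. The divergence is in the induction. The paper inducts on $x$ and uses the no-gap hypothesis \emph{only} in the base case $x=2$ (where the bound already forces $m=b-1$); for $x>2$ it never touches the hypothesis again but instead combines two instances of the induction hypothesis, one at $x-1$ (same $y$) and one at $y+1$ (shifted to $j=i+x-y-1$), via the three-letter word $\ta_i\ta_{i+x-1}\ta_{i+x}$. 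You instead first reduce to the single swap $\mu(\ta_a\ta_b)=\mu(\ta_{a+1}\ta_{b-1})$ and then induct on $d=b-a$, invoking the no-gap hypothesis at \emph{every} $d$ to obtain $\ta_m$, and using the induction hypothesis (one swap at distance $m+1-a<d$) to exclude $m\le b-2$. The paper's organisation makes visible that the $x=2$ instance of the hypothesis already suffices, which is conceptually a little sharper; your route is more direct and avoids the three-letter bookkeeping, at the cost of calling the hypothesis repeatedly. Either way the compatibility and cancellation in the ordered monoid that you flag are used implicitly in the paper as well, so your caveat is appropriate.
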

\ifpaper 
\else    
\begin{proof}
	By Induction over $x$. The case for $x=1$ is trivial.
	
	Firstly consider the case $x=2$. 
	W.l.o.g. let $i=1$, so in this case we show that 
	$\mu(\ta_1\ta_3)=\mu(\ta_2\ta_2)$ holds.
	Consider $u=\ta_3\ta_1\ta_3\ta_2$.
	Assuming $\mu$ has no gaps over words of this form and since 
	$f_{u,\mu}(3)=\mu(\ta_3\ta_1\ta_3)$ and $f_{u,\mu}(2)=\mu(\ta_3\ta_2)$ hold 
	we know there exists some $z\in[n]$ such that 
	$\mu(\ta_1\ta_3)=\mu(\ta_2\ta_z)$.
	Since $\mu(\ta_i)<_A\mu(\ta_{i+1})$ we have
	$\mu(\ta_1)<_A\mu(\ta_z)<_A\mu(\ta_3)$.
	Therefore $z=2$ and $\mu(\ta_1\ta_3) = \mu(\ta_2\ta_2)$ hold.
	
	Secondly consider $x>2$. Let w.l.o.g $y\leq\lfloor\frac{x}{2}\rfloor$.
	By induction we assume the claim holds for all smaller $x$, e.g.
	\begin{inparaenum}[(1)]
		\item $\mu(a_{i}a_{i+x-1})=\mu(a_{i+y}a_{i+x-y-1})$ holds and also
		\item 
		$\mu(a_{j}a_{j+y+1})=\mu(a_{j+1}a_{j+y})$ 
		where $j=i+x-y-1$ holds since 
	$y+1\leq\lfloor\frac{x}{2}\rfloor+1<x$.
	\end{inparaenum}
	By (1) we have
	$\mu(a_{i}a_{i+x-1}a_{i+x})=\mu(a_{i+y}a_{i+x-1-y}a_{i+x})$ and by 
	(2) rewritten as $\mu(a_{i+x-y-1}a_{i+x})=\mu(a_{i+x-y}a_{i+x-1})$
	we get $\mu(a_{i+y}a_{i+x-1-y}a_{i+x}) = 
	\mu(a_{i+y}a_{i+x-y}a_{i+x-1})$. Therefore 
	$\mu(a_{i}a_{i+x})=\mu(a_{i+y}a_{i+x-y})$ holds.
	\qed
\end{proof}

\fi

\begin{theorem}
	\label{the:threeEQ}
	Let $\mu$ be a non-binary, injective, and alphabetically ordered \wm{} over 
	the alphabet $\Sigma$ which is strictly ordered by $<_\Sigma$.
	The following statements are equivalent:
	\\
	1. $\mu$ is gapfree.\\
	2. 	$\mu$ has no gap over any word of the form $\tc\ta\tc\tb$ where 
	$\ta<_\Sigma\tb<_\Sigma\tc\in\Sigma$.\\
	3. $\mu$ is equivalent to the standard \wm{} $\mu_\Sigma$.
\end{theorem}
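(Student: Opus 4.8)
The plan is to prove the cycle $(1)\Rightarrow(2)\Rightarrow(3)\Rightarrow(1)$. The implication $(1)\Rightarrow(2)$ is immediate, since a gapfree \wm{} has no gap over \emph{any} word, in particular none over words of the form $\tc\ta\tc\tb$. For $(3)\Rightarrow(1)$ I would argue that $\mu_\Sigma$ is gapfree by Lemma~\ref{lem:STDWM}, and that gapfreeness is inherited by equivalent \wm s by Lemma~\ref{prop:EQWMS}; hence if $\mu$ is equivalent to $\mu_\Sigma$ it is gapfree.

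The real content is $(2)\Rightarrow(3)$. As $\mu$ is non-binary and injective, $\abs{\Sigma}=n>2$, so Lemma~\ref{lem:cacbtok2} applies and, together with assumption $(2)$, yields the two-letter identity $\mu(\ta_i\ta_j)=\mu(\ta_{i'}\ta_{j'})$ for all valid indices with $i+j=i'+j'$. The first step is to lift this to all words: \emph{for $v,w\in\Sigma^*$ with $\abs{v}=\abs{w}$ and $\mu_\Sigma(v)=\mu_\Sigma(w)$ one has $\mu(v)=\mu(w)$.} Since $\mu$ is a morphism and $\circ_A$ is commutative on $\mu(\Sigma^*)$, $\mu(v)$ depends only on the multiset of letters of $v$. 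Consider the balancing move that, in a word containing a letter $\ta_p$ and a letter $\ta_q$ with $q\ge p+2$, replaces one occurrence of $\ta_p$ and one of $\ta_q$ by $\ta_{p+1}$ and $\ta_{q-1}$; this is an instance of the identity above (set $i=p$, $x=q-p$, $y=1$ in Lemma~\ref{lem:cacbtok2}, all four indices lying in $[1,n]$), so it leaves $\mu(v)$ unchanged, while it strictly decreases the sum of the squares of the letter indices. Hence after finitely many moves one reaches a word all of whose letters have index $\lfloor\mu_\Sigma(v)/\abs{v}\rfloor$ or $\lceil\mu_\Sigma(v)/\abs{v}\rceil$, a multiset uniquely determined by $\abs{v}$ and $\mu_\Sigma(v)$; this proves the claim.

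Next I would prove the strict version: for $v,w$ of the same length $m$ with $\mu_\Sigma(v)<\mu_\Sigma(w)$ one has $\mu(v)<_A\mu(w)$. Since $\mu_\Sigma(v)<\mu_\Sigma(w)\le nm$, the word $v$ contains a letter $\ta_j$ with $j<n$; replacing one occurrence of it by $\ta_{j+1}$ gives a word $v'$ of length $m$ with $\mu_\Sigma(v')=\mu_\Sigma(v)+1$ and, because $\mu(\ta_j)<_A\mu(\ta_{j+1})$ and $\mu$ is monotone under replacing a letter by an alphabetically larger one, $\mu(v)<_A\mu(v')$. Iterating (which stays possible as long as the current $\mu_\Sigma$-value is below $\mu_\Sigma(w)\le nm$) produces a word $v^\ast$ with $\mu_\Sigma(v^\ast)=\mu_\Sigma(w)$, hence $\mu(v^\ast)=\mu(w)$ by the first step, and chaining the strict inequalities gives $\mu(v)<_A\mu(w)$. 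Together with the equal-weight case and the trichotomy of the order on $\N$ and totality of $<_A$, this gives, for words of the same length, $\mu(v)<_A\mu(w)$ iff $\mu_\Sigma(v)<\mu_\Sigma(w)$, i.e.\ $\mu$ is equivalent to $\mu_\Sigma$ in the sense of Definition~\ref{defeq}.

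The main obstacle I expect is precisely this $(2)\Rightarrow(3)$ step: promoting the two-letter identity of Lemma~\ref{lem:cacbtok2} to arbitrary words (making the balancing/transport argument rigorous and, importantly, checking that the intermediate letter indices never leave $[1,n]$, so that Lemma~\ref{lem:cacbtok2} remains applicable), plus the single additional ingredient needed for the strict case, namely that $\mu$ is strictly monotone under replacing a letter by an alphabetically larger one --- a basic property of injective, alphabetically ordered \wm s, of the kind collected in Subsection~\ref{subsec:wposf} and already used implicitly in the proof of Lemma~\ref{lem:cacbtok2}.
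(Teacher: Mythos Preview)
Your proof is correct, and the implications $(1)\Rightarrow(2)$ and $(3)\Rightarrow(1)$ match the paper exactly. For $(2)\Rightarrow(3)$, however, you take a genuinely different route.

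The paper proves the biconditional $\mu(v)<_A\mu(w)\Leftrightarrow\mu_\Sigma(v)<\mu_\Sigma(w)$ directly by induction on the common length~$k$: it strips letters shared by $v$ and $w$, sorts the remaining letters increasingly, and in the nontrivial subcase uses Lemma~\ref{lem:cacbtok2} once to rewrite $\mu(\ta_{i_1}\ta_{i_k})$ as $\mu(\ta_{j_1}\ta_{i_1+i_k-j_1})$, so that the first letters of (the rewritten) $v$ and $w$ coincide; cancelling that common letter reduces to length $k-1$. Your argument instead separates the equality and strict cases: you push every word to a canonical balanced multiset via repeated two-letter moves $\{\ta_p,\ta_q\}\mapsto\{\ta_{p+1},\ta_{q-1}\}$ (each an instance of Lemma~\ref{lem:cacbtok2}), obtaining $\mu_\Sigma(v)=\mu_\Sigma(w)\Rightarrow\mu(v)=\mu(w)$, and then handle strict inequality by single-letter increments. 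Your approach is more constructive and yields an explicit normal form for each $\mu_\Sigma$-weight class, at the cost of needing a termination measure (sum of squares) and a separate strict step; the paper's induction is terser and handles both directions at once, but relies on the same implicit order-compatibility/cancellation in $A$ that you flag in your final paragraph. Both arguments ultimately hinge on Lemma~\ref{lem:cacbtok2} in the same way.
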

\ifpaper 
\else    
\begin{proof}
	(1. $\Rightarrow$ 2.) Follows immediately from the definition of gapfree 
	\wm s.
	
	(2. $\Rightarrow$ 3.) Let 
	$\Sigma = \{a_1,\dots,a_n\}$ with $n\in\N_{>2}$ and $\ta_1 <_\Sigma \ta_2 
	<_\Sigma \dots <_\Sigma \ta_n$.
	Let $k\in\N$ and $v=\ta_{i_1}\dots\ta_{i_k}\in\Sigma^k$ and 
	$w=\ta_{j_1}\dots\ta_{j_k}\in\Sigma^k$ for all $i_\ell,j_\ell\in[n]$ and 
	$\ell\in[k]$.
	Now $\mu_\Sigma(v)=\sum_{\ell=1}^{k}i_\ell$ and 
	$\mu_\Sigma(w)=\sum_{\ell=1}^{k}j_\ell$ 
	hold.
	
	We show $\mu(v) <_A \mu(w) \Leftrightarrow \mu_\Sigma(v) < \mu_\Sigma(w)$ 
	holds by induction over $k$:
	
	Case $k=1$: Trivial since $\mu$ and $\mu_\Sigma$ are alphabetically ordered.
	
	For the further cases assume w.l.o.g. that $v$ and $w$ share no letters and 
	let their letters be ordered increasingly, i.e. let $i_\ell\neq j_{\ell'}$ 
	and 
	$i_\ell\leq i_{\ell+1}$ and $j_l\leq j_{\ell+1}$ for all 
	$\ell,\ell'\in[k-1]$, furthermore 
	w.l.o.g. let $i_1<j_1$.
	
	Case $k>2$: We consider two subcases dependent on $i_k$ and $j_1$. Notice 
	$j_1=i_k$ can not occur since $v$ and $w$ share no letters.
	
	Subcase $i_k<j_1$: In this case we know $i_\ell<j_{\ell'}$ for all 
	$\ell,\ell'\in[k]$. 
	Therefore $\mu(v)<_A\mu(w)$ and $\mu_\Sigma(v)<_A\mu_\Sigma(w)$ both hold 
	immediately.
	
	Subcase $j_1<i_k$: In this case we choose $x=i_k-i_1$ and $y=j_1-i_1$, 
	consequently $y<x$ holds. By Lemma~\ref{lem:cacbtok2} we have 
	$\mu(v)=\mu(\ta_{i_1}\dots\ta_{i_{k-1}}\ta_{i_1+x})=
	\mu(\ta_{i_1+y}\ta_{i_2}\dots\ta_{i_{k-1}}\ta_{i_1+x-y})$. 
	The claim follows since $i_1+y=j_1$ and by the induction hypotheses
	we have $\mu(v')<_A\mu(w') \Leftrightarrow \mu_\Sigma(v')<\mu_\Sigma(w')$ 
	for 
	$v'=v[2\dots k]$ and $w'=w[2\dots k]$.
	
	(3. $\Rightarrow$ 1.) Follows immediately by Proposition~\ref{prop:EQWMS} 
	since the standard \wm{} is gapfree.
	\qed
\end{proof}
\fi

Notice, with (1. $\Leftrightarrow$ 2.) in the above we know that any gapful 
\wm{} 
\begin{wrapfigure}[9]{R}{.35\linewidth}
	\vspace*{-0.35cm}
	\centering
	\begin{tikzpicture}[->,>=stealth',shorten >=1pt,auto,node distance=0.5cm,
	semithick]
		\node[rectangle,rounded corners=3,align=center,draw=black,minimum width=0.25cm,minimum height=0.25cm,font=\tiny,fill=white]   (n5)  at (0,0)      {5};
		\node[rectangle,rounded corners=3,align=center,draw=black,minimum width=1.25cm,minimum height=0.7cm,node distance=0.475cm,fill=white,font=\ttfamily]   (a5x)  [below of=n5]          {c\,c\,a~~\textcolor{black!60}{b}c};

		\node[rectangle,rounded corners=3,align=center,draw=black,minimum width=0.25cm,minimum height=0.25cm,font=\tiny,fill=white]   (n4)  at (0.925,-0.1)      {4};
		\node[rectangle,rounded corners=3,align=center,draw=black,minimum width=1cm,minimum height=0.55cm,node distance=0.375cm,fill=white,font=\ttfamily]   (a4)  [below of=n4]          {bc~~c\,b};

		\node[rectangle,rounded corners=3,align=center,draw=black,minimum width=1.25cm,minimum height=0.7cm,node distance=0.475cm,fill=white,opacity=0.5,font=\ttfamily]   (a5)  [below of=n5]          {c\,c\,a~~bc};
		\node[rectangle,rounded corners=3,align=center,draw=black,minimum width=1.25cm,minimum height=0.7cm,node distance=0.475cm,font=\ttfamily]   (a5)  [below of=n5]          {c\,c\,a~~\textcolor{black!60}{b}c};

		\path (a4) edge [out=330,in=290,looseness=1.6,dashed] node [] {}   (a5);
	\end{tikzpicture}
	\vspace*{-0.5cm}

    \caption{Visualisation of the \factWf{}'s gap for $w = \mathtt{ccabccb}$.
        \label{fig:factwei}}
\end{wrapfigure}

\noindent over $\Sigma=\{\ta,\tb,\tc\}$ already has a gap over $\mathtt{bcac}$. For 
instance, consider the sum weight measure $\mu$ over $\Sigma$ with $\mu(\ta)=1$, 
$\mu(\tb)=2$, and $\mu(\tc)=4$.  We see that $\mu$ is gapful, since it has a gap over the word 
$w=\mathtt{ccabccb}$ at index $5$, witnessed by the \factWf{} 
$f_{w,\mu}=(4,8,10,12,15,19,21)$ and the fact that there is no letter with 
weight $15-12=3$. We visualise this gap in Figure~\ref{fig:factwei}.
However, we already have a gap within the even shorter word 
$\mathtt{bcac}$ at index $3$, witnessed by $f_{\mathtt{bcac},\mu}=(4,6,9,11)$.

On the other hand, with (1. $\Leftrightarrow$ 3.) in Theorem~\ref{the:threeEQ} 
we immediately see there only exists \emph{one} equivalence class of gapfree, 
injective, and alphabetically ordered \wm s w.r.t. the same alphabet, 
justifying our choice of $\mu_\Sigma$ as the standard \wm.
Also, since by transitivity all gapfree, injective, and alphabetically 
ordered \wm s w.r.t. to the same alphabet are equivalent, they therefore yield 
the same \pn{} form (by Proposition \ref{prop:EQWMEQPNF}).
In other words, assuming a strictly totally ordered alphabet, every 
word has exactly one weighted \pn{} form that is independent of any chosen 
gapfree, injective, and alphabetically ordered \wm{} over the same alphabet.
With that in mind, paralleling the work presented by Fici and Lipták 
in~\cite{Fici2011}, we introduce \emph{the weighted \pn{} form} of a word 
$w\in\Sigma^*$.

\begin{definition}
	Let $\Sigma$ be a strictly totally ordered alphabet and let $w\in\Sigma^*$.
	We say the $\mu_\Sigma$-\pn{form} is \emph{the weighted \pn{} form} of $w$ 
	or simply the \emph{\pn{} form} of $w$.
\end{definition}

For instance, consider the strictly totally ordered alphabet 
$\Sigma=\{\ta, \tb, \tc\}$, with the standard \wm{} $\mu_\Sigma$ such that  
$\mu_\Sigma(\ta)=1,\mu_\Sigma(\tb)=2$, $\mu_\Sigma(\tc)=3$. The weighted \pn{} 
form of $\mathtt{bcac}$ is $\mathtt{cbbb}$, since 
$\mathcal{P}_{\mu_\Sigma}(\mathtt{bcac})=\{\mathtt{cbbb}\}$ holds as seen 
in previous examples.
With Theorem~\ref{the:threeEQ} the same also holds for any other gapfree, 
injective, and alphabetically ordered \wm{}.

\begin{remark}	
	By Theorem~\ref{the:threeEQ} we immediately see that the gapfree 
	property of a weight measure is decidable. Since any gapful \wm{} 
	already has a gap over a word of length four using three letters,
	one can check whether a \wm{} is gapfree in the following way:
	test for all $\binom{|\Sigma|}{3}$ possible 
	enumerations of three letters $\ta<_{\Sigma}\tb<_{\Sigma}\tc$ whether there 
	exist an $\tx\in\Sigma$ with $\mu(\tb\tx) = \mu(\ta\tc)$. 
	Notice, that $\Sigma$ is finite and we obtain a running time of 
	$\mathcal{O}(|\Sigma|^4)$.
\end{remark}

\section{Conclusions}
In this work we presented the generalisation of \pnity{} 
on binary alphabets as introduced by \cite{Fici2011}
to arbitrary alphabets by applying weights to the letters 
and comparing the weight of a factor with the weight of the prefix 
of the same length.\looseness=-1

Since one of the main properties of binary \pnity{},
namely the existence of a unique \pn{} form, does 
not hold for weighted \pnity{} with arbitrary \wm s,
we investigated necessary restrictions to obtain 
a unique \pn{} form even in the generalised setting.
Here, it is worth noticing that we did not only generalise
the size of the alphabet but also the weights are rather general:
they belong to any (totally ordered) monoid.
This is of interest because some peculiarities do not occur
if $\N$ or $\N_0$ are chosen. \looseness=-1
In Section~\ref{sec:wpn} we proved that there always exists 
a unique \pn{} form if the \wm{} is gapfree and injective.
In Section~\ref{sec:gwm} we further demonstrated that all gapfree \wm s over 
the same alphabet are equivalent and therefore every word has the same weighted 
\pn{} form w.r.t. each of them.
Which led to the definition of the standard \wm{} and ultimately to a unique
\pn{} form in the generalised setting that exists independent of chosen \wm s.
Additionally, we showed that \emph{gapfreeness} as a property of \wm s is 
decidable and can easily be checked in time $\mathcal{O}(|\Sigma|^4)$.

However, the exact behaviour of the weighted \pn{} form,
or generally \factorWEqnt{} words,
especially regarding changes in their Parikh vectors,
remains an open problem.
Moreover, a reconnection of weighted \pnity{} to the initial problem 
of indexed jumbled pattern matching would be of some interest and might
prove useful when investigating pattern matching problems w.r.t.
a non-binary alphabet.

Finally, we like to mention that an easier, but weaker, approach to work with 
prefix normality on arbitrary alphabets can be achieved by considering a subset 
$X$ of $\Sigma$: each letter $\ta$ in a word is treated like a $\tone$ if 
$\ta\in X$ and $\tzero$ otherwise, which can also be expressed by weighted 
\pnity{} (see Subsection~\ref{subsec:subset} in the Appendix). \looseness=-1

\bibliography{refs}

 \newpage

\appendix
\section{Further Insights}
\label{sec:further}
In this section we present results which are not necessarily important to 
understand the weighted \pnity{} but which give a more detailed insight,
e.g. in the relation of weighted prefix normality and the original 
\pnity{}, introduced in \cite{Fici2011}.
Therefore, we start this section with the adaption of the \emph{position 
function} into the weighted setting and present the analogous results. 
Afterwards, we prove that the converse of Proposition~\ref{prop:step1} does not 
hold true in general, i.e. not every gapfree \wm{} has stepped based weights. 
Following that, we present some insights on how to obtain an injective \wm{} 
from a non-injective one. We end this part with the na\"ive alternative 
approach to generalise the binary \pnity, namely by \emph{subset \pnity}, and 
prove that this generalisations is already covered by weighted \pnity.

\subsection{Weighted Position Functions}
\label{subsec:wposf}
In the following we define a more general version of 
the binary position function defined in~\cite{Fici2011}.
With the binary context in mind this function is defined to 
give the position of the \nth{$i$} $\tone$ in the word $w$,
i.e. $\pos_w(i) := \min\{k \mid p_w(k)=i\}$
for all $i\in[p_w(|w|)]$ and $w\in\{\tzero,\tone\}^*$.
However, in the weighted context for most words not 
every weight corresponds to a prefix with exactly the same weight.
Thus, we do not define a single exact position function,
but two functions which together 
enclose the position within a word where a certain weight is reached.
Only if both functions return the same position for some word and weight,
that word's prefix up to that position has exactly that weight.

\begin{definition}
    \label{def:posWf}
    Let $w\in\Sigma^*$ and let $\mu$ be a \wm{} 
    over $\Sigma$ w.r.t. the monoid $A$.
    We define the \emph{\maxposWf{}}  and \emph{\minposWf{}} respectively by
    $\maxpos_{w,\mu} : A \to [|w|]_0,\,
    i \mapsto \max \{k \in [|w|]_0 \mid p_{w,\mu}(k) \leq i\}$ and
    $\minpos_{w,\mu} :  A \to [|w|]_0,\,
    i \mapsto \min \{k \in [|w|]_0 \mid p_{w,\mu}(k) \geq i\}$.
\end{definition}

By this definition, we are able to prove similar statements to the binary prefix 
normality.

\noindent
\begin{lemma}
    \label{lem:basics}
    Let $\mu$ be a \wm{} 
    over $\Sigma$ w.r.t. a monoid $A$,
    $w\in\Sigma^*$, $j,k \in [|w|]_0$ 
    and $x,y\in A$.
    Then $p_{w,\mu}$ and $f_{w,\mu}$ have the following properties:
    
    \begin{enumerate}[(1)]
        \item
        $j < k$ iff $f_{w,\mu}(j) \prec f_{w,\mu}(k)$ 
        iff $p_{w,\mu}(j) \prec p_{w,\mu}(k)$, 
        
        \item
        $p_{w,\mu}(\maxpos_{w,\mu}(x)) \preceq
        x \preceq p_{w,\mu}(\minpos_{w,\mu}(x))$, 
        
        \item
        $\maxpos_{w,\mu}(p_{w,\mu}(k))
        = \minpos_{w,\mu}(p_{w,\mu}(k)) = k$,
        
        \item
        if $\maxpos_{w,\mu}(x) < j$ then $x \prec p_{w,\mu}(j)$ and 
        if $j < \minpos_{w,\mu}(x)$ then $p_{w,\mu}(j) \prec x$,         
        
        \item
        $\maxpos_{w,\mu}(x) \leq \minpos_{w,\mu}(x)$,
        
        \item
        $x\prec y$ implies $\maxpos_{w,\mu}(x) \leq \maxpos_{w,\mu}(y)$
        as well as $\minpos_{w,\mu}(x) \leq \minpos_{w,\mu}(y)$.
    \end{enumerate}    
\end{lemma}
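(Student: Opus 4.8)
The plan is to verify each of the six items in turn, leaning on the increasing property of the weight measure and on the fact that $p_{w,\mu}$ is a strictly increasing function from a finite chain $[|w|]_0$ into $A$.

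\textbf{Items (1)--(3).} For (1), note that $p_{w,\mu}(k) = \mu(w[1\dots k])$, so if $j < k$ then $w[1\dots j]$ is a proper prefix of $w[1\dots k]$ and the increasing property gives $p_{w,\mu}(j) \prec p_{w,\mu}(k)$ (where I write $\prec$ for $<_A$). The analogous statement for $f_{w,\mu}$ follows because any factor realising $f_{w,\mu}(j)$ can be extended by one letter to a factor of length $j{+}1$ inside $w$, again applying the increasing property; strict monotonicity then propagates from $j$ to $k$. The converse directions of the ``iff'' hold because $[|w|]_0$ is totally ordered and both functions are (strictly) monotone, so $j \not< k$ forces $j \geq k$ and hence $f_{w,\mu}(j) \succeq f_{w,\mu}(k)$. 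Item (2) is essentially unwinding the definitions: $\maxpos_{w,\mu}(x)$ lies in the set $\{k \mid p_{w,\mu}(k) \preceq x\}$, so $p_{w,\mu}(\maxpos_{w,\mu}(x)) \preceq x$ directly; symmetrically $p_{w,\mu}(\minpos_{w,\mu}(x)) \succeq x$. For (3), since $p_{w,\mu}$ is strictly increasing, $p_{w,\mu}(m) \preceq p_{w,\mu}(k)$ holds exactly when $m \leq k$, so the maximum such $m$ is $k$; the $\minpos$ case is the mirror image.

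\textbf{Items (4)--(6).} Item (4) is the contrapositive packaging of the extremality in the definitions: if $\maxpos_{w,\mu}(x) < j$ then $j$ is \emph{not} in $\{k \mid p_{w,\mu}(k) \preceq x\}$, so $p_{w,\mu}(j) \not\preceq x$, i.e. $x \prec p_{w,\mu}(j)$ by totality; similarly for the $\minpos$ half. For (5), suppose for contradiction $\minpos_{w,\mu}(x) < \maxpos_{w,\mu}(x)$. Applying (4) with $j = \minpos_{w,\mu}(x)$ gives $p_{w,\mu}(\minpos_{w,\mu}(x)) \prec x$, but by (2) we also have $x \preceq p_{w,\mu}(\minpos_{w,\mu}(x))$, a contradiction. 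Item (6): if $x \prec y$ then $\{k \mid p_{w,\mu}(k) \preceq x\} \subseteq \{k \mid p_{w,\mu}(k) \preceq y\}$, so the maxima satisfy $\maxpos_{w,\mu}(x) \leq \maxpos_{w,\mu}(y)$; and $\{k \mid p_{w,\mu}(k) \succeq y\} \subseteq \{k \mid p_{w,\mu}(k) \succeq x\}$ gives $\minpos_{w,\mu}(x) \leq \minpos_{w,\mu}(y)$.

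The only genuinely delicate point I anticipate is the treatment of $f_{w,\mu}$ in item (1): one must be careful that the factor witnessing $f_{w,\mu}(j)$ can really be extended \emph{within $w$} to a length-$(j{+}1)$ factor — this is fine as long as $j < |w|$, and the boundary case $j = |w|$ is vacuous since then there is no $k > j$ in the domain. Everything else is a routine consequence of strict monotonicity of $p_{w,\mu}$ together with the totality of $<_A$; these mirror the binary statements in \cite{Fici2011} almost verbatim, the substantive difference being that here the inequalities are strict (no letter has weight $\neutr_A$), which if anything simplifies the arguments.
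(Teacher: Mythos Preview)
Your approach matches the paper's almost line for line: both arguments rest on the strict monotonicity of $p_{w,\mu}$ (from the increasing property) and then read items (2)--(4) directly off the $\max$/$\min$ definitions. Your treatment of (6) via set inclusion is a slightly cleaner variant of the paper's contradiction argument through (4) and (2), but the content is the same.

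There is, however, a slip in your proof of (5). You assume $\minpos_{w,\mu}(x) < \maxpos_{w,\mu}(x)$ and then claim that ``applying (4) with $j = \minpos_{w,\mu}(x)$ gives $p_{w,\mu}(\minpos_{w,\mu}(x)) \prec x$''. But neither clause of (4) applies with that choice of $j$: the first clause needs $\maxpos_{w,\mu}(x) < j$, the second needs $j < \minpos_{w,\mu}(x)$, and under your hypothesis both fail. Worse, the conclusion $p_{w,\mu}(\minpos_{w,\mu}(x)) \prec x$ directly contradicts the inequality $x \preceq p_{w,\mu}(\minpos_{w,\mu}(x))$ that you yourself quote from (2) in the next breath. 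The paper's route is to combine (1) and (2): from (2) one has $p_{w,\mu}(\maxpos_{w,\mu}(x)) \preceq x \preceq p_{w,\mu}(\minpos_{w,\mu}(x))$, and if $\minpos_{w,\mu}(x) < \maxpos_{w,\mu}(x)$ then (1) forces $p_{w,\mu}(\minpos_{w,\mu}(x)) \prec p_{w,\mu}(\maxpos_{w,\mu}(x))$, which is the desired contradiction.
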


\newpage
\begin{proof}
    \quad
    \begin{enumerate}[(1)]
        
        \item 
         With the increasing property of \wm s the equivalences
         follow from the definition of the \factWf{}
         as the maximum over all factors and the fact
         that every prefix itself 
         is a prefix of every longer prefix.
        
        \item
         Directly follows by the definition of the max-position and \minposWf{}        
         as $\maxpos_{w,\mu}(x) = \max \{i \in [|w|]_0 \mid p_{w,\mu}(i) \preceq x\}$ and $\minpos_{w,\mu}(x) = \min \{i \in [|w|]_0 \mid x \preceq p_{w,\mu}(i)\}$.
        
        \item Follows by the definition of the max-position and \minposWf{}
        and the fact that the \prefWf{} is strictly increasing (see (1)).
        
        \item Follows by the definition of the \maxposWf{} as a maximum
        and the \minposWf{} as a minimum.
        
        \item
        Follows by (1) and (2). 
        
        \item Suppose otherwise, so let $x\prec y$ 
        but $\maxpos_{w,\mu}(x) > \maxpos_{w,\mu}(y)$ holds.
        With (4), we then have $y \prec p_{w,\mu}(\maxpos_{w,\mu}(x))$ and
        with (2) we have $p_{w,\mu}(\maxpos_{w,\mu}(x)) \preceq x$.
        Together these are a contradiction to $x\prec y$.
        Now suppose $\minpos_{w,\mu}(x) > \minpos_{w,\mu}(y)$ holds.
        Again with (4), 
        we have $p_{w,\mu}(\minpos_{w,\mu}(y)) \prec x$ and
        with (2) we have $y \preceq p_{w,\mu}(\minpos_{w,\mu}(y))$.
        Which is again a contradiction to $x\prec y$.
        
        \qed
    \end{enumerate}
    \renewcommand{\qed}{}
\end{proof}

\noindent
\begin{lemma}
    \label{lem:f1}
    For a \wm{} $\mu$  
    over the alphabet $\Sigma$ w.r.t. a monoid $(A,\circ)$
    and $w\in\Sigma^*$, 
    $f_{w,\mu}(j) \preceq f_{w,\mu}(i) \circ f_{w,\mu}(j-i)$ holds
    for all $i,j \in [|w|]_0$ with $i < j$.
\end{lemma}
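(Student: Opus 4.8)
The plan is to show that the maximum weight of a length-$j$ factor can be decomposed through any split point. First I would fix $i,j \in [|w|]_0$ with $i < j$ and let $u \in \Fact_j(w)$ be a factor realising $f_{w,\mu}(j) = \mu(u)$. Write $u = u_1 u_2$ where $|u_1| = i$ and $|u_2| = j - i$; this is possible precisely because $i < j$ so both pieces have nonnegative length. Since $u_1$ is a factor of $w$ of length $i$, we have $\mu(u_1) \preceq f_{w,\mu}(i)$ by the definition of $f_{w,\mu}$ as a maximum over $\Fact_i(w)$; similarly $\mu(u_2) \preceq f_{w,\mu}(j-i)$.

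Next I would combine these two inequalities using compatibility of $\circ$ with the order $<_A$ on $\mu(\Sigma^*)$. Because $\mu$ is a morphism, $\mu(u) = \mu(u_1) \circ \mu(u_2)$. The increasing property of a \wm{} guarantees that $\circ$ is strictly monotone in each argument on the relevant submonoid (adding a nonempty suffix strictly increases the weight, and $\circ$ is commutative on $\mu(\Sigma^*)$ by the remark following Definition~\ref{def:wm}), so from $\mu(u_1) \preceq f_{w,\mu}(i)$ we get $\mu(u_1) \circ \mu(u_2) \preceq f_{w,\mu}(i) \circ \mu(u_2)$, and from $\mu(u_2) \preceq f_{w,\mu}(j-i)$ we get $f_{w,\mu}(i) \circ \mu(u_2) \preceq f_{w,\mu}(i) \circ f_{w,\mu}(j-i)$. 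Chaining these yields $f_{w,\mu}(j) = \mu(u) \preceq f_{w,\mu}(i) \circ f_{w,\mu}(j-i)$, which is the claim.

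The only delicate point is the monotonicity of $\circ$ with respect to $\preceq$: the definition of a \wm{} directly gives only $\mu(w) <_A \mu(wv)$ for $v \in \Sigma^+$, i.e.\ strict monotonicity when the added factor is a word-weight. I expect this to be the main obstacle, and I would handle it by noting that every element appearing here ($\mu(u_1)$, $\mu(u_2)$, $f_{w,\mu}(i)$, $f_{w,\mu}(j-i)$) lies in $\mu(\Sigma^*)$, so the relevant comparisons are between word-weights; then $a \preceq b$ in $\mu(\Sigma^*)$ implies $a \circ c \preceq b \circ c$ for $c \in \mu(\Sigma^*)$, because if $a = \mu(p)$ and $b = \mu(q)$ with $\mu(p) \le_A \mu(q)$, one can compare $\mu(pc') = \mu(p)\circ c$ and $\mu(qc') = \mu(q)\circ c$ (writing $c = \mu(c')$) using the increasing property and commutativity — equality is preserved since $\mu(p)=\mu(q)$ forces $\mu(pc')=\mu(qc')$, and strict inequality is preserved likewise, so one never crosses the order. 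With that monotonicity lemma in hand, the decomposition argument above is routine.
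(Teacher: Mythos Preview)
Your approach is essentially the paper's: pick a length-$j$ factor $u$ realising $f_{w,\mu}(j)$, split it as $u_1u_2$ with $|u_1|=i$, bound each piece by the corresponding $f_{w,\mu}$ value, and combine. The paper phrases this as a proof by contradiction but uses exactly the same decomposition and the same implicit monotonicity step.

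One remark on your ``delicate point'': your attempted derivation of $a\preceq b \Rightarrow a\circ c \preceq b\circ c$ from the increasing property alone does not actually go through. The increasing property only compares $\mu(w)$ with $\mu(wv)$; it says nothing about how two \emph{different} word-weights behave after multiplying by the same $c$, so ``strict inequality is preserved likewise'' is not justified by that axiom. Fortunately you do not need to derive it: in the paper $A$ is a \emph{totally ordered monoid}, which by the standard meaning already includes translation-invariance of the order ($a\preceq b \Rightarrow a\circ c \preceq b\circ c$). The paper's own proof uses this implicitly at the same step. So your monotonicity worry is resolved by the ambient definition rather than by the argument you sketched, and with that the proof is complete.
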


\begin{proof}
    Let $i,j\in[|w|]_0$ be indexes with $i<j$.
    Now suppose $f_{w,\mu}(j) \succ f_{w,\mu}(i) \circ f_{w,\mu}(j-i)$.
    Let $u\in\Fact_j(w)$ be a factor of $w$
    with $\mu(u) = f_{w,\mu}(j)$.
    Then by the definition of the \factWf, 
    $\mu(u[1\dots i]) \preceq f_{w,\mu}(i)$ and 
    $\mu(u[(i+1)\dots j]) \preceq f_{w,\mu}(j-i)$ both hold.
    And thus
    $f_{w,\mu}(j) \succ \mu(u[1\dots i]) \circ \mu(u[(i+1)\dots j]) = 
    \mu(u)$ holds.
    This is a contradiction because $u$ was chosen with $\mu(u) = f_{w,\mu}(j)$, so the original claim follows.
    \qed
\end{proof}

\noindent
\begin{lemma}
    \label{prop:eqivPN}
    For a \wm{} $\mu$ over the alphabet $\Sigma$
    w.r.t. a monoid $(A,\circ)$ and $w\in\Sigma^*$
    the following properties are equivalent:
    
    \begin{enumerate}[(1)]
        \item
        $w$ is $\mu$-\pn{},
        
        \item
        $p_{w,\mu}(j) \preceq p_{w,\mu}(i) \circ p_{w,\mu}(j-i)$
        for all $i,j \in [|w|]_0$ with $i < j$,
            
        \item
        $\minpos_{w,\mu}(\mu(v)) \leq |v|$
        for all $v \in \Fact(w)$,
               
        \item
        $\maxpos_{w,\mu}(a) + \minpos_{w,\mu}(b)
        \leq \minpos_{w,\mu}(a \circ b)$
        for all $a,b\in A$, with $a \circ b\preceq\mu(w)$.        
    \end{enumerate}
\end{lemma}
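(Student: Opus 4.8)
The plan is to establish the cyclic chain $(1)\Rightarrow(2)\Rightarrow(4)\Rightarrow(3)\Rightarrow(1)$, which makes the four statements equivalent. The links $(1)\Rightarrow(2)$, $(4)\Rightarrow(3)$ and $(3)\Rightarrow(1)$ are short; the real work sits in $(2)\Rightarrow(4)$. Throughout I would use the fact that a prefix of $w$ decomposes as a shorter prefix followed by a factor, i.e. $p_{w,\mu}(k+\ell)=p_{w,\mu}(k)\circ_A\mu(w[k+1\dots k+\ell])$ whenever $0\le k$ and $k+\ell\le|w|$, together with the routine properties collected in Lemma~\ref{lem:basics}.

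I would dispatch the three easy links first. For $(1)\Rightarrow(2)$: if $w$ is $\mu$-\pn{} then $p_{w,\mu}=f_{w,\mu}$, so Lemma~\ref{lem:f1} applied to $f_{w,\mu}$ is literally $p_{w,\mu}(j)\preceq p_{w,\mu}(i)\circ_A p_{w,\mu}(j-i)$. For $(4)\Rightarrow(3)$: write a factor as $v=w[k+1\dots k+\ell]$ with $\ell=|v|$ and apply $(4)$ with $a:=p_{w,\mu}(k)$ and $b:=\mu(v)$; then $a\circ_A b=p_{w,\mu}(k+\ell)\preceq\mu(w)$, so $(4)$ is applicable, and Lemma~\ref{lem:basics}(3) evaluates the two outer terms to $k$ and $k+\ell$, giving $\minpos_{w,\mu}(\mu(v))\le\ell$. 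For $(3)\Rightarrow(1)$: for each $i\in[|w|]$ choose $v\in\Fact_i(w)$ with $\mu(v)=f_{w,\mu}(i)$; by $(3)$ there is some $k\le i$ with $p_{w,\mu}(k)\succeq f_{w,\mu}(i)$, while also $p_{w,\mu}(k)\preceq f_{w,\mu}(k)\preceq f_{w,\mu}(i)$ since a prefix is a factor and $f_{w,\mu}$ is monotone (Lemma~\ref{lem:basics}(1)); hence $f_{w,\mu}(i)=p_{w,\mu}(k)\preceq p_{w,\mu}(i)\preceq f_{w,\mu}(i)$, so $p_{w,\mu}=f_{w,\mu}$.

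The remaining and most delicate step is $(2)\Rightarrow(4)$. Given $a,b$ with $a\circ_A b\preceq\mu(w)$, I would set $p:=\maxpos_{w,\mu}(a)$ and $q:=\minpos_{w,\mu}(b)$ and aim to show $p_{w,\mu}(p+q-1)\prec a\circ_A b$, which forces $\minpos_{w,\mu}(a\circ_A b)\ge p+q$ as required. The inputs are $p_{w,\mu}(p)\preceq a$ (definition of $\maxpos_{w,\mu}$) and $p_{w,\mu}(q-1)\prec b$ (Lemma~\ref{lem:basics}(4)); splitting the prefix of length $p+q-1$ after position $p$ and invoking $(2)$ then gives $p_{w,\mu}(p+q-1)\preceq p_{w,\mu}(p)\circ_A p_{w,\mu}(q-1)\preceq a\circ_A p_{w,\mu}(q-1)\prec a\circ_A b$. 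The degenerate cases $q=0$ and $q=1$ collapse to $\maxpos_{w,\mu}(a)\le\minpos_{w,\mu}(a)$ (Lemma~\ref{lem:basics}(5)) and a direct inspection, and the boundary situation $p+q-1>|w|$ has to be excluded separately.

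The part I expect to be the main obstacle is exactly this chain of comparisons in $(2)\Rightarrow(4)$: passing from $p_{w,\mu}(p)\preceq a$ and $p_{w,\mu}(q-1)\prec b$ to $p_{w,\mu}(p)\circ_A p_{w,\mu}(q-1)\prec a\circ_A b$ requires that $\circ_A$ be compatible with $<_A$ (in the strict sense) on the weights that occur, which has to be extracted from the increasing property of \wm s rather than taken for granted, and the boundary behaviour of $\maxpos_{w,\mu}$ and $\minpos_{w,\mu}$ at $0$ and at $|w|$ must be handled with care. Once that compatibility is available, the rest is bookkeeping with Lemma~\ref{lem:basics} and the prefix-decomposition identity above.
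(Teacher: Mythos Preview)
Your cycle $(1)\Rightarrow(2)\Rightarrow(4)\Rightarrow(3)\Rightarrow(1)$ is sound, and the three short links $(1)\Rightarrow(2)$, $(4)\Rightarrow(3)$, $(3)\Rightarrow(1)$ match the paper's arguments essentially verbatim. The paper, however, organises the equivalences differently: it closes the loop via $(1)\Rightarrow(2)\Rightarrow(3)\Rightarrow(1)$ and then establishes $(3)\Leftrightarrow(4)$ as a separate pair. So where you jump directly from $(2)$ to $(4)$, the paper interposes $(3)$.

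The practical difference is exactly the boundary bookkeeping you flag. In the paper's $(2)\Rightarrow(3)$ one starts from a concrete factor $v=w[i{+}1\dots j]$, so the indices plugged into $(2)$ are automatically in $[|w|]_0$. In the paper's $(3)\Rightarrow(4)$ one sets $m:=\minpos_{w,\mu}(a\circ b)$ and $n:=\maxpos_{w,\mu}(a)$, looks at the genuine factor $w[n{+}1\dots m]$, shows its weight dominates $b$ (via Lemma~\ref{lem:basics}), and then invokes $(3)$ to bound $\minpos_{w,\mu}(b)$ by its length $m-n$; again everything lives inside $w$ by construction. Your direct $(2)\Rightarrow(4)$ instead manufactures the index $p+q-1$, which is not \emph{a priori} at most $|w|$, and treats $q\in\{0,1\}$ by hand. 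This is workable---for instance, if $p+q>|w|$ and $p<|w|$ one can apply $(2)$ with $j=|w|$, $i=p$ and reach $\mu(w)\prec a\circ b$, contradicting the hypothesis---but it is genuinely more case-laden than the paper's route, and your parenthetical ``has to be excluded separately'' would need to be filled in.

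On your final worry: both routes lean on the same translation-compatibility of $\circ_A$ with $<_A$. The paper's $(2)\Rightarrow(3)$ uses the strict form just as you do, and its $(3)\Rightarrow(4)$ even uses a cancellation step (from $a\circ\mu(v)\succeq a\circ b$ to $\mu(v)\succeq b$). The paper simply treats these as part of what ``totally ordered monoid'' means rather than deriving them from the increasing property, so you are not missing an argument that the paper supplies.
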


\begin{proof}
    \medskip
    (1)$\Rightarrow$(2).
    Follows by the second additional lemma,
    since for any \pn{} word the prefix- and \factWf{} are equal
    by definition.
    \medskip
    
    (2)$\Rightarrow$(3).
    Assume we have (2) but suppose there exists $v \in \Fact(w)$
    with $|v|<\minpos_{w,\mu}(\mu(v))$. 
    Now we write $v$ as $v = w[i+1\dots j]$ for some $i,j\in[|w|]_0$ with $i<j$.
    Then we have $p_{w,\mu}(j)=p_{w,\mu}(i) \circ \mu(v)$.
    And by (8) of the first additional lemma, we know that 
    $\mu(v) \succ p_{w,\mu}(|v|)$ holds so in total we have
    $p_{w,\mu}(j) \succ p_{w,\mu}(i)\,\circ\,p_{w,\mu}(|v|)$.
    Which is a contradiction to (2), because we have
    $|v|=|w[i+1\dots j]|=j-i$.
    \medskip
    
    (3)$\Rightarrow$(1).
    Assume we have (3).
    Let $i\in[|w|]$ and 
    let $v\in\Fact(w)$ with $\mu(v)=f_{w,\mu}(i)$ 
    then we have $|v| \geq \minpos_{w,\mu}(\mu(v))$.
    By (2) and (4) of the first additional lemma follows 
    $p_{w,\mu}(|v|)
    \succeq p_{w,\mu}(\minpos_{w,\mu}(\mu(v)))
    \succeq \mu(v)$ from which (1) follows directly
    because we now have $p_{w,\mu}(i) \succeq f_{w,\mu}(i)$.
    \medskip
    
    (3)$\Rightarrow$(4).
    Let $a,b \in A$ with $a\circ b \preceq \mu(w)$,
    $m = \minpos_{w,\mu}(a \circ b)$ and $n = \maxpos_{w,\mu}(a)$.
    Now consider $w$'s factor $v=w[n+1\dots m]$
    which has a length of $m-n$.
    So $p_{w,\mu}(n) \circ \mu(v) = p_{w,\mu}(m)$
    and $|v| = m-n$ each follow.
    By (3) and (4) of the first additional lemma  we know
    $a \circ \mu(v) \succeq a \circ b$
    and therefore $\mu(v)\succeq b$ holds.
    Again by (11) of the first additional lemma  we get
    $\minpos_{w,\mu}(\mu(v)) \geq \minpos_{w,\mu}(b)$.
    So in total with (3) follows
    $\minpos_{w,\mu}(b) \leq \minpos_{w,\mu}(\mu(v)) \leq |v| =
    m-n=\minpos_{w,\mu}(a\circ b) - \maxpos_{w,\mu}(a)$.
    \medskip
    
    (4)$\Rightarrow$(3).
    Let $v\in\Fact(w)$, we write $v$ as 
    $v = w[i+1\dots j]$ for some $i,j\in[|w|]_0$.
    So with the first additional lemma we have
    $\minpos_{w,\mu}(p_{w,\mu}(i)) = i$ and 
    $\minpos_{w,\mu}(p_{w,\mu}(i) \circ \mu(v)) = \minpos_{w,\mu}(p_{w,\mu}(j)) = j$.
    By (4) we then have
    $\minpos_{w,\mu}(\mu(v)) \leq \minpos_{w,\mu}(p_{w,\mu}(i) \circ \mu(v))
    - \minpos_{w,\mu}(p_{w,\mu}(i)) = j-i = |v|$.
    \qed
\end{proof}

\subsection{Prime \WM s}
\label{subsec:pwms}
In this subsection we briefly examine \pwm s regarding a certain unique 
properties they inherit from the prime numbers. Notice, for injective \pwm{} 
the weight of any word is characteristic 
for its Parikh vector by the uniqueness of 
the prime number factorisation: any two words 
with the same weight under an injective 
\pwm{} must have exactly the same letters, 
i.e. the same Parikh vectors. 
Thus, it is not unreasonable to assume they might prove interesting 
regarding the \emph{Indexed JPM} (IJPM), since in the binary case \pn{} forms 
always have the same Parikh vectors as their equivalent words.
In our more general case, we saw that for some words the weighted \pn{} 
form has a different Parikh vector.
However, this is prohibited by the above mentioned property of \pwm s.
Consequently, every \pwm{} has gaps over these words and is therefore gapful.

\begin{lemma}
	\label{lem:primeWM}
	All non-binary \pwm s are gapful.
\end{lemma}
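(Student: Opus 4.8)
The plan is to exhibit, for an arbitrary non-binary \pwm{} $\mu$, a single length-four word over which $\mu$ has a gap; by definition this makes $\mu$ gapful. First I would observe that ``non-binary'' together with $\mu(\Sigma)\subseteq\P$ forces $|\mu(\Sigma)|\geq 3$, so the image contains three pairwise distinct primes $p<q<s$; fix letters $\ta,\tb,\tc\in\Sigma$ with $\mu(\ta)=p$, $\mu(\tb)=q$, $\mu(\tc)=s$. No injectivity or alphabetical-orderedness of $\mu$ is needed here: ``$\mu$ has a gap over $w$ at $i$'' quantifies over all words $w\in\Sigma^{\ast}$ with no constraint tying their letters to the order on $\Sigma$, so we may arrange $\ta,\tb,\tc$ freely.

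Then I would work with $w=\tc\ta\tc\tb$ and show the gap sits at index $3$. Its length-$2$ factors are $\tc\ta$, $\ta\tc$, $\tc\tb$ with weights $ps$, $ps$, $qs$, and since $p<q$ this gives $f_{w,\mu}(2)=qs$; its length-$3$ factors are $\tc\ta\tc$ and $\ta\tc\tb$ with weights $s^{2}p$ and $pqs$, and since $q<s$ this gives $f_{w,\mu}(3)=s^{2}p$. A letter $\tx\in\Sigma$ closing the gap would have to satisfy $f_{w,\mu}(3)=f_{w,\mu}(2)\cdot\mu(\tx)$ (the monoid operation being multiplication), i.e.\ $s^{2}p=qs\cdot\mu(\tx)$, hence $\mu(\tx)=sp/q$; but $q$ is a prime dividing neither $p$ nor $s$, so $sp/q\notin\N$, contradicting $\mu(\tx)\in\N$. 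Thus no such letter exists and $\mu$ is gapful.

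I expect the only point requiring real care to be confirming that those two maxima are realised by exactly the listed factors. As $|w|=4$ there are just three factors of length $2$ and two of length $3$, so this is a finite check, and the needed inequalities collapse to $ps<qs$ (i.e.\ $p<q$) and $pqs<s^{2}p$ (i.e.\ $q<s$). The genuinely non-obvious part is choosing the witness: shorter candidates such as $\tc\ta\tc$ only produce prefix-compatible jumps (by $s$, then by $p$) and show no gap, so the trailing $\tb$ of intermediate weight is essential to force the non-integral ratio $sp/q$ between consecutive factor weights.

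For intuition one may phrase the obstruction in Parikh terms, as sketched before the statement: for an injective \pwm{} the weight of a word determines its Parikh vector by uniqueness of prime factorisation, so factor-weight-equivalent words share a Parikh vector, while the inductive \pn{} form construction of Remark~\ref{rem:pnfc} would here demand a letter of weight $sp/q$; together with Theorem~\ref{the:P} this again yields the claim. The explicit word $\tc\ta\tc\tb$, however, already settles everything, non-injective prime \wm s included.
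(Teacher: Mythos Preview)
Your proof is correct and follows essentially the same approach as the paper: pick three letters with distinct prime weights, exhibit a length-four word whose factor-weight function jumps by a non-prime (indeed non-integer) factor between indices $2$ and $3$, and conclude via unique prime factorisation. Your witness $\tc\ta\tc\tb$ is exactly the reverse of the paper's $\ta_2\ta_3\ta_1\ta_3=\tb\tc\ta\tc$, so the factor-weight functions coincide and the arithmetic is identical; your choice even matches the canonical gap-witness form singled out in Theorem~\ref{the:threeEQ}.
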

\begin{proof}
    Since $\mu$ is a non-binary \pwm{},
    we have $\mu(\Sigma)\subseteq\P$ and $|\mu(\Sigma)|>2$.
    Thus, there exist letters  $\ta_1,\ta_2$, and $\ta_3$
    such that $\mu(\ta_1),\mu(\ta_2)$,
    and $\mu(\ta_3)$ are pairwise different prime numbers.
    Assume w.l.o.g. $\mu(\ta_1)<\mu(\ta_2)<\mu(\ta_3)$.
    Suppose that $\mu$ is gapfree, i.e.
    for all words $w\in\Sigma^{\ast}$ and all 
    $i\in[|w|-1]$ there exists an $x\in\Sigma$ with
    $f_{w,\mu}(i+1)=f_{w,\mu}(i)*\mu(x)$.
    Consider $w=\ta_2\ta_3\ta_1\ta_3$.
    By $\mu(\ta_2\ta_3\ta_1)<\mu(\ta_3\ta_1\ta_3)$ and 
    $\mu(\ta_1\ta_3)=\mu(\ta_3\ta_1)<\mu(\ta_2\ta_3)$
    we get $f_{w,\mu}(3)=\mu(\ta_3\ta_1\ta_3)$ and 
    $f_{w,\mu}(2)=\mu(\ta_2\ta_3)$.
    By the supposition, there exists an $x\in\Sigma$ with
    $\mu(\ta_3\ta_1\ta_3) = f_{w,\mu}(3) = 
    f_{w,\mu}(2) * \mu(x) = \mu(\ta_2\ta_3) * \mu(x)$.
    In other words $\mu(\ta_1)*\mu(\ta_3)=\mu(\ta_2)*\mu(x)$
    must hold.
    This is a direct contradiction to the uniqueness of the 
    prime number factorisation.
    Therefore no non-binary \pwm{} can be gapfree,
    as witnessed by the word examined above.
    \qed
\end{proof}

In some sense the \pwm{} even is the \emph{most} gapful \wm{}, since it has 
gaps between all of its base weights by the definition of prime numbers. So if 
there exists a \wm{} that has a gap over some word, then also every \pwm{} has 
a gap over that word. This sentiment leads us to believe that \pwm s might not 
be as helpful in solving the IJMP as initially assumed.

\subsection{Gapfree and Stepped Based Weights}
\label{subsec:fancy}

In this subsection, we prove that the converse of Proposition~\ref{prop:step1} 
does not hold in general, i.e. stepped based weights and \emph{gapfreeness} are 
not equivalent. For this purpose, we define a relatively technical monoid $V$
equipped with a weight-function $\mu$.

\begin{definition}
Let $\Sigma=\{\ta,\tb,\tc\}$ and let $V$ be the strictly totally ordered monoid
    where $V=\{\colvec{a\\b} \mid a,b\in\N_0 \}$, $\circ_V$ is the usual 
    addition on vectors, $\neutr_V = \colvec{0\\0}$,
    and $<_V$ is the order obtained by the lexicographical expansion of the 
    usual \emph{less than} onto vectors, e.g $\colvec{0\\0}\prec\colvec{0\\2}
    \prec\colvec{1\\1}\prec\colvec{2\\0}$ holds.
\end{definition}

\begin{lemma}
	The \wm{} $\mu$ over $\Sigma$ w.r.t. $V$ with the base weights
	$\mu(\ta) = \colvec{0\\2}$, $\mu(\tb) = \colvec{1\\1}$, and 
	$\mu(\tc) = \colvec{2\\0}$ does not have stepped base weights.
\end{lemma}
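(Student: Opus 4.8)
The plan is to unwind the definition of stepped base weights directly against the arithmetic of the monoid $V$: the obstruction is simply that every element of $V$ has both components in $\N_0$, so repeated application of a step can only \emph{increase} each coordinate, whereas the base weights $\colvec{0\\2}, \colvec{1\\1}, \colvec{2\\0}$ have a \emph{strictly decreasing} second coordinate.

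First I would record the easy bookkeeping. Since $<_V$ is the lexicographic order, $\colvec{0\\2} \prec \colvec{1\\1} \prec \colvec{2\\0}$, so $\min(\mu(\Sigma)) = \colvec{0\\2}$, and $|\mu(\Sigma)| = 3$. Hence, if $\mu$ had stepped base weights, there would be a step $s \in V$ with $\mu(\Sigma) = \{\sigma_s^i(\colvec{0\\2}) \mid i \in \{0,1,2\}\}$. Write $s = \colvec{s_1\\s_2}$ with $s_1, s_2 \in \N_0$; since $\circ_V$ is coordinatewise addition, an immediate induction on $i$ gives $\sigma_s^i(\colvec{0\\2}) = \colvec{i\,s_1 \\ 2 + i\,s_2}$.

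Now the contradiction: for every $i \in \N_0$ the second coordinate of $\sigma_s^i(\colvec{0\\2})$ equals $2 + i\,s_2 \geq 2$, so \emph{every} element of $\{\sigma_s^i(\colvec{0\\2}) \mid i \in \{0,1,2\}\}$ has second coordinate at least $2$. But $\mu(\tb) = \colvec{1\\1}$ lies in $\mu(\Sigma)$ and has second coordinate $1 < 2$, hence it cannot belong to that set. Therefore the set equality fails for every choice of $s \in V$, and $\mu$ does not have stepped base weights.

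There is no real obstacle here — the whole argument is a one-line coordinate computation. The only point worth stating carefully is that $\{\sigma_s^i(\min(\mu(\Sigma)))\}$ must match $\mu(\Sigma)$ as an \emph{unordered} set, and that it is already the middle weight $\colvec{1\\1}$ that is unreachable; one need not, and should not, try to solve for the hypothetical step (doing so would formally force $s = \colvec{1\\-1} \notin V$, but the coordinate bound is cleaner and more robust).
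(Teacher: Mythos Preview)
Your proof is correct and follows essentially the same idea as the paper's: the paper simply observes that there is no $x\in V$ with $\colvec{0\\2}+x=\colvec{1\\1}$ since $1<2$, which is exactly your second-coordinate obstruction specialised to reaching $\mu(\tb)$. Your version is more explicit about the set-versus-sequence issue and about why no iterate $\sigma_s^i$ can hit $\colvec{1\\1}$, but the underlying argument is identical.
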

\begin{proof}
	It is easy to see that there exists no $x\in V$ with 
	$\colvec{0\\2}+x=\colvec{1\\1}$, since $1 < 2$. So there exists no step 
	function for $\mu$.\qed
\end{proof}

\begin{lemma}
    \label{applem:fancymonoid}
    
    The \wm{} $\mu$ over $\Sigma$ w.r.t. $V$ with the base weights
    $\mu(\ta) = \colvec{0\\2}$,
    $\mu(\tb) = \colvec{1\\1}$, and 
    $\mu(\tc) = \colvec{2\\0}$ is gapfree.
\end{lemma}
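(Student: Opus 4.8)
The plan is to exploit that all three base weights lie on the anti-diagonal $\{\colvec{a\\b}\mid a+b=2\}$, which rigidly constrains the weight of every word. For $u\in\Sigma^*$ write $c(u):=|u|_\tb+2|u|_\tc$ for the first coordinate of $\mu(u)$; since each letter contributes $0$, $1$, or $2$ to $c$ and the two coordinates of $\mu(u)$ always sum to $2|u|$, we have $\mu(u)=\colvec{c(u)\\2|u|-c(u)}$. Because $<_V$ is the lexicographic order, comparing the weights of two words of equal length amounts to comparing their $c$-values, so with $M_i:=\max\{c(u)\mid u\in\Fact_i(w)\}$ we get $f_{w,\mu}(i)=\colvec{M_i\\2i-M_i}$ for all $i\in[|w|]_0$ (for $i=0$ this reads $\mu(\epsilon)=\colvec{0\\0}$, consistently).

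Next I would prove $M_{i-1}\le M_i\le M_{i-1}+2$ for every $i\in[|w|]$. For the left inequality, pick $u\in\Fact_{i-1}(w)$ with $c(u)=M_{i-1}$; since $i\le|w|$, the occurrence of $u$ in $w$ can be extended by one letter of $w$ on at least one side, and appending a letter never decreases $c$, hence $M_i\ge M_{i-1}$. For the right inequality, pick $v\in\Fact_i(w)$ with $c(v)=M_i$ and delete its last letter; the resulting length-$(i-1)$ factor has $c$-value equal to $c(v)$ minus the first coordinate of $\mu(v[i])$, which lies in $\{0,1,2\}$, hence is at least $M_i-2$, so $M_{i-1}\ge M_i-2$.

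Finally I would combine these. Setting $d_i:=M_i-M_{i-1}\in\{0,1,2\}$, we obtain
\[
f_{w,\mu}(i)-f_{w,\mu}(i-1)=\colvec{d_i\\2-d_i}\in\{\mu(\ta),\mu(\tb),\mu(\tc)\},
\]
since $\colvec{0\\2}=\mu(\ta)$, $\colvec{1\\1}=\mu(\tb)$, $\colvec{2\\0}=\mu(\tc)$. Thus for every $w\in\Sigma^*$ and every $i\in[|w|]$ there is $\tx\in\Sigma$ with $f_{w,\mu}(i)=f_{w,\mu}(i-1)\circ_V\mu(\tx)$, i.e.\ $\mu$ is gapfree.

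The only mildly delicate points are the reduction of the order comparison on equal-length words to a comparison of first coordinates, and the observation that a length-$(i-1)$ factor can always be extended inside $w$ once $i\le|w|$; everything else is a direct computation. Intuitively, $\mu$ behaves like a \sumwm{} of step $1$ once the redundant second coordinate is discarded, which explains why it is gapfree even though, by the preceding lemma, it has no stepped base weights.
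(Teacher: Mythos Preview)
Your proof is correct and considerably cleaner than the paper's. The key observation you make --- that every base weight lies on the line $a+b=2$, so $\mu(u)=\colvec{c(u)\\2|u|-c(u)}$ and the lexicographic comparison on equal-length words collapses to comparing the scalar $c$ --- lets you reduce the whole question to the elementary fact $M_{i-1}\le M_i\le M_{i-1}+2$, established by the standard extend/delete-one-letter trick. In effect you are showing that $\mu$ is equivalent, in the sense of Definition~\ref{defeq}, to the standard \sumwm{} on $\{\ta,\tb,\tc\}$, from which gapfreeness follows at once.

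The paper takes a more laborious route: it fixes realising factors $u\in\Fact_{i+1}(w)$ and $v\in\Fact_i(w)$, introduces the differences $r=|u|_\ta-|v|_\ta$, $s=|u|_\tb-|v|_\tb$, $t=|u|_\tc-|v|_\tc$, and grinds through a case analysis on $u[1]\in\{\ta,\tb,\tc\}$, combining the inequalities coming from $\mu(u)>_V\mu(v)$ and $\mu(u[2\dots|u|])<_V\mu(v)$ to pin down $t-r\in\{-1,0\}$ and hence the difference $\colvec{s+2t\\2r+s}$. Your approach buys transparency and generality (it would extend verbatim to any alphabet whose base weights all lie on a single anti-diagonal), while the paper's approach stays closer to the raw definition but obscures the underlying reason the lemma holds.
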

\begin{proof}
    For $i\in[|w|]$ let $u\in\Fact_{i+1}(w)$ be the factor determining $f_{w,\mu}(i+1)$
    and $v\in\Fact_{i}(w)$ be the factor determining $f_{w,\mu}(i)$
    such that $i$ is minimal with $u$ and $v$ not overlapping
    (if they overlap, the non-overlapping parts are taken as $u$ and $v$ respectively).
    Now choose $r,s,t\in\Z$ with 
    $r=|u|_{\ta}-|v|_{\ta}$,
    $s=|u|_{\tb}-|v|_{\tb}$,
    and $t=|u|_{\tc}-|v|_{\tc}$.
    Thus we have $r+s+t=1$ by 
    \[
    r + s + t
    = |u|_{\ta}-|v|_{\ta} + |u|_{\tb}-|v|_{\tb} + |u|_{\tc}-|v|_{\tc}
    = |u|-|v|
    = i+1 - i
    = 1
    \text{.}
    \]
    Moreover we have
    \[
    \mu(u)
    = \binom{|u|_{\tb}+2|u|_{\tc}}{2|u|_{\ta}+|u|_{\tb}}
    = \binom{|v|_{\tb}+s+2|v|_{\tc}+2t}{2|v|_{\ta}+2r+|v|_{\tb}+s}
    = \mu(v) + \binom{s+2t}{2r+s}
    \text{.}
    \]
    And with 
    $
    |v|_{\tb}
     = |u|_{\tb} - s
     = |u|_{\tb} + r + t - 1
    $
    we get
    \[
    \mu(u) = \binom{|u|_{\tb}+2|v|_{\tc}+2t}{|u|_{\tb}+2|v|_{\ta}+2r}
    \text{ and }
    \mu(v)=\binom{|u|_{\tb}+r+t-1+2|v|_{\tc}}
    {2|v|_{\ta}+|u|_{\tb}+r+t-1}
    \text{.}
    \]
    By evaluating $f_{w,\mu}(i+1) = \mu(u) >_V 
    \mu(v) = f_{w,\mu}(i)$
    we get the following two cases:
    if $|u|_{\tb}+2|v|_{\tc}+2t =  |u|_{\tb}+r+t-1+2|v|_{\tc}$
    and $|u|_{\tb}+2|v|_{\ta}+2r > 2|v|_{\ta}+ |u|_{\tb}+r+t-1$
    hold we get $t = r-1$ and $r>t-1$, thus $t+1 = r$.
    If $|u|_{\tb}+2|v|_{\tc}+2t > |u|_{\tb}+r+t-1+2|v|_{\tc}$
    holds we get $t+1 > r$.
    Hence, in general we know $t+1 \geq r$ must hold.
    Now set $u'=u[2..|u|]$ (the case $u'=u[1..|u|-1]$ is symmetric).
    By the assumption that $v$ and $u$ do not overlap
    we have $\mu(u') <_V \mu(v)$.
    We now evaluate this inequality in a similar fashion but also considering
    the three possible letters for $u[1]$.\\
    \textbf{case 1:} $u[1]=\ta$\\
    We have $\mu(u')=
    \binom{ |u'|_{\tb}+2|u'|_{\tc} } { 2|u'|_{\ta}+|u'|_{\tb} }
    =\binom{ |u|_{\tb}+2|u|_{\tc} }  { 2(|u|_{\ta}-1)+|u|_{\tb}}
    =\binom{ |u|_{\tb}+2|v|_{\tc}+2t}{ 2|v|_{\ta}+2r-2+|u|_{\tb}}$.\\
    By $\mu(u')<_A\mu(v)$ we have either
    $|u|_{\tb}+2|v|_{\tc}+2t = |u|_{\tb}+r+t-1+2|v|_{\tc}$
    and $2|v|_{\ta}+2r-2+|u|_{\tb} < 2|v|_{\ta}+|u|_{\tb}+r+t-1$
    which implies $t=r-1$ and $r-1<t$,
    which is a contradiction,
    or $|u|_{\tb}+2|v|_{\tc}+2t < |u|_{\tb}+r+t-1+2|v|_{\tc}$
    which implies $t<r-1$, which is a contradiction to $t+1 \geq r$.
    Hence we get $u[1]\neq\ta$\\
    \textbf{case 2:} $u[1]=\tb$\\
    We have $\mu(u')=
    \binom{ |u'|_{\tb}+2|u'|_{\tc} } { 2|u'|_{\ta}+|u'|_{\tb} }
    =\binom{ |u|_{\tb}-1+2|u|_{\tc} }  { 2|u|_{\ta}+|u|_{\tb}-1}
    =\binom{ |u|_{\tb}-1+2|v|_{\tc}+2t}{ 2|v|_{\ta}+2r+|u|_{\tb}-1}$. 
    By $\mu(u')<_V\mu(v)$ we have either
    $|u|_{\tb}-1+2|v|_{\tc}+2t = |u|_{\tb}+r+t-1+2|v|_{\tc}$
    and $2|v|_{\ta}+2r+|u|_{\tb}-1 < 2|v|_{\ta}+|u|_{\tb}+r+t-1$
    which gives again a contradiction by $t=r$ and $r<t$,
    or $|u|_{\tb}-1+2|v|_{\tc}+2t < |u|_{\tb}+r+t-1+2|v|_{\tc}$
    which implies $t<r$. \\
    \textbf{case 3:} $u[1]=\tc$\\
    We have $\mu(u')=
    \binom{ |u'|_{\tb}+2|u'|_{\tc} }   { 2|u'|_{\ta}+|u'|_{\tb} }
    =\binom{ |u|_{\tb}+2(|u|_{\tc}-1)} { 2|u|_{\ta}+|u|_{\tb}}
    =\binom{ |u|_{\tb}+2|v|_{\tc}+2t-2}{ 2|v|_{\ta}+2r+|u|_{\tb}}$. 
    By $\mu(u')<_A\mu(v)$ we have here either
    $|u|_{\tb}+2|v|_{\tc}+2t-2 = |u|_{\tb}+r+t-1+2|v|_{\tc}$
    and $2|v|_{\ta}+2r+|u|_{\tb} < 2|v|_{\ta}+|u|_{\tb}+r+t-1$
    which leads to the contradiction $t-1=r$ and $r<t-1$,
    or $|u|_{\tb}+2|v|_{\tc}+2t-2 < |u|_{\tb}+r+t-1+2|v|_{\tc}$
    which implies $t<r+1$.\\
    Hence, in all cases we get $t<r+1$ and by $t+1 \geq r$
    we know $t = r-1$ or $t = r$ must hold.
    We can now prove that $\mu$ is gapfree by distinguishing these cases.\\
    \textbf{case 1:} $t=r-1$\\
    By $r+s+t=1$ we get $s=-2t$ and consequently
    \begin{align*}
        f_{w,\mu}(i+1)
        &= f_{w,\mu}(i) \circ_V \binom{s+2t}{2r+s}
         = f_{w,\mu}(i) \circ_V \binom{0}   {2r-2t}\\
        &= f_{w,\mu}(i) \circ_V \binom{0}   {2(r-t)}
         = f_{w,\mu}(i) \circ_V \binom{0}   {2}\\
        &=f_{w,\mu}(i)  \circ_V \mu(\ta).
    \end{align*}
    \textbf{case 2:} $t=r$\\
    By $r+s+t=1$ we get $s=1-2t$ and consequently
    \begin{align*}
        f_{w,\mu}(i+1)
        &= f_{w,\mu}(i) \circ_V \binom{s+2t}{2r+s}
         = f_{w,\mu}(i) \circ_V \binom{1}   {1}\\
        &=f_{w,\mu}(i)+\mu(\tb)
        \text{.}
    \end{align*}
   Thus in both cases exists an $x\in\Sigma$ with 
   $f_{w,\mu}(i+1)=f_{w,\mu}(i)\circ_V\mu(x)$. 
   \qed
\end{proof}

\subsection{Injective \WM s}
\label{subsec:iwm}
In this section we first show that in the case of an gapfree and injective weight measure
the prefix normal form can be computed deterministically, and non-deterministically if the weight 
measure is not injecitve. Afterwards, we investigate non-injective \wm s, specifically we provide a 
construction that can be used to transform any \wm{} into an injective one.
Thus, w.l.o.g. we always may assume to have an injective \wm.

\begin{lemma}
 Let $\mu$ be a gapfree and injective \wm{}
    over the alphabet $\Sigma$ w.r.t. the monoid $A$ 
    and $w\in\Sigma^*$.
    Then the $\mu$-\pn{} form $w'$ of $w$ can be 
    constructed inductively: $w'[1] = \ta$ if $f_{w,\mu}(1) = \mu(\ta)$ and
    for all $i\in[|w|]$, $i>1$ set $w'[i] = \ta \in \Sigma$ if 
    $f_{w,\mu}(i) = f_{w,\mu}(i-1)\circ_A \mu(\ta)$.
    In contrast, for a \wm{} that is gapfree but \emph{not} injective this
    inductive construction can be used 
    to non-deterministically construct
    all \pn{} words within the 
    \factorWEqnce{} class of a word.
\end{lemma}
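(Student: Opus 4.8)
The plan is to check, step by step, that the recursion yields a word $w'$ with $p_{w',\mu}=f_{w,\mu}$, that this $w'$ is $\mu$-\pn{}, and then to read off uniqueness (in the injective case) or completeness of the enumeration (in the general gapfree case) from Theorem~\ref{the:P}. Throughout I would use the conventions $f_{w,\mu}(0)=p_{w,\mu}(0)=\neutr_A$, under which the stated base case $w'[1]=\ta$ with $\mu(\ta)=f_{w,\mu}(1)$ is simply the $i=1$ instance of the general rule $f_{w,\mu}(i)=f_{w,\mu}(i-1)\circ_A\mu(w'[i])$.

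First I would observe that the recursion never gets stuck: gapfreeness of $\mu$ guarantees that for every $i\in[|w|]$ some $\ta\in\Sigma$ satisfies $f_{w,\mu}(i)=f_{w,\mu}(i-1)\circ_A\mu(\ta)$, so at least one run exists. A routine induction on $i$ then gives $p_{w',\mu}(i)=f_{w,\mu}(i)$ for all $i\in[|w|]_0$: the case $i=0$ is the convention above, and $p_{w',\mu}(i)=p_{w',\mu}(i-1)\circ_A\mu(w'[i])=f_{w,\mu}(i-1)\circ_A\mu(w'[i])=f_{w,\mu}(i)$ by the hypothesis and the choice of $w'[i]$.

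The crux is showing that $w'$ is $\mu$-\pn{}, i.e.\ $f_{w',\mu}=p_{w',\mu}$, rather than only $p_{w',\mu}=f_{w,\mu}$. I would obtain this from the characterisation $(1)\Leftrightarrow(2)$ of Lemma~\ref{prop:eqivPN}: $w'$ is $\mu$-\pn{} iff $p_{w',\mu}(j)\preceq p_{w',\mu}(i)\circ_A p_{w',\mu}(j-i)$ for all $i<j$. Substituting $p_{w',\mu}=f_{w,\mu}$ (and using $|w'|=|w|$) turns this inequality into $f_{w,\mu}(j)\preceq f_{w,\mu}(i)\circ_A f_{w,\mu}(j-i)$, which is precisely the subadditivity of $f_{w,\mu}$ from Lemma~\ref{lem:f1} applied to the word $w$. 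Hence $w'$ is $\mu$-\pn{}, so $f_{w',\mu}=p_{w',\mu}=f_{w,\mu}$; therefore $w'\sim_\mu w$ and $w'\in\mathcal{P}_\mu(w)$. This subadditivity step is the only non-mechanical ingredient, which is why I would route it through Lemmas~\ref{lem:f1} and~\ref{prop:eqivPN} rather than bounding the weights of $w'$'s factors by hand.

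Finally the two regimes. If $\mu$ is injective, Theorem~\ref{the:P} gives $|\mathcal{P}_\mu(w)|=1$; since the constructed $w'$ lies in $\mathcal{P}_\mu(w)$, it is that unique element, and the run is in fact forced: admissibility at step $i$ depends only on the values $f_{w,\mu}(i-1)$ and $f_{w,\mu}(i)$, so two distinct admissible letters at some step would, by the previous paragraphs, extend to two distinct members of $\mathcal{P}_\mu(w)$, contradicting $|\mathcal{P}_\mu(w)|=1$. If $\mu$ is only gapfree, the argument above never used injectivity, so every admissible run outputs some element of $\mathcal{P}_\mu(w)$; conversely, any $v\in\mathcal{P}_\mu(w)$ satisfies $p_{v,\mu}=f_{v,\mu}=f_{w,\mu}$, whence $f_{w,\mu}(i)=f_{w,\mu}(i-1)\circ_A\mu(v[i])$ for every $i$, so each $v[i]$ is admissible at step $i$ and $v$ is exactly the output of the run that selects $v[i]$ throughout. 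Thus the non-deterministic construction enumerates precisely $\mathcal{P}_\mu(w)$, i.e.\ all weighted \pn{} words that are \factorWEqnt{} to $w$.
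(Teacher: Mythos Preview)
Your proof is correct and follows the same overall shape as the paper's argument: gapfreeness guarantees a run exists, the induction gives $p_{w',\mu}=f_{w,\mu}$, and injectivity (via Theorem~\ref{the:P}) pins down uniqueness. Where you differ is in the step from $p_{w',\mu}=f_{w,\mu}$ to $w'\in\mathcal{P}_\mu(w)$. The paper simply asserts that $w'$ is $\mu$-\pn{} and \factorWEqnt{} to $w$ ``because its prefixes are constructed to have exactly the weight of $w$'s corresponding factor with the maximum weight'', and only proves the prefix identity by induction. You instead route the claim through Lemma~\ref{lem:f1} (subadditivity of $f_{w,\mu}$) and the equivalence $(1)\Leftrightarrow(2)$ of Lemma~\ref{prop:eqivPN}, which cleanly yields $p_{w',\mu}=f_{w',\mu}$ without needing to bound individual factor weights of $w'$ or to invoke any cancellation in the monoid. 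This is genuinely more rigorous: in an arbitrary ordered monoid it is not immediate that $p_{w',\mu}=f_{w,\mu}$ forces $f_{w',\mu}\preceq f_{w,\mu}$, and your detour through subadditivity supplies exactly the missing inequality. You also spell out both inclusions in the non-injective case, whereas the paper only states the forward direction in one sentence.
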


\ifpaper
\begin{proof}
First, such a $w'$
exists because $\mu$ is 
gapfree and therefore for all $i\in[|w|]$ 
there exists an $\ta \in \Sigma$ with 
$f_{w,\mu}(i) = f_{w,\mu}(i-1) \circ \mu(\ta)$.
Second, $w'$
is unambiguous because 
$\mu$ is injective and therefore there 
exists exactly one $\ta \in \Sigma$ for 
any specific weight.
And third, the word $w'$ is in $[w]_{\sim_\mu}$ 
and is $\mu$-\pn{} because its prefixes are
constructed to have exactly the weight of $w$'s 
corresponding factor with the maximum weight.
So we have $w'=\mathcal{P}_\mu(w)$.
We show this by induction over $i\in[|w|]$.
For $i = 1$, we directly have $p_{w',\mu}(1) = \mu(w'[1]) = f_{w,\mu}(1)$ by construction.
Assuming the claim holds for some $i\in[|w|-2]$,
then we have $p_{w',\mu}(i+1) = \mu(w'[1\dots i+1]) = p_{w',\mu}(i) \circ \mu(w'[i+1])$ which is equal to
$f_{w,\mu}(i) \circ \mu(w'[i+1]) = f_{w,\mu}(i+1)$
by our inductive assumption.

In the non-injective case, all choices of appropriate weights lead to a weighted prefix normal 
form.\qed
\end{proof}

\else
\fi

\begin{definition}
    \label{def:projWM}
    Let $\mu$ be a \wm{} over $\Sigma$
    w.r.t. the monoid $A$.    
    We define the \emph{$\mu$-projected alphabet} 
    $\Sigma_\mu:= \{ [\ta]_\mu \mid \ta \in\Sigma\} \text{,} $
    where $[\ta]_\mu=\{ \tb\in\Sigma\mid \mu(\tb)=\mu(\ta) \}$
    for $\ta\in\Sigma$ and set $\mu$'s \emph{projected \wm{}} as the 
    \wm{} $\hat{\mu}$ over $\Sigma_\mu$ w.r.t. $A$
    with the base weights
    $\hat{\mu}([\ta]_\mu) = \mu(\ta)\text{.}$
    Finally for a word $w\in\Sigma^*$ we construct its \emph{$\mu$-projection}
    $w_\mu\in\Sigma_\mu^*$ with 
    $w_\mu:=[w[1]]_\mu\dots[w[|w|]]_\mu \text{.}$
\end{definition}

\begin{lemma}
    \label{lem:makeInjective}
    For a weight measure $\mu$ over an alphabet $\Sigma$
    and a word $w\in\Sigma^*$ we have $\hat{\mu}(w_\mu)=\mu(w)$
    and the projected \wm{} $\hat{\mu}$ is injective on $\Sigma_\mu$.
\end{lemma}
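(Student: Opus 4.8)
The plan is to dispatch the two assertions in the order stated, preceded by a well-definedness check on $\hat\mu$, and then to note in passing why $\hat\mu$ is genuinely a \wm. First I would observe that the base weights of $\hat\mu$ are well defined: if $\ta,\tb\in\Sigma$ satisfy $[\ta]_\mu=[\tb]_\mu$, then $\mu(\ta)=\mu(\tb)$ by the definition of the class $[\ta]_\mu$, so the value $\hat\mu([\ta]_\mu)=\mu(\ta)$ does not depend on the chosen representative. As remarked in the preliminaries, since $\Sigma_\mu^*$ is the free monoid over the finite set $\Sigma_\mu$, this data extends uniquely to a morphism $\hat\mu\colon\Sigma_\mu^*\to A$.

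For the identity $\hat\mu(w_\mu)=\mu(w)$, I would argue directly from the morphism property of both maps: writing $w=w[1]\cdots w[n]$, we have $w_\mu=[w[1]]_\mu\cdots[w[n]]_\mu$, and hence $\hat\mu(w_\mu)=\hat\mu([w[1]]_\mu)\circ_A\cdots\circ_A\hat\mu([w[n]]_\mu)=\mu(w[1])\circ_A\cdots\circ_A\mu(w[n])=\mu(w)$, the middle equality being $\hat\mu([\ta]_\mu)=\mu(\ta)$ applied letter by letter (a one-line induction on $n$ if one prefers to be fully formal). For injectivity of $\hat\mu$ on $\Sigma_\mu$, suppose $\hat\mu([\ta]_\mu)=\hat\mu([\tb]_\mu)$ for some $\ta,\tb\in\Sigma$; this says $\mu(\ta)=\mu(\tb)$, so $\tb\in[\ta]_\mu$ and therefore $[\ta]_\mu=[\tb]_\mu$, i.e. the two elements of $\Sigma_\mu$ coincide.

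Finally, to justify calling $\hat\mu$ a \wm{} (implicit already in Definition~\ref{def:projWM}), I would note that the projection $w\mapsto w_\mu$ is surjective onto $\Sigma_\mu^*$ --- every class $[\ta]_\mu$ is nonempty, so one builds a preimage of a given word of classes by picking a representative for each letter --- and then pull back the two axioms: given $x,y\in\Sigma_\mu^*$ pick $u,v\in\Sigma^*$ with $x=u_\mu$ and $y=v_\mu$, so that $xy=(uv)_\mu$ and, by the first part, $\hat\mu(xy)=\mu(uv)=\mu(vu)=\hat\mu(yx)$; and if $y\neq\epsilon$ then $v\in\Sigma^+$, whence $\hat\mu(x)=\mu(u)<_A\mu(uv)=\hat\mu(xy)$ by the increasing property of $\mu$. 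I do not anticipate a genuine obstacle here: the whole argument is a well-definedness and transfer-of-structure exercise, and the only place demanding a little care is making the extension of $\hat\mu$ to a morphism and the surjectivity of the projection explicit, so that the weight-measure properties can be inherited from $\mu$.
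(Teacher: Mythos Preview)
Your proof is correct and follows essentially the same approach as the paper: the identity $\hat\mu(w_\mu)=\mu(w)$ is obtained from the base-weight assignment and the morphism property, and injectivity is proved exactly as in the paper by observing that $\hat\mu([\ta]_\mu)=\hat\mu([\tb]_\mu)$ forces $\mu(\ta)=\mu(\tb)$ and hence $[\ta]_\mu=[\tb]_\mu$. You are more thorough than the paper in that you also check well-definedness of $\hat\mu$ on classes and verify that $\hat\mu$ inherits the commutativity and increasing properties of a weight measure; the paper leaves these implicit in Definition~\ref{def:projWM}.
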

\ifpaper 
\begin{proof}
    The first statement holds directly by the construction of the
    projected \wm{} with the base weights $\hat{\mu}([\ta]_\mu) = \mu(\ta)$
    for all $\ta\in\Sigma$. For  the second claim choose 
    $[\ta]_\mu, [\tb]_\mu \in \Sigma_\mu$
    with $\hat{\mu}([\ta]_\mu) = \hat{\mu}([\tb]_\mu)$.
    Then $\mu(\ta) = \mu(\tb)$ holds and 
    therefore both $\tb\in[\ta]_\mu$ and $\ta\in[\tb]_\mu$ hold.
    In other words $[\ta]_\mu = [\tb]_\mu$,
    so $\hat{\mu}$ is injective on $\Sigma_\mu$. 
    \qed
\end{proof}
\else    
\fi

    
\begin{remark}
    With this construction a word $w$ and its $\mu$-projection
    $w_\mu$ behave the same way under any 
    function that is based on the weights of 
    the letters in the words, e.g. 
    $f_{w,\mu} = f_{w_\mu,\hat{\mu}}$,
    $p_{w,\mu} = p_{w_\mu,\hat{\mu}}$,
    $\maxpos_{w,\mu} = \maxpos_{w_\mu,\hat{\mu}}$,
    and $\minpos_{w,\mu} = \minpos_{w_\mu,\hat{\mu}}$ all hold.
    Analogously, all other statements depending on those
    functions hold for the $\mu$-projection of the words as well.
    Thus, $w_\mu$ represents all words within
    the set $\{v\in\Sigma^* \mid v[i]\in w[i] \text{ for all } i\in[|w|]\}$.
\end{remark}

The following theorem essentially shows that 
the \pn{} form of a projected word 
like in Definition~\ref{def:projWM}
represents the set of \pn{} words that are
\factorWEqnt{} to the original word.
In other words, for some $w\in\Sigma^*$ the sets
$\mathcal{P}_\mu(w)$ and $\mathcal{P}_{\hat{\mu}}(w_\mu)$
represent the same \pn{} words over $\Sigma$ 
that are equivalent to $w$.
Thus, also in the non-injective case
we are able to obtain one \pn{} form 
by considering projections.

\begin{theorem}
    \label{the:PiP}
    Let $\mu$ be a gapfree \wm{} over $\Sigma$  
    and let $w\in\Sigma^*$.
    Then with $w'\in \mathcal{P}_{\hat{\mu}}(w_\mu)$ we have
    $\mathcal{P}_\mu(w)
    = \{ v\in\Sigma^* \mid v[i]\in w'[i] \text{ for all }i\in[|w|] \}$.
\end{theorem}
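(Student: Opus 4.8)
The plan is to transport the entire statement to the projected alphabet $\Sigma_\mu$, where the \wm{} $\hat\mu$ is injective, settle it there, and pull the result back. Two facts carry the argument. First, $\hat\mu$ is injective on $\Sigma_\mu$ by Lemma~\ref{lem:makeInjective}, and $\hat\mu$ is gapfree: any word over $\Sigma_\mu$ is the $\mu$-projection $u_\mu$ of some $u\in\Sigma^*$ (replace each class-letter by a representative), and since $f_{u_\mu,\hat\mu}=f_{u,\mu}$ and $\hat\mu(\Sigma_\mu)=\mu(\Sigma)$, a gap of $\hat\mu$ over $u_\mu$ would be a gap of the gapfree $\mu$ over $u$. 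Hence by Theorem~\ref{the:P} we get $|\mathcal{P}_{\hat\mu}(w_\mu)|=1$, so $w'$ is well defined and is that unique element. Second, by the remark following Lemma~\ref{lem:makeInjective}, $f_{v,\mu}=f_{v_\mu,\hat\mu}$ and $p_{v,\mu}=p_{v_\mu,\hat\mu}$ for every $v\in\Sigma^*$. I would also isolate the purely combinatorial translation: each letter $w'[i]\in\Sigma_\mu$ is exactly a $\mu$-equivalence class of letters of the same weight, so a word $v$ with $|v|=|w|$ satisfies $v[i]\in w'[i]$ for all $i\in[|w|]$ if and only if $[v[i]]_\mu=w'[i]$ for all $i$, i.e. if and only if $v_\mu=w'$. (Both sides of the claimed equality consist only of words of length $|w|$, since \factorWEqnt{} words have equal length; I would state this explicitly.)

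For the inclusion ``$\subseteq$'', take $v\in\mathcal{P}_\mu(w)$, so $|v|=|w|$, $f_{v,\mu}=f_{w,\mu}$, and $p_{v,\mu}=f_{v,\mu}$. Projecting via the identities above gives $f_{v_\mu,\hat\mu}=f_{w_\mu,\hat\mu}$ and $p_{v_\mu,\hat\mu}=f_{v_\mu,\hat\mu}$, so $v_\mu\in\mathcal{P}_{\hat\mu}(w_\mu)=\{w'\}$, hence $v_\mu=w'$. By the translation above, $v[i]\in w'[i]$ for all $i\in[|w|]$, so $v$ lies in the right-hand set.

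For ``$\supseteq$'', take $v$ of length $|w|$ with $v[i]\in w'[i]$ for all $i$; then $v_\mu=w'$. Since $w'\in\mathcal{P}_{\hat\mu}(w_\mu)$, we have $f_{w',\hat\mu}=f_{w_\mu,\hat\mu}$ and $p_{w',\hat\mu}=f_{w',\hat\mu}$. Then $f_{v,\mu}=f_{v_\mu,\hat\mu}=f_{w',\hat\mu}=f_{w_\mu,\hat\mu}=f_{w,\mu}$, so $v\sim_\mu w$, and $p_{v,\mu}=p_{v_\mu,\hat\mu}=p_{w',\hat\mu}=f_{w',\hat\mu}=f_{v,\mu}$, so $v$ is $\mu$-\pn{}. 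Hence $v\in\mathcal{P}_\mu(w)$, and both inclusions are established.

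I do not expect a genuinely hard step; the argument is a definition chase through the projection, driven by $f_{v,\mu}=f_{v_\mu,\hat\mu}$ and $p_{v,\mu}=p_{v_\mu,\hat\mu}$. The two spots that require a word rather than pure unwinding are verifying that $\hat\mu$ is gapfree (so that $w'$ exists and is unique, licensing the notation in the statement) and being careful with word lengths while stating the correspondence $v_\mu=w'\Leftrightarrow(\forall i)\,v[i]\in w'[i]$, which hinges on the letters of $w'$ being precisely the $\mu$-classes.
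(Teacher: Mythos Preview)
Your proof is correct and follows the same overall plan as the paper: push everything through the $\mu$-projection, use the identities $f_{v,\mu}=f_{v_\mu,\hat\mu}$ and $p_{v,\mu}=p_{v_\mu,\hat\mu}$, and pull back. The ``$\supseteq$'' direction is essentially identical to the paper's. The ``$\subseteq$'' direction, however, is handled differently. The paper argues letter by letter via the explicit construction of the \pn{} form (Remark~\ref{rem:pnfc}): for each $i$ it compares the defining equations $f_{w,\mu}(i)=f_{w,\mu}(i-1)\circ_A\mu(v[i])$ and $f_{w_\mu,\hat\mu}(i)=f_{w_\mu,\hat\mu}(i-1)\circ_A\hat\mu(w'[i])$ and uses $f_{w,\mu}=f_{w_\mu,\hat\mu}$ to conclude $\mu(v[i])=\hat\mu(w'[i])$, hence $v[i]\in w'[i]$. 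You instead invoke the uniqueness of the $\hat\mu$-\pn{} form (Theorem~\ref{the:P}) to get $v_\mu=w'$ in one stroke, then read off $v[i]\in w'[i]$ from your combinatorial translation $v_\mu=w'\Leftrightarrow(\forall i)\,v[i]\in w'[i]$. Your route is a bit cleaner and more structural; the paper's is more hands-on. You are also more careful than the paper in one respect: you actually argue that $\hat\mu$ is gapfree (needed for $w'$ to exist and be unique), whereas the paper attributes this to Lemma~\ref{lem:makeInjective}, which in fact only establishes injectivity.
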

\ifpaper 
\begin{proof}
	Let $\mu$ be a gapfree \wm{} over $\Sigma$ w.r.t. the monoid $A$ 
	and let $w\in\Sigma^*$.
    Let $w'$ be the \pn{} form of $w$'s $\mu$-projection $w_\mu$, so
    $w':= \mathcal{P}_{\hat{\mu}}(w_\mu)$. This is possible because 
    $\hat{\mu}$ is gapfree and injective
    by Lemma~\ref{lem:makeInjective}.
    Notice here that $|w|=|w_\mu|=|w'|$ holds by construction of $w_\mu$ and $w'$.
    
    First, let $v\in \{ u\in\Sigma^* 
    \mid u[i]\in w'[i] \text{ for all }i\in[|w|] \}$.
    To show $v\in\mathcal{P}_{\mu}(w)$ we prove $v$ is in $[w]_{\sim_\mu}$ and $v$ is $\mu$-\pn{}.       
    Firstly, $f_{v,\mu} = f_{w',\hat{\mu}}$
    holds by the choice of $v$,
    $f_{w',\hat{\mu}} = f_{w_\mu,\hat{\mu}}$
    holds by the choice of $w'$ as the \pn{} form of $w_\mu$,
    and $f_{w_\mu,\hat{\mu}} = f_{w,\mu}$
    holds by the construction of $w_\mu$ and $\hat{\mu}$.
    In total $f_{v,\mu} = f_{w,\mu}$ holds
    and we have $v\in[w]_{\sim_\mu}$.   
    Secondly, also $p_{v,\mu} = p_{w',\mu'}$
    holds by the choice of $v$
    and also $p_{w',\hat{\mu}} = f_{w_\mu,\hat{\mu}}$
    holds by the choice of $w'$ as the \pn{} form of $w_\mu$.
    In total $p_{v,\mu} = f_{w_\mu,\hat{\mu}} = f_{w,\mu}$
    holds and so $v$ is $\mu$-\pn{}.
    
    For the second part of the proof, let $v\in\mathcal{P}_{\mu}(w)$.
    To show $v\in \{ u\in\Sigma^* 
    \mid u[i]\in w'[i] \text{ for all }i\in[|w|] \}$
    we prove that $v[i]\in w'[i]$ for all $i\in[|v|]$.
    Let $i\in[|v|]$.        
    By the construction of the \pn{} form
    (Remark~\ref{rem:pnfc}) we know that
    $v[i] = \ta$ for some $\ta\in\Sigma$
    such that $f_{w,\mu}(i)=f_{w,\mu}(i-1)\circ_A\mu(\ta)$ holds.
    We also know $w'[i] = \tx$ for some $\tx\in\Sigma_\mu$
    such that $f_{w_\mu,\hat{\mu}}(i)
    =f_{w_\mu,\hat{\mu}}(i-1)\circ_A\hat{\mu}(\tx)$ holds.
    Since $f_{w_\mu,\hat{\mu}} = f_{w,\mu}$ again
    holds by the construction of $w_\mu$ and $\hat{\mu}$,
    we have $\hat{\mu}(\tx) = \mu(\ta)$.
    So in total, because $\tx$ is the subset of $\Sigma$
    with all the letters of weight $\hat{\mu}(\tx)$,
    by the construction of $\Sigma_\mu$, we have $\ta\in\tx$,
    i.e. $v[i]\in w'[i]$.
    \qed
\end{proof}

\else    
\fi

With Theorem~\ref{the:PiP} we can also
accurately calculate the cardinality of $\mathcal{P}_\mu(w)$
for some word $w\in\Sigma^*$ and a non-injective \wm{} $\mu$.

\begin{corollary}
    \label{cor:count}
    Let $\mu$ be a gapfree \wm{} 
    over the alphabet $\Sigma$,
    $w\in\Sigma^*$, and $w'=\mathcal{P}_{\hat{\mu}}(w_\mu)$.
    Then $|\mathcal{P}_\mu(w)|=\prod^{|w|}_{i=1}|w'[i]|$ holds.
\end{corollary}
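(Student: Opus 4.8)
The plan is to derive the count directly from Theorem~\ref{the:PiP}, so this is essentially a bookkeeping argument. First I would recall the setup: $\hat{\mu}$ is gapfree and injective by Lemma~\ref{lem:makeInjective}, so $\mathcal{P}_{\hat{\mu}}(w_\mu)$ is a single word $w'$ with $|w'| = |w_\mu| = |w|$ by construction of $w_\mu$. Each symbol $w'[i]$ is a letter of the projected alphabet $\Sigma_\mu$, i.e. a nonempty subset of $\Sigma$ of the form $[\ta]_\mu$, and the family $\{[\ta]_\mu \mid \ta\in\Sigma\}$ partitions $\Sigma$. Theorem~\ref{the:PiP} then gives
\[
\mathcal{P}_\mu(w) = \{ v\in\Sigma^* \mid v[i]\in w'[i] \text{ for all } i\in[|w|] \}.
\]

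Next I would observe that the right-hand side is exactly the image of the Cartesian product $w'[1]\times\cdots\times w'[|w|]$ under the (injective) concatenation map $(v_1,\dots,v_{|w|})\mapsto v_1\cdots v_{|w|}$: a word of length $|w|$ lies in $\mathcal{P}_\mu(w)$ iff, independently for each position $i$, its $i$-th letter is chosen from the set $w'[i]$. Hence this map is a bijection onto $\mathcal{P}_\mu(w)$, and by the multiplication principle
\[
|\mathcal{P}_\mu(w)| = \prod_{i=1}^{|w|} |w'[i]|.
\]

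There is no real obstacle here; the only points to state carefully are that each $w'[i]$ is nonempty (so every factor in the product is at least $1$) and that the positions are genuinely independent — both of which are already encoded in the set-builder description supplied by Theorem~\ref{the:PiP}. I would close with the sanity check that when $\mu$ is injective every $[\ta]_\mu$ is a singleton, so every $|w'[i]| = 1$ and the product equals $1$, consistent with the uniqueness of the \pn{} form in Theorem~\ref{the:P}.
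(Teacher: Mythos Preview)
Your argument is correct and follows the same approach as the paper, which simply records that the corollary follows directly from Theorem~\ref{the:PiP}. You have just spelled out the counting step (bijection with the Cartesian product and the multiplication principle) that the paper leaves implicit.
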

\ifpaper 
\begin{proof}
    Follows directly by Proposition~\ref{the:PiP}.
    \qed
\end{proof}
\else    
\fi

We conclude this section by revisiting an example w.r.t.
the projected weight measure.
Again consider the \sumwm{} $\mu$ over $\Sigma=\{\ta,\tn,\tc,\tb\}$
with the base weights $\mu(\ta) = 1$, $\mu(\tn) = \mu(\tc) = 2$, and $\mu(\tb) 
= 3$.
Then $\mu$'s projected \wm{} $\hat{\mu}$
is a \wm{} over the alphabet
$\Sigma_\mu=\{\{\ta\},\{\tn,\tc\},\{\tb\}\}$ with the base weights
$\hat{\mu}(\{\ta\}) = 1$,
$\hat{\mu}(\{\tn,\tc\}) = 2$, and
$\hat{\mu}(\{\tb\}) = 3$ and we see that $\hat{\mu}$ is injective on 
$\Sigma_\mu$.
We already know that $\mathtt{nanaba}$ has
multiple \factorWEqnt{} words that are \pn{},
specifically we have
$\mathcal{P}_\mu(\mathtt{nanaba})=
\{\mathtt{banana},\mathtt{bacana},\mathtt{banaca},\mathtt{bacaca}\}$.
Thus, we have the the \pn{} form
$\{\tb\}\{\ta\}\{\tn,\tc\}\{\ta\}\{\tn,\tc\}\{\ta\}$
of 
$(\mathtt{nanaba})_\mu=\{\tn,\tc\}\{\ta\}\{\tn,\tc\}\{\ta\}\{\tb\}\{\ta\}$.
All \factorWEqnt{} and \pn{} words are represented by
this word when reading it as a non-deterministic concatenation of letters,
like shown in Theorem~\ref{the:PiP}, i.e., we have 
$\mathcal{P}_\mu(\mathtt{nanaba})
= \{ v\in\Sigma^* \mid v[i]\in 
\mathcal{P}_{\hat{\mu}}((\mathtt{nanaba})_\mu)[i], i\in[6] \}$.

\subsection{Subset Prefix Normality}
\label{subsec:subset}
In this section we briefly investigate a na\"ive
approach to generalise binary \pnity{} and
prove that it is already covered by the \wm{} approach.
The main idea is if $\Sigma$ is a finite alphabet to take
a subset $X\subseteq\Sigma$ and instead of counting the 
amount of $\tone$ or $\tzero$ respectively we count how many
letters in a prefix or factor are contained in $X$. Therefore, We generalise the notation 
$|w|_{\ta}$ for a letter $\ta\in\Sigma$ to 
$|w|_X := \abs{\{i\in[|w|] \mid w[i]\in X\}}$
for $X\subseteq\Sigma$, i.e. $|w|_X$ is 
the number of letters of $w$ that are elements of $X$.

\begin{definition}
    \label{def:Xfs}
    Let $w\in\Sigma^*$ and $X\subseteq\Sigma$.
    We define the \emph{\prefXf} $p_{w,X}$ and the \emph{\factXf} $f_{w,X}$ respectively by
    $p_{w,X}: [|w|]_0 \to \N, i\mapsto |\Pref_i(w)|_X$ and $f_{w,X}:[|w|]_0 \to \N,
    i\mapsto \max(|\Fact_i(w)|_X)$.
    We say that $w$ is \emph{$X$-\pn{}} (or subset \pn{} w.r.t $X$)
    if $p_{w,X} = f_{w,X}$ holds.
\end{definition}

We now show that subset \pnity{} is indeed a generalisation of
binary \pnity{}, and also that subset \pnity{} can already be 
expressed by means of weighted \pnity{}.
However this is not possible the other way around.
So in total we see that weighted \pnity{} is 
more expressive and therefore a more useful generalisation.

\begin{theorem}
    \label{the:subToBin}
    Binary \pnity{} is expressible by
    subset \pnity.
    (I.e. there exists $X\subseteq\{\tzero,\tone\}$ such that 
    $X$-\pnity{} is equivalent to binary \pnity{}.)
\end{theorem}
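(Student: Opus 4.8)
The plan is to take the obvious witness $X=\{\tone\}$: in binary \pnity{} the letter $\tone$ plays exactly the role that membership in $X$ plays in subset \pnity{}, so — unlike in the proof of Proposition~\ref{the:weight2bin}, where a weight shift by $i$ was needed — here no transformation of values should be required at all, and the two notions should literally coincide on $\{\tzero,\tone\}^*$.

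First I would unfold the definition of $|\cdot|_X$ on a single word: for every $u\in\{\tzero,\tone\}^*$ we have $|u|_{\{\tone\}}=\abs{\{i\in[|u|]\mid u[i]\in\{\tone\}\}}=\abs{\{i\in[|u|]\mid u[i]=\tone\}}=|u|_\tone$. Next I would lift this letter‑count identity factorwise and prefixwise to a fixed word $w$: it gives $|\Fact_i(w)|_{\{\tone\}}=|\Fact_i(w)|_\tone$ (the same set of integers) and $|\Pref_i(w)|_{\{\tone\}}=|\Pref_i(w)|_\tone$ for all $i\in[|w|]_0$. Taking the maximum in the factor case then yields $f_{w,\{\tone\}}=f_w$ and, directly, $p_{w,\{\tone\}}=p_w$ as functions on $[|w|]_0$ (comparing Definition~\ref{def:bpn} with Definition~\ref{def:Xfs}). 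Consequently $w$ is $\{\tone\}$-\pn{} iff $p_{w,\{\tone\}}=f_{w,\{\tone\}}$ iff $p_w=f_w$ iff $w$ is \pn{}, which is the asserted equivalence, with $X=\{\tone\}$ the required subset of $\{\tzero,\tone\}$.

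There is essentially no obstacle: the statement is a direct consequence of unwinding Definitions~\ref{def:bpn} and~\ref{def:Xfs}. The only point deserving a word of care is the nominal mismatch of codomains ($\N$ in Definition~\ref{def:Xfs} versus $\N_0$ in Definition~\ref{def:bpn}), but this is harmless, since $p_{w,\{\tone\}}$ and $f_{w,\{\tone\}}$ attain exactly the same nonnegative integer values as $p_w$ and $f_w$; the equality of the respective functions is therefore unaffected. As a byproduct the same argument also shows that subset \pnity{} genuinely generalises binary \pnity{}, the binary case $\Sigma=\{\tzero,\tone\}$ with $X=\{\tone\}$ being recovered verbatim.
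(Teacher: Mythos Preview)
Your proof is correct and follows essentially the same approach as the paper: choose $X=\{\tone\}$, observe $|u|_{\{\tone\}}=|u|_\tone$, and conclude $f_{w,\{\tone\}}=f_w$ and $p_{w,\{\tone\}}=p_w$. Your additional remark on the harmless codomain mismatch is a nice clarification not present in the original.
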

\ifpaper 
\else    
\begin{proof}
    W.l.o.g. consider just $\tone$-\pnity{} for the binary case.
    We choose $X\subseteq\Sigma$ with $X=\{\tone\}$.
    Then $|w|_\tone= |w|_X$ holds
    for any binary word $w\in\{\tzero,\tone\}^*$.
    It follows that
    $f_w(i)= \max(|\Fact_i(w)|_\tone)
    = \max(|(\Fact_i(w)|_X) = f_{w,X}(i)$
    and $p_w(i)= p_{w,X}(i)$ hold
    for all $i \in [|w|]$.
    Therefore, $w$ is $X$-\pn{} 
    if and only if it is $\tone$-\pn.
    So, with such an $X$ every statement on
    binary \pnity{} can be transformed
    into an analogue using subset \pnity{}.
    \qed
\end{proof}
\fi

In other words, in the context of the binary alphabet 
$\{\tone\}$-\pnity{} and \pnity{} are the same.
\begin{theorem}
    \label{the:wToX}
    Subset \pnity{} is expressible by
    weighted \pnity{}.
    (I.e. for every $X\subseteq \Sigma$ there exists a \wm{} $\mu$ such that 
    $\mu$-\pnity{} is equivalent to $X$-\pnity{}.)
\end{theorem}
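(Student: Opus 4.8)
The plan is to mimic the construction used in the proof of Proposition~\ref{the:weight2bin}, where the binary alphabet was handled by the \sumwm{} sending $\tone \mapsto 2$ and $\tzero \mapsto 1$. Given an arbitrary $X\subseteq\Sigma$, I would define a \sumwm{} $\mu$ over $\Sigma$ w.r.t. $(\N_0,+)$ by setting $\mu(\ta) = 2$ for $\ta\in X$ and $\mu(\ta) = 1$ for $\ta\in\Sigma\setminus X$. First I would verify that $\mu$ is indeed a \wm{}: it is a morphism by construction, $\mu(vw) = \mu(wv)$ holds because addition is commutative, and the increasing property $\mu(w) <_A \mu(wv)$ for $v\in\Sigma^+$ holds because every base weight is at least $1 > 0 = \mu(\epsilon)$.

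Next I would observe that for every word $w\in\Sigma^*$ we have $\mu(w) = 2\abs{w}_X + \abs{w}_{\Sigma\setminus X} = \abs{w}_X + \abs{w}$. Since adding the fixed constant $i$ to every element of a finite set of natural numbers commutes with taking the maximum, it follows from Definition~\ref{def:Wfs} and Definition~\ref{def:Xfs} that $f_{w,\mu}(i) = \max(\mu(\Fact_i(w))) = \max(\abs{\Fact_i(w)}_X) + i = f_{w,X}(i) + i$, and likewise $p_{w,\mu}(i) = \abs{\Pref_i(w)}_X + i = p_{w,X}(i) + i$, for all $i\in[\abs{w}]_0$.

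Finally, from these two identities I would conclude that $p_{w,\mu}(i) = f_{w,\mu}(i)$ holds for all $i$ if and only if $p_{w,X}(i) = f_{w,X}(i)$ holds for all $i$; that is, $w$ is $\mu$-\pn{} if and only if $w$ is $X$-\pn{}. Hence $\mu$-\pnity{} is equivalent to $X$-\pnity{}, which is exactly what is claimed.

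There is no genuine obstacle in this argument; it is a direct generalisation of the binary case. The only point requiring a moment's care is the additive shift by $\abs{w}$: it is needed precisely so that the letters outside $X$ receive weight $1$ rather than $0$, and the increasing property of \wm s is not violated. This is the same device already used when modelling binary \pnity{} by a \sumwm{}, i.e. in the proof of Proposition~\ref{the:weight2bin}.
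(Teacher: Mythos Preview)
Your proposal is correct and follows essentially the same approach as the paper: define the \sumwm{} with base weights $2$ on $X$ and $1$ on $\Sigma\setminus X$, observe that $\mu(w)=\abs{w}_X+\abs{w}$, and deduce that the factor- and prefix-weight functions differ from $f_{w,X}$ and $p_{w,X}$ by the additive shift $i$. Your write-up is in fact slightly more careful than the paper's, which contains minor slips (it writes $\abs{w}_\tone$ instead of $\abs{w}_X$ and adds $\abs{w}$ rather than $i$), and you also explicitly verify the increasing property of $\mu$.
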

\ifpaper 
\else    
\begin{proof}
    Let $\Sigma$ be an alphabet and let $X\subseteq\Sigma$.
    We construct a \sumwm{} $\mu$ over $\Sigma$.
    Let $\mu(\tx)=2$ and $\mu(\ty)=1$ for 
    every $\tx\in X$ and $\ty\in\Sigma\backslash X$.
    Then $|w|_\tone + |w| = \mu(w)$ holds
    for any word $w\in\Sigma$.
    It follows that
    $f_{w,X}(i) + |w| = \max(|Fact_i(w)|_X) + |w|
    = \max(\mu(Fact_i(w))) = f_{w,\mu}(i)$
    and $p_{w,X}(i) + |w| = p_{w,\mu}(i)$ hold
    for all $i \in [|w|]$.
    Therefore, $w$ is $\mu$-\pn{} 
    if and only if it is $X$-\pn.
    So, with such a \wm{} every statement on
    subset \pnity{} can be transformed
    into an analogue using weighted \pnity{}.
    \qed
\end{proof}
\fi

By Theorem~\ref{the:wToX} we immediately see that subset \pnity{}
behaves exactly like weighted \pnity{} when using a
binary \wm{}, which we know by Lemma~\ref{lem:binwm} is gapfree.

\clearpage
\ifpaper 
	\section{Proofs}
	\label{sec:proof}
    \noindent
\textbf{Proof of Proposition~\ref{the:weight2bin}.}

\noindent
\textbf{Proof of Theorem~\ref{the:P}.}

\noindent
\textbf{Proof of Lemma~\ref{lem:binwm}.}

\noindent
\textbf{Proof of Lemma~\ref{lem:primeWM}.}

\noindent
\textbf{Proof of Proposition~\ref{prop:step1}.}

\noindent
\textbf{Proof of Proposition~\ref{prop:EQWMEQPNF}.}

\clearpage

\noindent
\textbf{Proof of Lemma~\ref{prop:EQWMS}.}

\noindent
\textbf{Proof of Lemma~\ref{lem:STDWM}.}

\noindent
\textbf{Proof of Lemma~\ref{lem:cacbtok2}.}

\noindent
\textbf{Proof of Theorem~\ref{the:threeEQ}.}

%
%
%

 \else    
 \fi

\end{document}